\setlist[enumerate,1]{label=(\arabic*),font=\normalfont,align=left,leftmargin=0pt,labelindent=0pt,listparindent=\parindent,labelwidth=0pt,itemindent=!,topsep=3pt,parsep=0pt,itemsep=3pt,start=1}
\setlist[enumerate,2]{label=(\alph*),font=\normalfont,labelindent=*,leftmargin=*,start=1}
\setlist[itemize]{labelindent=*,leftmargin=*,topsep=5pt,itemsep=3pt}
\setlist[description]{labelindent=*,leftmargin=*,itemindent=-1 em}
\numberwithin{equation}{section}
\spnewtheorem{assumptions}[theorem]{Assumptions}{\bfseries}{\rmfamily}
\spnewtheorem{notation}[theorem]{Notation}{\bfseries}{\rmfamily}
\spnewtheorem{observation}[theorem]{Observation}{\bfseries}{\rmfamily}
\spnewtheorem{defn}[theorem]{Definition}{\bfseries}{\rmfamily}
\spnewtheorem{expl}[theorem]{Example}{\bfseries}{\rmfamily}
\spnewtheorem{rem}[theorem]{Remark}{\bfseries}{\rmfamily}
\spnewtheorem{construction}[theorem]{Construction}{\bfseries}{\rmfamily}
\spnewtheorem{examples}[theorem]{Examples}{\bfseries}{\rmfamily}
\spnewtheorem{example_}[theorem]{Example}{\bfseries}{\rmfamily}
\spnewtheorem*{HSP}{General HSP Theorem}{\bf}{\itshape}
\spnewtheorem*{CompThm}{General Completeness Theorem}{\bf}{\itshape}
\newenvironment{notheorembrackets}{%
\csdef{@spopargbegintheorem}##1##2##3##4##5{\trivlist%
      \item[\hskip\labelsep{##4##1\ ##2}]{##4{##3}\@thmcounterend\ }##5}%
    }{%
\csdef{@spopargbegintheorem}##1##2##3##4##5{\trivlist%
      \item[\hskip\labelsep{##4##1\ ##2}]{##4(##3)\@thmcounterend\ }##5}%
    }
\renewcommand{\dim}[1]{\mathrm{dim}(#1)}
\newcommand{\dash}{\mathord{-}}
\newcommand{\Set}{\mathbf{Set}}
\newcommand{\JSL}{\mathbf{JSL}}
\newcommand{\C}{\mathscr{C}}
\newcommand{\ol}{\overline}
\newcommand{\Init}[1]{\mathsf{Init}(#1)}
\newcommand{\Fin}[1]{\mathsf{Fin}(#1)}
\newcommand{\At}[1]{\mathrm{At}(#1)}
\renewcommand{\epsilon}{\varepsilon}
\newcommand{\id}{\mathit{id}}
\newcommand{\seq}{\subseteq}
\newcommand{\xra}{\xrightarrow}
\newcommand{\op}{\mathsf{op}}
\renewcommand{\o}{\circ}
\newcommand{\takeout}[1]{\empty}
\renewcommand{\phi}{\varphi}
\newcommand{\Lra}{\Leftrightarrow}
\newcommand{\PSPACE}{\mathrm{PSPACE}}
\newcommand{\NP}{\mathrm{NP}}
\newcommand{\Rel}{\mathbf{Rel}}
\newcommand{\under}[1]{|#1|}
\newcommand{\epito}{\twoheadrightarrow}
\newcommand{\down}{\downarrow}
\newcommand{\monoto}{\rightarrowtail}
\newcommand{\f}{\mathsf{f}}
\newcommand{\rev}[1]{{#1}^\mathsf{r}}
\newcommand{\Pow}{\mathcal{P}}
\newcommand{\LW}[1]{\mathsf{LD}(#1)}
\newcommand{\jslLQ}[1]{\mathsf{SLD}(#1)}
\newcommand{\jslBLQ}[1]{\mathsf{BLD}(#1)}
\newcommand{\jslBLRQ}[1]{\mathsf{BLRD}(#1)}
\newcommand{\xto}[1]{\xrightarrow{#1}}
\newcommand{\jslLangs}[1]{\mathsf{langs}(#1)}
\newcommand{\Aut}[1]{\mathbf{Aut}(#1)}
\newcommand{\Syn}[1]{\mathsf{syn}(#1)}
\renewcommand{\ts}[1]{\mathsf{ts}(#1)}
\newcommand{\tm}[1]{\mathsf{tm}(#1)}
\newcommand{\dr}{\mathrm{dr}}
\newcommand{\rqc}[1]{\mathsf{rdc}(#1)}
\newcommand{\ns}[1]{\mathrm{ns}(#1)}
\newcommand{\nsyn}[1]{\mathrm{n}\mu(#1)}
\newcommand{\Nat}{\mathds{N}}
\newcommand{\To}{\Rightarrow}
\newcommand{\xTo}[2][]{\ext@arrow 0359\Rightarrowfill@{#1}{#2}}
\newcommand{\minDfa}[1]{\mathsf{dfa}(#1)}
\newcommand{\rsc}[1]{\mathsf{rsc}(#1)}
\newcommand{\reach}[1]{\mathsf{reach}(#1)}
\newcommand{\simple}[1]{\mathsf{simple}(#1)}
\newcommand{\rDR}[1]{\mathcal{DR}_{#1}}
\newcommand{\rRDR}[1]{\mathcal{DR}_{#1}^j}
\newcommand{\LQ}[1]{\mathsf{SLD}(#1)}
\newcommand{\LD}[1]{\mathsf{LD}(#1)}
\title{Nondeterministic Syntactic Complexity}
\titlerunning{Nondeterministic Syntactic Complexity}
\author{Robert S.~R. Myers \and Stefan Milius\inst{1}\fnmsep\thanks{Supported by Deutsche
    Forschungsgemeinschaft (DFG) under projects MI~717/5-2 and MI~717/7-1, and as part of the Research and Training Group 2475 ``Cybercrime and Forensic Computing'' (393541319/GRK2475/1-2019)} \and Henning Urbat\inst{1}\fnmsep\thanks{Supported by Deutsche
    Forschungsgemeinschaft (DFG) under proj. SCHR~1118/8-2}}
\authorrunning{R.~Myers, S.~Milius, and H.~Urbat}
\institute{$~^\text{1}$Friedrich-Alexander-Universit\"at Erlangen-N\"urnberg}
\begin{document}
\maketitle

\begin{abstract}
  We introduce a new measure on regular languages: their \emph{nondeterministic syntactic complexity}. It is the least degree of any extension of the `canonical boolean representation' of the syntactic monoid. Equivalently, it is the least number of states of any \emph{subatomic} nondeterministic acceptor.
  It turns out that essentially all previous structural work on nondeterministic state-minimality computes this measure. Our approach rests on an algebraic interpretation of nondeterministic finite automata as deterministic finite automata endowed with semilattice structure. Crucially, the latter form a self-dual category.
\end{abstract}

\section{Introduction}\label{S:intro}
Regular languages admit a plethora of equivalent representations: finite automata, finite monoids, regular expressions, formulas of monadic second-order logic, and numerous others. In many cases, the most succinct representation is given by a \emph{nondeterministic finite automaton (nfa)}. Therefore, the investigation of state-minimal nfas is of both computational and mathematical interest. However, this turns out to be surprisingly intricate; in fact, the task of minimizing an nfa, or even of deciding whether a given nfa is minimal, is known to be $\PSPACE$-complete~\cite{MinNfaHard1993}. One intuitive reason is that minimal nfas lack structure: a language may have many non-isomorphic minimal nondeterministic acceptors, and there are no clearly identified and easily verifiable mathematical properties distinguishing them from non-minimal ones. As a consequence, all known algorithms for nfa minimization (and related problems such as inclusion or universality testing) require some form of exhaustive search~\cite{KamedaWeiner1970,ddhr06,cm19}. This sharply contrasts the situation for minimal \emph{deterministic finite automata (dfa)}: they can be characterized by a universal property making them unique up to isomorphism, which immediately leads to efficient minimization.

In the present paper, we work towards the goal of bringing more structure into the theory of nondeterministic state-minimality. To this end, we propose a novel algebraic perspective on nfas resting on \emph{boolean representations} of monoids, i.e.~morphisms $M\to \JSL(S,S)$ from a monoid $M$ into the endomorphism monoid of a finite join-semilattice $S$. Our focus lies on quotient monoids of the free monoid $\Sigma^*$ recognizing a given regular language $L\seq\Sigma^*$. The largest such monoid is  $\Sigma^*$ itself, while the smallest one is the \emph{syntactic monoid} $\Syn{L}$. For both of them, $L$ induces a \emph{canonical boolean representation} 
\[\Sigma^*\to \JSL(\jslLQ{L}, \jslLQ{L} \qquad\text{and}\qquad \Syn{L}\to \JSL(\jslLQ{L},\jslLQ{L})\]
 on the semilattice $\jslLQ{L}$ of all finite unions of left derivatives of $L$. The first representation gives rise to an algebraic characterization of minimal nfas:

\medskip\noindent\textbf{Theorem.} The size of a state-minimal nfa for $L$ equals the least degree of any extension of the canonical representation of $\Sigma^*$ induced by $L$.

\medskip\noindent Here, the \emph{degree} of a representation refers to the number of join-irreducibles of the underlying semilattice. In the light of this result, it is natural to ask for an analogous automata-theoretic perspective on the canonical representation of $\Syn{L}$ and its extensions. For this purpose, we introduce the class of \emph{subatomic} nfas, a generalization of \emph{atomic} nfas earlier introduced by Brzozowski and Tamm~\cite{TheoryOfAtomataBrzTamm2014}. In order to get a handle on them, we employ an algebraic framework that interprets nfas in terms of \emph{$\JSL$-dfas}, i.e.~deterministic finite automata in the category of semilattices. In this setting, the semilattice $\LQ{L}$ used in the canonical representations naturally arises as the \emph{minimal} $\JSL$-dfa for the language $L$. We shall demonstrate that much of the structure theory of (sub-)atomic nfas reduces to the observation that the category of $\JSL$-dfas is \emph{self-dual}. Our main result gives an algebraic characterization of minimal subatomic nfas:

\medskip\noindent\textbf{Theorem.} The size of a state-minimal subatomic nfa for $L$ equals the least degree of any extension of the canonical representation of $\Syn{L}$.

\medskip\noindent We call the measure suggested by the above theorem the \emph{nondeterministic syntactic complexity} of the language $L$. It turns out to be extremely natural: as illustrated in \autoref{S:examples}, essentially all existing work on the structure of state-minimal nfas implicitly identifies classes of languages whose nondeterministic state complexity equals their nondeterministic syntactic complexity, and thus is actually concerned with computing minimal subatomic acceptors.

\section{Preliminaries}
We start by introducing some notation and terminology used in the paper.

\medskip\noindent\emph{Semilattices.} A \emph{(join-)semilattice} is a poset $(S,\leq_S)$ in which every finite subset $X\seq S$ has a least upper bound, a.k.a.\, join, denoted by $\mathop{\bigvee} X$. A \emph{morphism} of semilattices is a map preserving all finite joins. Let $\JSL$ denote the category of join-semilattices and their morphisms. An element $j$ of a semilattice $S$ is \emph{join-irreducible} if for all finite subsets $X\seq S$ with $j=\bigvee X$ one has $j\in X$. Let \[J(S) = \{\, j\in S \;:\;\text{$j$ is join-irreducible} \,\}.\] Let $2=\{0,1\}$ denote the two-element semilattice with $0\leq 1$. Since $2\cong (\Pow (1),\seq)$ is the free semilattice on a single generator, morphisms from $2$ into a semilattice $S$ correspond uniquely to elements of $S$. Similarly, a morphism $f\colon S \to 2$ corresponds uniquely to a \emph{prime filter} $F=f^{-1}[1]\seq S$, i.e.~an upwards closed subset such that $\bigvee X \in F$ implies $X\cap F\neq \emptyset$ for every finite subset $X\seq S$. If $S$ is finite, prime filters are precisely the sets $F=\{ s\in S: s\not\leq s_0 \}$ for $s_0\in S$. If $S$ is a subsemilattice of a semilattice $T$, every prime filter $F$ of $S$ can be extended to the prime filter $T\setminus(\mathop{\down}(S\setminus F))$ of $T$, where $\mathop{\down}X = \{\, t\in T\;:\; t\leq x \text{ for some $x\in X$} \,\}$ denotes the down-closure of a subset $X\seq T$. Equivalently, every morphism $f\colon S\to 2$ can be extended to a morphism $g\colon T\to 2$. In category-theoretic terminology, this means that the semilattice $2$ forms an injective object of $\JSL$. 

The category $\JSL_\f$ of finite semilattices is \emph{self-dual}~\cite{StoneSpaces}. The equivalence functor $\JSL_\f\xra{\simeq}\JSL_\f^{\op}$ sends a semilattice $S$ to its \emph{dual semilattice} $S^\op$ 
obtained by reversing the order, and a morphism $f\colon S\to T$ to the morphism $f^\ast\colon T^{\op}\to S^{\op}$ mapping $t\in T$ to the $\leq_S$-largest element $s\in S$ with $f(s)\leq_T t$. Note that $f$ is \emph{adjoint} to $f^\ast$: for $s\in S$ and $t\in T$ we have $f(s)\leq_T t$ iff $s\leq_S f^\ast(t)$.

\medskip\noindent \emph{Languages.} A \emph{language} is a subset $L$ of $\Sigma^*$, the set of finite words over an alphabet $\Sigma$. We let $\ol{L} = \Sigma^*\setminus L$ denote the \emph{complement} and $\rev{L} = \{\rev{w}: w\in L \}$ the \emph{reverse}, where $\rev{w} = a_n\ldots a_1$ for $w=a_1\ldots a_n$. The \emph{left derivatives}, \emph{right derivatives} and \emph{two-sided derivatives} of $L$ are, respectively, given by
$u^{-1}L = \{w\in \Sigma^* : uw\in L\}$, $Lv^{-1} = \{ w\in \Sigma^*: wv\in L  \}$ and $u^{-1}Lv^{-1} = \{ w\in \Sigma^*: uwv\in L \}$ for $u,v\in \Sigma^*$. More generally, for $U\seq \Sigma^*$ the language $U^{-1}L = \bigcup_{u\in U} u^{-1}L$ is called the \emph{left quotient} of $L$ w.r.t.~$U$. We define the following sets of languages generated by $L$:
\begin{itemize}
    \item $\LW{L} = \{  u^{-1} L : u \in \Sigma^* \}$, the set of all left derivatives of $L$;
    \item $\jslLQ{L}$, its closure under finite union;
    \item $\jslBLQ{L}$, its closure under all set-theoretic boolean operations;
    \item $\jslBLRQ{L}$, its closure under all boolean operations and right derivatives.
  \end{itemize}
In other words, $\jslLQ{L}$ is the $\cup$-semilattice of all left quotients of $L$, or equivalently, the $\cup$-subsemilattice of $\Pow(\Sigma^*)$ generated by all left derivatives. Moreover, $\jslBLQ{L}$ and $\jslBLRQ{L}$ form the boolean subalgebras of $\Pow(\Sigma^*)$ generated by all left derivatives and all two-sided derivatives, respectively.

\section{Duality Theory of Semilattice Automata}
In this section, we set up the algebraic framework in which nondeterministic automata can be studied. Since it involves considering several different types of automata, it is convenient to view them all as instances of a general categorical concept. For the rest of this paper, let $\Sigma$ denote a fixed finite input alphabet.

\begin{defn}\label{def:automaton}
Let $\C$ be a category and let $X,Y\in \C$ be two fixed objects.
An \emph{automaton} in $\C$ is a quadruple $(S,\delta,i,f)$ consisting of an object $S\in \C$ of \emph{states}, a family $\delta=(\delta_a\colon S\to S)_{a\in \Sigma}$ of morphisms representing \emph{transitions}, and two morphisms $i\colon X\to S$ and $f\colon S\to Y$ representing \emph{initial} and \emph{final} states (see the left-hand diagram below).
A \emph{morphism} between automata $(S,\delta,i,f)$ and $(S',\delta',i',f')$ is given by a morphism $h\colon S\to S'$ in $\C$ preserving transitions, initial states and final states, i.e.~making the right-hand diagram below commute for all $a\in \Sigma$:
\[
\vcenter{
\xymatrix{
X \ar[r]^{i} & S \ar@(ul,ur)^{\delta_a} \ar[r]^f & Y
}
}
\qquad
\xymatrix@R-1em{
X \ar[r]^{i} \ar[dr]_{i'} & S \ar[r]^{\delta_a} \ar[d]_h & S \ar[d]^h \ar[r]^f & Y \\
 & S' \ar[r]_{\delta_a'} & S' \ar[ur]_{f'} & 
}
\]
Let $\Aut{\C}$ denote the category of automata in $\C$ and their morphisms.
\end{defn}

\begin{notation}
We put $\delta_w := \delta_{a_n}\o \cdots \o \delta_{a_1}$ for $w=a_1\ldots a_n$ in $\Sigma^*$.
\end{notation}

\begin{expl}
\begin{enumerate}
\item An automaton $D=(S,\delta,i,f)$ in $\Set$, the category of sets and functions, with $X=1$ and $Y=2$, is precisely a classical \emph{deterministic automaton}. It is called a $\emph{dfa}$ if $S$ is finite. We identify the map $i\colon 1\to S$ with an initial state $s_0=i(\ast)\in S$, and the map $f\colon S\to 2$ with a set $F=f^{-1}[1]\seq S$ of final states. The {language} $L(D,s)$ \emph{accepted} by a state $s\in S$ is the set of all words $w\in \Sigma^*$ such that $\delta_w(s)\in F$. The language $L(D)$ \emph{accepted} by $D$ is the language accepted by the state $s_0$.
\item An automaton $N=(S,\delta,i,f)$  in $\Rel$, the category of sets and relations, with $X=Y=1$, is precisely a classical \emph{nondeterministic automaton}. It is called an \emph{nfa} if $S$ is finite. We identify $i\seq 1\times S$ with a set $I\seq S$ of initial states and $f\seq S\times 1$ with a set $F\seq S$ of final states. Thus, in our view an nfa may have multiple initial states. The {language} $L(N,R)$ \emph{accepted} by a subset $R\seq S$ consists of all $w\in \Sigma^*$ such that $(r,s)\in \delta_w$ for some $r\in R$ and $s\in F$. The {language} $L(N)$ \emph{accepted} by $N$ is the language accepted by the set $I$.
\item  An automaton $A=(S,\delta,i,f)$ in $\JSL$ with $X=Y=2$, shortly a \emph{$\JSL$-automaton}, is given by a semilattice $S$ of states, a family $\delta = (\delta_a\colon S\to S)_{a\in \Sigma}$ of semilattice morphisms specifying  transitions, an initial state $s_0\in S$ (corresponding to $i\colon 2\to S$), and a prime filter $F\seq S$ of final states (corresponding to $f\colon S\to 2$). It is called a \emph{$\JSL$-dfa} if $S$ is finite. The language \emph{accepted} by a state $s\in S$ or by the automaton $A$, resp., is defined as for deterministic automata.
\end{enumerate}
\end{expl}

\begin{rem}[$\JSL$-dfas vs. nfas]\label{rem:jsldfa_vs_nfa}
Dfas, nfas and $\JSL$-dfas are expressively equivalent; they all accept precisely the regular languages. The interest of $\JSL$-dfas is that they constitute an algebraic representation of nfas:
\begin{enumerate}
\item Every $\JSL$-dfa $A=(S,\delta,s_0,F)$ induces an equivalent nfa $J(A)$ on the set $J(S)$ of join-irreducibles of $S$. Given $s,t\in J(S)$ and $a\in \Sigma$, there is a transition $s\xrightarrow{a} t$ in $J(A)$ iff $t\leq \delta_a(s)$; the initial states are those $s\in J(S)$ with $s\leq s_0$, and the  final states form the set $J(S)\cap F$. 
\item Conversely, for every nfa $N=(Q,\delta,I,F)$, the \emph{subset construction} yields an equivalent $\JSL$-dfa $\Pow(N)$ with states $\Pow(Q)$ (the $\cup$-semilattice of subsets of $Q$), transitions $\Pow{\delta_a}\colon \Pow(Q)\to \Pow(Q)$, $X\mapsto \delta_a[X]$, initial state $I\in \Pow(Q)$, and final states those subsets of $Q$ containing some state from $F$. Note that $J(\Pow(Q))\cong {Q}$.
\end{enumerate} 
 It follows that the task of finding a state-minimal nfa for a given language is equivalent to finding a $\JSL$-dfa with a minimum number of join-irreducibles~\cite{arbib_manes_1975}. This idea has recently been extended to a general coalgebraic framework \cite{mamu15,vhmss19}.
\end{rem}
Recall that the \emph{minimal dfa}~\cite{BrzozowskiDRE1964} for a regular language $L$, denoted by $\minDfa{L}$, has states $\LW{L}$ (the set of left derivatives of $L$), transitions $K\xra{a} a^{-1}K$ for $K\in \LW{L}$ and $a\in \Sigma$, initial state $L=\epsilon^{-1}L$, and final states those $K\in \LW{L}$ containing $\epsilon$. Up to isomorphism, it can be characterized as the unique dfa accepting $L$ that is \emph{reachable} (i.e.~every state is reachable from the initial state via transitions) and \emph{simple} (i.e.~any two distinct states accept distinct languages). We now develop the analogous concepts for $\JSL$-automata; they are instances of the categorical theory of minimality due to Arbib and Manes~\cite{am75} and Goguen~\cite{goguen75}. Let us first observe that every language has two canonical infinite $\JSL$-acceptors:

\begin{defn} Let $L\seq\Sigma^*$ be a language.
\begin{enumerate}
\item The \emph{initial $\JSL$-automaton} $\Init{L}$ for $L$ has states $\Pow_\f(\Sigma^*)$ (the $\cup$-semilattice of finite subsets of $\Sigma^*$), initial state $\{\epsilon\}$, final states all $X\in \Pow_\f(\Sigma^*)$ with $X\cap L\neq \emptyset$, and transitions $X\mapsto Xa = \{xa\;:\; x\in X\}$ for $X\in \Pow_\f (\Sigma^*)$ and  $a\in \Sigma$.
\item The \emph{final $\JSL$-automaton} $\Fin{L}$ for $L$ has states $\Pow(\Sigma^*)$ (the $\cup$-semilattice of all languages),  initial state $L$, final states all languages $K$ containing $\epsilon$, and transitions $K\mapsto a^{-1}K$ for $K\in \Pow(\Sigma^*)$ and $a\in \Sigma$.
\end{enumerate}
\end{defn}
As suggested by the terminology, these automata form the initial and the final object in the category of $\JSL$-automata accepting $L$:

\begin{notheorembrackets}
\begin{lemma}[{\cite{goguen75,am75}}]\label{lem:initfinal}
 For every $\JSL$-automaton $A=(S,\delta,s_0,F)$ accepting the language $L\seq \Sigma^*$, there exist unique $\JSL$-automata morphisms
\[ e_A\colon \Init{L} \to A \qquad\text{and}\qquad m_A\colon A\to \Fin{L}. \]
The map $e_A$ sends $\{w_1,\ldots, w_n\}\in \Pow_\f(\Sigma^*)$ to the state $\bigvee_{i=1}^n \delta_{w_i}(s_0)$, and the map $m_A$ sends a state $s\in S$ to $L(A,s)$, the language accepted by $s$.
\end{lemma}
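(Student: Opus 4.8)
The plan is to treat the two halves independently, using that $\Init{L}$ is (up to the initial/final data) the free semilattice on $\Sigma^*$, and that $\Fin{L}$ records languages via the accepted-language map. For $e_A$: the underlying semilattice $\Pow_\f(\Sigma^*)$ is free on the set $\Sigma^*$, with generators the singletons $\{w\}$, so any $\JSL$-morphism $h$ out of it is determined by the family $(h(\{w\}))_{w\in\Sigma^*}$, and conversely every such family extends uniquely. Compatibility with initial states forces $h(\{\epsilon\})=s_0$, and compatibility with transitions forces $h(\{wa\})=\delta_a(h(\{w\}))$; an induction on $|w|$ then yields $h(\{w\})=\delta_w(s_0)$, hence the claimed formula $e_A(\{w_1,\ldots,w_n\})=\bigvee_{i}\delta_{w_i}(s_0)$, which proves uniqueness. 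For existence I take this formula as the definition of $e_A$: it is a $\JSL$-morphism by freeness; it preserves transitions because each $\delta_a$ is a $\JSL$-morphism, giving $e_A(Xa)=\bigvee_{x\in X}\delta_a(\delta_x(s_0))=\delta_a(e_A(X))$; and it preserves the initial state since $\delta_\epsilon=\id$. The only place the hypothesis ``$A$ accepts $L$'' enters is preservation of final states, i.e.\ the equivalence $X\cap L\neq\emptyset \iff e_A(X)\in F$; here I use that $F$ is a prime filter, so $\bigvee_{w\in X}\delta_w(s_0)\in F$ iff $\delta_w(s_0)\in F$ for some $w\in X$ iff some $w\in X$ lies in $L(A,s_0)=L$.

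For $m_A$ I define $m_A(s):=L(A,s)$ and check it is the unique morphism $A\to\Fin{L}$. It is a $\JSL$-morphism: for a finite family $(s_i)_i$ and any word $w$, the facts that $\delta_w$ is a $\JSL$-morphism and $F$ a prime filter give $\delta_w(\bigvee_i s_i)\in F$ iff $\delta_w(s_i)\in F$ for some $i$, i.e.\ $L(A,\bigvee_i s_i)=\bigcup_i L(A,s_i)$; the empty join is the case $\bot\notin F$, which forces $L(A,\bot)=\emptyset$. It preserves transitions since $\delta_w(\delta_a(s))=\delta_{aw}(s)$, whence $m_A(\delta_a(s))=\{w:\delta_{aw}(s)\in F\}=a^{-1}L(A,s)$. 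It preserves the initial state because $m_A(s_0)=L(A,s_0)=L$ by assumption, and it preserves final states because $\epsilon\in L(A,s)$ iff $\delta_\epsilon(s)=s\in F$. For uniqueness, if $g\colon A\to\Fin{L}$ is any morphism, an induction on $|w|$ using that $g$ preserves final states (base case $\epsilon\in g(s)\iff s\in F$) and transitions (step $aw\in g(s)\iff w\in a^{-1}g(s)=g(\delta_a(s))$) shows $w\in g(s)\iff\delta_w(s)\in F$, so $g(s)=L(A,s)=m_A(s)$.

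There is no genuine obstacle: the statement is a routine instance of the Arbib--Manes/Goguen minimality machinery recalled in the text. The one subtlety worth stressing in the write-up is that every interaction between finite joins and acceptance rests on $F$ being a \emph{prime} filter rather than merely a filter, and one must not overlook the empty-join/bottom-element case, where primeness of $F$ is exactly what rules out $\bot\in F$ and makes both $e_A$ and $m_A$ respect the bottom element.
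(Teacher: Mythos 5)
Your proof is correct. The paper gives no proof of this lemma itself (it is cited from Goguen and Arbib--Manes), and your argument is exactly the standard one behind that citation: $e_A$ is the reachability map obtained from freeness of $\Pow_\f(\Sigma^*)$ on $\Sigma^*$ together with the forced values $h(\{w\})=\delta_w(s_0)$, and $m_A$ is the observability map $s\mapsto L(A,s)$, with uniqueness by induction on word length; your care with the prime-filter condition --- both for exchanging joins with membership in $F$ and for ruling out $\bot\in F$ in the empty-join case --- covers the only places where such a verification could silently go wrong.
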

\end{notheorembrackets}

\begin{defn}
A $\JSL$-automaton $A=(S,\delta,s_0,F)$ is called 
\begin{enumerate}
\item \emph{reachable} if the unique morphism $e_A\colon \Init{L}\to A$ is surjective, i.e.~every state is of the form $\bigvee_{i=1}^n \delta_{w_i}(s_0)$ for some $w_1,\ldots, w_n\in \Sigma^*$;
\item \emph{simple} if the unique morphism $m_A\colon A \to \Fin{L}$ in injective, i.e. any two distinct states accept distinct languages;
\item \emph{minimal} if it is both reachable and simple.
\end{enumerate}
\end{defn}

\begin{rem}
\begin{enumerate}
\item The category $\Aut{\JSL}$ has a factorization system given by
  surjective and injective morphisms. Thus, for every $\JSL$-automata
  morphism $h\colon (S,\delta,i,f)\to (S',\delta',i',f')$ with image
  factorization $h=(\xymatrix@1@C-0.5em{S \ar@{->>}[r]^-e & S'' \ar@{ >->}[r]^-m & S'})$ in $\JSL$, there exists a unique $\JSL$-automaton structure $(S'',\delta'',i'',f'')$ on $S''$ making  both $e$ and $m$ automata morphisms. We call $e$ the \emph{coimage} and $m$ the \emph{image} of $h$. \emph{Subautomata} and \emph{quotient automata} of $\JSL$-automata are represented by injective and surjective morphisms, respectively.
\item Every $\JSL$-automaton $A$ has a unique reachable subautomaton $\reach{A}\monoto A$,  the \emph{reachable part} of $A$. It is the smallest subautomaton of $A$ and arises as the image of the unique morphism $e_A\colon \Init{L}\to A$. Thus,
\[\text{$A$ is reachable}\quad\text{iff}\quad A\cong \reach{A}\quad \text{iff}\quad \text{$A$ has no proper subautomaton}.\]
Let us emphasize that a state in $\reach{A}$ is not necessarily reachable when $A$ is viewed as an ordinary dfa. For distinction, we thus call a state \emph{$\JSL$-reachable} if it lies in $\reach{A}$, and \emph{dfa-reachable} if it is reachable in the usual sense.
\item Dually, every $\JSL$-automaton $A$ has a unique simple quotient automaton $A\epito \simple{A}$,  the \emph{simplification} of $A$. It is the smallest quotient automaton of $A$ and arises as the coimage of the unique morphism $m_A\colon A\to \Fin{L}$. Thus,
\[\text{$A$ is simple}\quad\text{iff}\quad A\cong \simple{A}\quad \text{iff}\quad \text{$A$ has no proper quotient automaton}.\]
\item Every language $L\seq \Sigma^*$ has a minimal $\JSL$-automaton, unique up to isomorphism. It can be constructed as the image of the unique automata morphism $h_L\colon \Init{L}\to \Fin{L}$. Since $h_L$ sends $\{w_1,\ldots,w_n\}\in \Pow_\f(\Sigma^*)$ to the language $\bigcup_{i=1}^n w_{i}^{-1}L$, the minimal automaton of $L$ is the subautomaton $\LQ{L}$ of $\Fin{L}$ carried by the semilattice of finite unions of left derivatives of $L$.
\end{enumerate}
\end{rem}

\begin{expl} The minimal $\JSL$-dfa accepting $L = \{ a, aa \}$ is shown below, with the dashed lines representing the partial order.
 {\tiny
 \[
   \xymatrix@R-3.5em@C-2em{
   & *++[F=]{\{ \epsilon, a \}^{-1} L} \ar@/_10pt/[dl]_-a
   \\
   *++[F=]{a^{-1} L} \ar@{..}[ur] \ar@/_10pt/[dd]_a && 
   \\
   && *++[F-]{L} \ar@{..}[uul] \ar[ull]_a && \ar[ll]
   \\
   *++[F=]{(aa)^{-1} L} \ar@{..}[uu] \ar@/_10pt/[dr]_a  && 
   \\
   & *++[F-]{\emptyset} \ar@{..}[ul] \ar@{..}[uur] \ar@(dr,ur)_a
   }
 \]}
\end{expl}

\begin{rem} The self-duality of $\JSL_\f$ lifts to a self-duality of the category of $\JSL$-dfas. The equivalence functor $\Aut{\JSL_\f}\xra{\simeq}\Aut{\JSL_\f}^\op$ maps a $\JSL$-dfa $A= (S,\,(\delta_a\colon S\to S)_{a\in \Sigma},\, i\colon 2\to S, \,f\colon S\to 2)$ to its \emph{dual automaton}
\[ A^\op = (S^\op,\,(\delta_a^\ast\colon S^\op\to S^\op)_{a\in \Sigma},\, f^\ast\colon 2 \to S^\op,\, i^\ast\colon S^\op \to 2),\]
using that $2^\op\cong 2$. Thus, the initial state of $A^\op$ is the $\leq_S$-largest non-final state of $A$, and its final states are those $s\in S$ with $s_0\not\leq_S s$. Given $s,t\in S$ and $a\in \Sigma$, there is a transition $s\xra{a}t$ in $A^\op$ iff $t$ is the $\leq_S$-largest state with $\delta_a(t)\leq_S s$.
\end{rem}
The dualization of $\JSL$-dfas can be seen as an algebraic generalization of the reversal operation on nfas. Recall that the \emph{reverse} of an nfa $N$ is the nfa $\rev{N}$ obtained by flipping all transitions and swapping initial and final states. If $N$ accepts the language $L$, then $\rev{N}$ accepts the reverse language $\rev{L}$.

\begin{lemma}\label{lem:nfarev}
For each nfa $N=(Q,\delta,I,F)$, we have the $\JSL$-dfa isomorphism
\[
[\Pow(N)]^\op \xra{\cong} \Pow(\rev{N}), \qquad X\mapsto \ol{X}=Q\setminus X.
\]
\end{lemma}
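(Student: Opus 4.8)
The plan is to unfold both sides of the claimed isomorphism into explicit data and check that complementation $\varphi\colon X\mapsto \ol{X}=Q\setminus X$ matches them componentwise. First I would recall, from the subset construction in \autoref{rem:jsldfa_vs_nfa}, that $P\defeq\Pow(N)$ is the $\JSL$-dfa whose states form the semilattice $\Pow(Q)$ ordered by inclusion (so joins are unions), whose transitions are the direct-image maps $\Pow{\delta_a}\colon X\mapsto \delta_a[X]$, whose initial state is $I$, and whose prime filter of final states is $\{X\seq Q: X\cap F\neq\emptyset\}$. Then, using the explicit description of the dualisation functor $\Aut{\JSL_\f}\xra{\simeq}\Aut{\JSL_\f}^\op$ given above, the dual automaton $P^\op$ has: the dual semilattice $\Pow(Q)^\op$ as its states (so its joins are \emph{intersections}); the adjoints $(\Pow{\delta_a})^\ast$ as its transitions; the $\seq$-largest non-final state $Q\setminus F$ of $P$ as its initial state; and $\{X\seq Q: I\not\seq X\}$ as its final states. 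On the other side, since $\rev N$ reverses every transition and swaps $I$ with $F$, the automaton $\Pow(\rev N)$ has states $\Pow(Q)$ ordered by inclusion, transitions $X\mapsto \rev{\delta_a}[X]$ where $\rev{\delta_a}=\{(q,p):(p,q)\in\delta_a\}$, initial state $F$, and final states $\{X\seq Q: X\cap I\neq\emptyset\}$.

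I would then verify, in this order, that $\varphi$ is an isomorphism $P^\op\xra{\cong}\Pow(\rev N)$ of $\JSL$-dfas. (i) As a map $\Pow(Q)^\op\to\Pow(Q)$ it is a semilattice isomorphism: it is a bijection (indeed its own inverse), and $X\supseteq X'$ iff $\ol{X}\seq\ol{X'}$, so it is an order-isomorphism, hence carries the join $\cap$ of $\Pow(Q)^\op$ to the join $\cup$ of $\Pow(Q)$. (ii) It is compatible with transitions: since $(\Pow{\delta_a})^\ast(Y)$ is by definition the $\seq$-largest set $X$ with $\delta_a[X]\seq Y$, and $\delta_a[X]=\bigcup_{q\in X}\delta_a[\{q\}]$, this largest set is $\{q\in Q:\delta_a[\{q\}]\seq Y\}$; taking complements,
\[
Q\setminus (\Pow{\delta_a})^\ast(Y)=\{q\in Q: \delta_a[\{q\}]\cap(Q\setminus Y)\neq\emptyset\}=\rev{\delta_a}[\,Q\setminus Y\,],
\]
which is exactly the statement that $\varphi\circ(\Pow{\delta_a})^\ast=\Pow{\rev{\delta_a}}\circ\varphi$. (iii) It matches initial states, $\varphi(Q\setminus F)=F$, and final states, since $I\not\seq X$ iff $I\cap(Q\setminus X)\neq\emptyset$, so $\varphi$ maps the prime filter $\{X: I\not\seq X\}$ of $P^\op$ bijectively onto the prime filter $\{X: X\cap I\neq\emptyset\}$ of $\Pow(\rev N)$. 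Together, (i)--(iii) say precisely that $\varphi$ is an isomorphism in $\Aut{\JSL_\f}$.

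The only point that needs any care is step (ii): one has to compute the adjoint transition map of $P^\op$ correctly---in particular, to recognise that the largest $\seq$-preimage of $Y$ under $\Pow{\delta_a}$ is $\{q:\delta_a[\{q\}]\seq Y\}$---and then keep the order reversal straight when passing to complements, so that a "box"-style map dualises to the "diamond"-style map $\rev{\delta_a}[\dash]$. Everything else is pure bookkeeping, and no genuine obstacle arises: finiteness of $Q$ (an nfa has finitely many states) guarantees that $\Pow(N)$ is a $\JSL$-dfa, so its dual is defined, and the whole argument is just a componentwise comparison under complementation.
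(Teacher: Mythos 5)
Your proposal is correct and follows essentially the same route as the paper's proof: both verify directly that complementation matches the initial state ($\ol{F}\mapsto F$), the final states ($I\not\seq X$ iff $\ol{X}\cap I\neq\emptyset$), and the transitions, via the computation that the adjoint of $\Pow{\delta_a}$ sends $Y$ to $\{q:\delta_a[\{q\}]\seq Y\}$, whose complement is $\rev{\delta_a}[\ol{Y}]$. The only (harmless) difference is that you additionally spell out that complementation is a semilattice isomorphism $\Pow(Q)^\op\to\Pow(Q)$, which the paper takes as given.
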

The following lemma summarizes some important properties of $A^\op$:

\begin{lemma}\label{lem:astarprops} Let $A=(S,\delta,i,f)$ be a $\JSL$-dfa.
\begin{enumerate}
\item For every $s\in S$, we have $L(A^\op,s) = \{\,  w\in \Sigma^*\;:\; \delta_{\rev{w}}(s_0)\not\leq_S s \,\}$.
\item If $A$ accepts the language $L$, then $A^\op$ accepts the reverse language $\rev{L}$.
\item We have $[\reach{A}]^\op \cong \simple{A^\op}$. Thus, $A$ is reachable iff $A^\op$ is simple.
\end{enumerate}
\end{lemma}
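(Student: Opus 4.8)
The plan is to derive (1) and (2) by unwinding the explicit description of the dual automaton $A^\op$ recorded above, feeding (1) into (2), and then to obtain (3) formally from the self-duality of $\Aut{\JSL_\f}$.

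For (1), the key preliminary is the identity that, for a word $w=a_1\cdots a_n\in\Sigma^*$, the transition of $A^\op$ along $w$ equals $(\delta_{\rev{w}})^\ast$. Indeed, contravariance of the self-duality functor on $\JSL_\f$ gives $(\delta_{a_n}\o\cdots\o\delta_{a_1})^\ast=\delta_{a_1}^\ast\o\cdots\o\delta_{a_n}^\ast$, while $\delta_{\rev{w}}=\delta_{a_1}\o\cdots\o\delta_{a_n}$, so $(\delta_{\rev{w}})^\ast=\delta_{a_n}^\ast\o\cdots\o\delta_{a_1}^\ast$, which by the notational convention applied to the transition family $(\delta_a^\ast)_{a\in\Sigma}$ of $A^\op$ is exactly the transition of $A^\op$ along $w$. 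Granting this, I would argue: $w\in L(A^\op,s)$ iff the reached state $(\delta_{\rev{w}})^\ast(s)$ is final in $A^\op$, which by the description of $A^\op$ means $s_0\not\leq_S(\delta_{\rev{w}})^\ast(s)$; and the adjunction $\delta_{\rev{w}}(s_0)\leq_S s\iff s_0\leq_S(\delta_{\rev{w}})^\ast(s)$ (instantiating $f(x)\leq t\iff x\leq f^\ast(t)$ at $f=\delta_{\rev{w}}$) turns this into $\delta_{\rev{w}}(s_0)\not\leq_S s$, which is (1).

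For (2), the initial state of $A^\op$ is the $\leq_S$-largest non-final state $q$ of $A$, and since $S$ is finite the prime filter $F$ of final states of $A$ is precisely $\set{t\in S}{t\not\leq_S q}$; instantiating (1) at $s=q$, a word $w$ lies in $L(A^\op)=L(A^\op,q)$ iff $\delta_{\rev{w}}(s_0)\not\leq_S q$ iff $\delta_{\rev{w}}(s_0)\in F$ iff $\rev{w}\in L$, so $L(A^\op)=\rev{L}$. For (3), I would use that $(-)^\op$ is a contravariant equivalence $\Aut{\JSL_\f}\xra{\simeq}\Aut{\JSL_\f}^\op$ exchanging injective and surjective automata morphisms — this reduces to the fact in $\JSL_\f$ that a morphism $f$ is injective iff $f^\ast$ is surjective (e.g.\ because $f^\ast\o f=\id$ when $f$ is injective). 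Hence $(-)^\op$ carries the poset of subautomata of $A$ anti-isomorphically onto the poset of quotient automata of $A^\op$. Dualizing a chain $\reach{A}\monoto B\monoto A$ (one exists for every subautomaton $B$, as $\reach{A}$ is the smallest subautomaton) yields $A^\op\epito B^\op\epito[\reach{A}]^\op$; and since every quotient of $A^\op$ is of the form $B^\op$ (because $(-)^\op$ is an equivalence with $(A^\op)^\op\cong A$), this exhibits $[\reach{A}]^\op$ as the quotient of $A^\op$ through which all quotients factor, i.e.\ as $\simple{A^\op}$. The displayed equivalence follows: $A$ is reachable iff $A\cong\reach{A}$ iff $A^\op\cong[\reach{A}]^\op\cong\simple{A^\op}$ iff $A^\op$ is simple.

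I expect the only genuinely error-prone part to be the bookkeeping in (1): forming $A^\op$ reverses the semilattice order, replaces each $\delta_a$ by its right adjoint $\delta_a^\ast$, and — via the word convention — also reverses the order in which letters act, and one must check that these three reversals combine into the single reversal $w\mapsto\rev{w}$ at the level of accepted languages while keeping the ``$\not\leq_S$'' oriented correctly. A minor point worth stating for (3) is that, although $\reach{-}$ and $\simple{-}$ are defined through the infinite automata $\Init{-}$ and $\Fin{-}$, when applied to a $\JSL$-dfa they return $\JSL$-dfas, so the finite self-duality legitimately applies to every automaton occurring in the argument.
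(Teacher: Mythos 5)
Your proof is correct and follows essentially the same route as the paper's: part (1) is the same dualization of the transition composite, with the paper phrasing the adjointness step via morphisms into $2$ where you use the element-level adjunction $f(s)\leq t\iff s\leq f^\ast(t)$; parts (2) and (3) specialize (1) to the initial state of $A^\op$ and invoke the duality between smallest subautomata and smallest quotients exactly as the paper does. The bookkeeping you flag in (1) — checking that $(\delta_{\rev{w}})^\ast$ is the $A^\op$-transition along $w$ — is handled correctly.
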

Our next goal is to give, for every regular language $L$, dual characterizations of $\LQ{L}$, $\jslBLQ{L}$ and $\jslBLRQ{L}$, the $\JSL$-subautomata of $\Fin{L}$ carried by all finite unions of left derivatives, boolean combinations of left derivatives and boolean combinations of two-sided derivatives, respectively. These results form the core of our duality-based approach to (sub-)atomic nfas in the next section. The minimal $\JSL$-dfa $\LQ{L}$ admits the following dual description:
\begin{proposition}\label{prop:lqdual}
For every regular language $L$, the minimal $\JSL$-dfas for $L$ and $\rev{L}$ are dual. More precisely, we have the $\JSL$-dfa isomorphism
\[ \dr_L\colon [\LQ{\rev{L}}]^\op \xra{\cong} {\LQ{{L}}},\qquad K\mapsto  (\ol{\rev{K}})^{-1}{L}.\]
\end{proposition}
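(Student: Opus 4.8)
The plan is to establish the isomorphism \emph{abstractly} using self-duality, and only afterwards to match it with the explicit formula for $\dr_L$. First I would show that $[\LQ{\rev L}]^\op$ is \emph{the} minimal $\JSL$-dfa for $L$. Since $\rev L$ is regular, $\LQ{\rev L}$ is a finite $\JSL$-dfa accepting $\rev L$ that is reachable and simple (being the minimal one). By \autoref{lem:astarprops}(2) its dual $[\LQ{\rev L}]^\op$ accepts $\rev{\rev L}=L$. By \autoref{lem:astarprops}(3), reachability of $\LQ{\rev L}$ yields simplicity of $[\LQ{\rev L}]^\op$; and applying the same equivalence to $[\LQ{\rev L}]^\op$ itself — whose dual is again $\LQ{\rev L}$, which is simple — yields reachability of $[\LQ{\rev L}]^\op$. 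Hence $[\LQ{\rev L}]^\op$ is reachable, simple and accepts $L$, so it is isomorphic to $\LQ L$.

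Next I would pin down the isomorphism concretely. Fix any $\JSL$-dfa isomorphism $\phi\colon[\LQ{\rev L}]^\op\xra{\cong}\LQ L$. Composing $\phi$ with the subautomaton inclusion $\LQ L\hookrightarrow\Fin L$ gives a $\JSL$-automaton morphism $[\LQ{\rev L}]^\op\to\Fin L$, which by \autoref{lem:initfinal} must be the canonical one, sending each state $s$ to $L([\LQ{\rev L}]^\op,s)$. Since the inclusion is the identity on underlying elements of $\Pow(\Sigma^*)$, this forces $\phi(s)=L([\LQ{\rev L}]^\op,s)$ for every state $s$. In particular the isomorphism is unique and automatically preserves transitions, initial and final states, so nothing more needs to be checked on that front.

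It then remains to compute $L([\LQ{\rev L}]^\op,K)$ for $K\in\LQ{\rev L}$ and recognise it as $(\ol{\rev K})^{-1}L$. Writing $\delta$ for the (left-derivative) transitions of $\LQ{\rev L}$, whose initial state is $\rev L$, the elementary identity $v^{-1}\rev L=\rev{(L\,\rev v^{-1})}$ — valid since $vw\in\rev L\iff\rev w\,\rev v\in L\iff\rev w\in L\rev v^{-1}$ — gives $\delta_{\rev w}(\rev L)=\rev{(Lw^{-1})}$. Plugging this into \autoref{lem:astarprops}(1):
\[
L([\LQ{\rev L}]^\op,K)=\{\,w : \rev{(Lw^{-1})}\not\seq K\,\}=\{\,w : \exists\,u\text{ with }uw\in L\text{ and }\rev u\notin K\,\}=(\ol{\rev K})^{-1}L,
\]
where the middle step rewrites ``$\rev{(Lw^{-1})}\not\seq K$'' as ``there is $x\notin K$ with $\rev x\,w\in L$'' and substitutes $u=\rev x$, and the last step uses $u\notin\rev K\iff\rev u\notin K$. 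This identifies $\phi$ with $\dr_L$, and since $\phi$ is a $\JSL$-dfa isomorphism, this proves the proposition.

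The conceptual content sits entirely in the self-duality, so I expect no genuine difficulty there. The two places that need care are: (i) the reversal/derivative bookkeeping in the final display, where it is easy to flip a word the wrong way; and (ii) the step in the second paragraph asserting that the canonical morphism $[\LQ{\rev L}]^\op\to\Fin L$ has underlying-element behaviour $s\mapsto L([\LQ{\rev L}]^\op,s)$ landing inside the \emph{chosen} subautomaton $\LQ L$ of $\Fin L$ (rather than merely an isomorphic copy) — which is handled cleanly by invoking the uniqueness part of \autoref{lem:initfinal} instead of arguing about images directly.
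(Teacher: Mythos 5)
Your proposal is correct and follows essentially the same route as the paper: both establish abstractly via \autoref{lem:astarprops} and the uniqueness of minimal $\JSL$-dfas that $[\LQ{\rev{L}}]^\op$ is the minimal acceptor of $L$, identify the isomorphism with the canonical morphism $K\mapsto L([\LQ{\rev{L}}]^\op,K)$ into $\Fin{L}$, and then compute this language via \autoref{lem:astarprops}(1) to obtain $(\ol{\rev{K}})^{-1}L$. Your word-reversal bookkeeping in the final display matches the paper's computation exactly (up to the substitution $y=\rev{x}$), and your extra care about why the image lands in the chosen subautomaton $\LQ{L}$ of $\Fin{L}$ is a detail the paper leaves implicit.
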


\begin{rem}\label{rem:conway}
\begin{enumerate}
\item The isomorphism $\dr_L$ induces a bijection between the \emph{left} and \emph{right factors} of $L$, i.e.\, the inclusion-maximal left/right solutions of $X \cdot Y \subseteq L$. Conway~\cite{Conway71} observed that the left and right factors are respectively $\{\overline{\rev{K}} :K \in \jslLQ{\rev{L}}\}$ and $\{\overline{K} : K \in \jslLQ{L} \}$ and that they biject. Backhouse~\cite{BACKHOUSE2016824} observed that they are dually isomorphic posets. \autoref{prop:lqdual} provides an explicit automata-theoretic lattice isomorphism arising canonically via duality.
\item The isomorphism $\dr_L$ is tightly connected to the \emph{dependency relation}~\cite{HROMKOVIC2002202, NDLowBoundsHard2006} of a regular language $L$, i.e.~the binary relation given by
  \[
    \rDR{L} \subseteq \LD{L} \times \LD{L^r},
    \qquad
    \rDR{L} (u^{-1} L, v^{-1} L^r) \mathrel{\mathord{:\!\iff}} uv^r \in L.
  \]
  Its restriction $\rRDR{L} := \rDR{L} \cap J(\jslLQ{L}) \times J(\jslLQ{L^r})$ to the $\cup$-irreducible left derivatives of $L$ and $\rev{L}$ is called the \emph{reduced dependency relation}. The following theorem shows that the semilattice of left quotients and the dependency relation are essentially the same concepts. In part (3), we use that the isomorphism $\dr_L$ restricts to a bijection between the $\cup$-irreducible derivatives of $\rev{L}$ and the meet-irreducible elements of the lattice $\jslLQ{{L}}$.
\end{enumerate}
\end{rem}

\begin{theorem}[Dependency theorem]\label{thm:dependency}
  \label{thm:dep_thm}
  \begin{enumerate}
    \item We have the $\JSL$-isomorphism
    \[
      \jslLQ{L} \xra{\cong} (\{ \rDR{L}[X] : X \subseteq \LD{L} \}, \cup, \emptyset),
      \qquad
      K\mapsto \{ v^{-1} \rev{L} : v \in \rev{K} \,\}.
    \]
Note that its codomain forms a subsemilattice of $\Pow(\LW{\rev{L}})$.
\item For all $u,v\in \Sigma^*$ we have $\rDR{L}(u^{-1} L, v^{-1} L^r) \iff u^{-1} L \nsubseteq \dr_L(v^{-1} L^r)$. 
\item The following diagram in $\Rel$ commutes:
    \[
      \xymatrix@=12pt{
        J(\jslLQ{\rev{L}})  \ar[rr]_{\cong}^{\dr_L}
        && M(\jslLQ{{L}}) 
        \\
        J(\jslLQ{L}) \ar[u]^{\rRDR{L}} \ar@{=}[rr]
        && J(\jslLQ{L}) \ar[u]_{\nsubseteq} 
      }
    \]
  \end{enumerate}
\end{theorem}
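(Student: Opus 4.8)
The plan is to treat the three parts essentially separately, with \autoref{prop:lqdual} and \autoref{lem:astarprops} as the main external inputs; only part~(3) will invoke part~(2). For part~(1), write $\beta(K)=\{v^{-1}\rev L : v\in\rev K\}$ for the candidate map and first recognise it as a relational image: for any $X\seq\LD L$ one has $\beta(\bigcup X)=\rDR L[X]$, since $v^{-1}\rev L\in\rDR L[X]$ iff $u\rev v\in L$ for some $u^{-1}L\in X$ iff $\rev v\in\bigcup X$. Hence the image of $\beta$ on $\jslLQ L$ is exactly $\{\rDR L[X] : X\seq\LD L\}$ (given $K=\bigcup_i u_i^{-1}L$ take $X=\{u_i^{-1}L\}_i$; given $X$ take $K=\bigcup X\in\jslLQ L$, a finite union as $\LD L$ is finite), and this set is visibly a sub-$\cup$-semilattice of $\Pow(\LW{\rev L})$ with least element $\emptyset=\rDR L[\emptyset]$; that $\beta$ preserves finite unions and $\emptyset$ follows from the same property of $K\mapsto\rev K$. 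So it only remains to prove $\beta$ injective, after which $\beta$ is a union-preserving bijection onto its image and hence a $\JSL$-isomorphism. For injectivity I would show $K$ is recoverable from $\beta(K)$ as $K=\{z : (\rev z)^{-1}\rev L\in\beta(K)\}$: ``$\seq$'' is clear, and for ``$\supseteq$'', if $(\rev z)^{-1}\rev L=(\rev{z'})^{-1}\rev L$ with $z'\in K$, then reversing gives $Lz^{-1}=L{z'}^{-1}$, and choosing $i$ with $u_i z'\in L$, i.e.\ $u_i\in L{z'}^{-1}=Lz^{-1}$, yields $u_i z\in L$, so $z\in u_i^{-1}L\seq K$. (The real content is that every $K\in\jslLQ L$ is saturated under ``having equal right derivative''.)

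For part~(2), I would first establish the closed form
\[ \dr_L(s)=\{\,w\in\Sigma^* : (\rev w)^{-1}\rev L\nsubseteq s\,\}\qquad\text{for every }s\in\jslLQ{\rev L}. \]
By \autoref{prop:lqdual}, $\dr_L$ is an automaton isomorphism $[\LQ{\rev L}]^\op\xra{\cong}\LQ L$, so it preserves the language accepted by each state; since in $\LQ L\seq\Fin L$ the language accepted by a state $K$ is $K$ itself, and by \autoref{lem:astarprops}(1) (applied to $\LQ{\rev L}$, whose initial state is $\rev L$ and whose $\delta_{\rev w}$ maps $\rev L$ to $(\rev w)^{-1}\rev L$) the language accepted by $s$ in $[\LQ{\rev L}]^\op$ is the set displayed above, the identity follows. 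Specialising $s=v^{-1}\rev L$, the condition $u^{-1}L\nsubseteq\dr_L(v^{-1}\rev L)$ says ``there is $w$ with $uw\in L$ and $(\rev w)^{-1}\rev L\subseteq v^{-1}\rev L$'', and I would check this is equivalent to $u\rev v\in L$, i.e.\ to $\rDR L(u^{-1}L,v^{-1}\rev L)$: for ``$\Leftarrow$'' take $w=\rev v$; for ``$\Rightarrow$'' note that $uw\in L$ gives $\rev u\in(\rev w)^{-1}\rev L$, whence $\rev u\in v^{-1}\rev L$, which is exactly $u\rev v\in L$. This is routine bookkeeping with reversals.

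For part~(3), I would combine part~(2) with the fact that $\dr_L$, viewed as an order-\emph{reversing} bijection $\jslLQ{\rev L}\to\jslLQ L$, sends join-irreducibles to meet-irreducibles, so that its restriction is a bijection $J(\jslLQ{\rev L})\xra{\cong}M(\jslLQ L)$ --- the top edge of the square. Every $j\in J(\jslLQ{\rev L})$ is one of the generating left derivatives $v^{-1}\rev L$, so for $j'=u^{-1}L\in J(\jslLQ L)$ part~(2) gives $\rRDR L(j',j)=\rDR L(j',j)\iff j'\nsubseteq\dr_L(j)$. As $\dr_L$ is a total single-valued bijection on these irreducibles, the relational composite ``left edge $\rRDR L$ then top edge $\dr_L$'', evaluated at $(j',m)\in J(\jslLQ L)\times M(\jslLQ L)$, equals $\rRDR L(j',\dr_L^{-1}(m))\iff j'\nsubseteq\dr_L(\dr_L^{-1}(m))=m$ --- precisely the right edge $\nsubseteq$. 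Hence the square commutes in $\Rel$. The step I expect to be the main obstacle is the injectivity argument in part~(1), i.e.\ pinning down the ``saturation'' property characterising $\jslLQ L$ inside $\Pow(\Sigma^*)$; the rest is mechanical, chiefly keeping reversals, complements and left/right derivatives straight in the closed form for $\dr_L$.
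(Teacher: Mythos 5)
Your proposal is correct and follows essentially the same route as the paper: part~(1) via the identity $\beta(\bigcup X)=\rDR{L}[X]$, part~(2) via a closed form for $\dr_L$ derived from \autoref{prop:lqdual} and \autoref{lem:astarprops}(1), and part~(3) as a formal consequence of~(2) using that $\dr_L$ restricts to a bijection $J(\jslLQ{\rev{L}})\cong M(\jslLQ{L})$. The only differences are cosmetic: you make the saturation/injectivity step in~(1) explicit (the paper buries it in the first ``iff'' of its membership computation), and your closed form $\dr_L(s)=\{\,w:(\rev{w})^{-1}\rev{L}\nsubseteq s\,\}$ is an equivalent restatement of the paper's $\dr_L(U^{-1}\rev{L})=\bigcup\{\,K\in\LW{L}:K\cap\rev{U}=\emptyset\,\}$.
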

\clearpage \noindent Let us now turn to a dual characterization of the $\JSL$-dfa $\jslBLQ{L}$:

\begin{proposition}\label{prop:blqdual}
For every regular language $L$, the $\JSL$-dfa $\jslBLQ{L}$ is dual to the subset construction of the minimal dfa for $\rev{L}$:
\[ [\jslBLQ{L}]^\op \;\cong\; \Pow(\minDfa{\rev{L}}). \]
The isomorphism maps $\{w_1^{-1}\rev{L},\ldots, w_n^{-1}\rev{L} \}\in \Pow(\minDfa{\rev{L}})$ to  $\bigcap_{i=1}^n \overline{\At{\rev{w}_i}}$, where $\At{x}$ is the unique atom (= join-irreducible) of $\jslBLQ{L}$ containing $x$.
\end{proposition}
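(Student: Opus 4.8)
The plan is to exhibit the map of the statement explicitly and verify that it is an isomorphism of $\JSL$-dfas. I would begin by recording the structural facts that make this possible. Since $L$ is regular, $\LW{L}$ is finite, so $\jslBLQ{L}$ is a \emph{finite} boolean subalgebra of $\Pow(\Sigma^*)$, and it is a $\JSL$-subautomaton of $\Fin{L}$ because the derivative $K\mapsto a^{-1}K$ commutes with all boolean operations and $L=\epsilon^{-1}L$ lies in it. The atoms of $\jslBLQ{L}$ then partition $\Sigma^*$, and I write $\At{x}$ for the unique atom containing $x$; every atom arises this way, and an atom is determined by the set of left derivatives of $L$ that contain it. On the dual side, by the Remark on duality of $\JSL$-dfas, $[\jslBLQ{L}]^\op$ carries the semilattice $(\jslBLQ{L},\cap)$, its initial state is the $\seq$-largest non-final state of $\jslBLQ{L}$ (namely $\overline{\At{\epsilon}}$, the union of all atoms missing $\epsilon$), and its final states are those $\phi$ with $L\not\seq\phi$.

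The crux is the combinatorial identification of $\mathrm{At}(\jslBLQ{L})$ with the states $\LW{\rev{L}}$ of $\minDfa{\rev{L}}$. I claim $\Theta\colon\At{\rev{w}}\mapsto w^{-1}\rev{L}$ is a well-defined bijection. This rests on the elementary equivalence $u\rev{w}\in L \iff \rev{u}\in w^{-1}\rev{L}$: two words $\rev{w},\rev{w'}$ lie in a common atom iff $u\rev{w}\in L \Leftrightarrow u\rev{w'}\in L$ for all $u\in\Sigma^*$, and by the equivalence, with $\rev{u}$ ranging over $\Sigma^*$, this says exactly $w^{-1}\rev{L}=w'^{-1}\rev{L}$; so $\Theta$ is well-defined and injective, and it is surjective since every atom equals $\At{x}=\At{\rev{(\rev{x})}}$ for some $x$. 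Extending $\Theta^{-1}$ $\cup$-linearly identifies $(\jslBLQ{L},\cup)$ with the semilattice $(\Pow(\LW{\rev{L}}),\cup)$ of $\Pow(\minDfa{\rev{L}})$; postcomposing with complementation gives precisely the assignment $\Phi\colon X=\{w_1^{-1}\rev{L},\dots,w_n^{-1}\rev{L}\}\mapsto\bigcap_{i}\overline{\At{\rev{w}_i}}$ (using that the atoms partition $\Sigma^*$), which is therefore a bijective semilattice morphism $(\Pow(\LW{\rev{L}}),\cup)\to(\jslBLQ{L},\cap)$.

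It then remains to check that $\Phi$ respects the automaton structure. For the initial state, $\Phi(\{\rev{L}\})=\overline{\At{\epsilon}}$ (since $\Theta^{-1}(\epsilon^{-1}\rev{L})=\At{\epsilon}$), which is exactly the initial state of $[\jslBLQ{L}]^\op$ identified above. For final states, using $\Theta(\At{x})=(\rev{x})^{-1}\rev{L}$ and $\overline{\Phi(X)}=\bigcup_{K\in X}\Theta^{-1}(K)$, one checks that $X$ meets the final states of $\minDfa{\rev{L}}$ iff some $x\in L$ lies in $\overline{\Phi(X)}$, i.e.\ iff $L\not\seq\Phi(X)$, which is finality in $[\jslBLQ{L}]^\op$.

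The transition case is the step I expect to be the main obstacle. First I would observe that $\At{ax}$ depends only on the atom $\At{x}$ (if $x,x'$ lie in a common atom then $ux\in L\Leftrightarrow ux'\in L$ for all $u$, hence $u(ax)\in L\Leftrightarrow u(ax')\in L$ for all $u$), so $a$ acts on atoms by $\alpha\mapsto a\cdot\alpha$, and $a^{-1}\beta=\bigcup\{\alpha:a\cdot\alpha=\beta\}$ for every atom $\beta$; this is what lets the derivative transition interact cleanly with atoms, and moreover $\Theta^{-1}(a^{-1}(v^{-1}\rev{L}))=\At{\rev{va}}=a\cdot\At{\rev{v}}$. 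Since the transition $\delta_a$ of $\jslBLQ{L}$ is $\phi\mapsto a^{-1}\phi$ and preserves joins, the transition of $[\jslBLQ{L}]^\op$ is the adjoint $\delta_a^\ast$, with $\delta_a^\ast(\phi)$ equal to the union of all atoms $\beta$ satisfying $a^{-1}\beta\seq\phi$; the transition of $\Pow(\minDfa{\rev{L}})$ sends $X$ to $\{a^{-1}K:K\in X\}$, which $\Theta^{-1}$ intertwines with $\alpha\mapsto a\cdot\alpha$. A short computation with complements of unions of atoms then shows that both $\Phi$ applied to the $a$-transition of $X$ and the $a$-transition of $[\jslBLQ{L}]^\op$ applied to $\Phi(X)$ equal $\bigcup\{\beta\in\mathrm{At}(\jslBLQ{L}):\Theta(\alpha)\notin X\text{ for every atom }\alpha\text{ with }a\cdot\alpha=\beta\}$, completing the proof. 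The only real subtleties are the well-definedness of $a\cdot\alpha$ and keeping the complements straight; everything else is routine.
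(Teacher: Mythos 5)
Your proof is correct and follows essentially the same route as the paper's: both rest on the bijection between the atoms of $\jslBLQ{L}$ and the left derivatives of $\rev{L}$ (via $v^{-1}\rev{L}=w^{-1}\rev{L}\iff\At{\rev{v}}=\At{\rev{w}}$), extend it by complementation to the stated semilattice isomorphism, and then verify preservation of initial state, final states and transitions. The only difference is cosmetic: for transitions the paper reduces to singleton generators and invokes $a^{-1}\At{a\rev{w}}\supseteq\At{\rev{w}}$, whereas you make the induced action of $a$ on atoms explicit and compute both sides of the naturality square on arbitrary subsets.
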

To state the dual characterization of $\jslBLRQ{L}$, we recall two standard concepts from algebraic language theory~\cite{pin20}. The \emph{transition monoid} of a deterministic automaton $D=(S,\delta,i,f)$ is the image $\tm{D}\seq \Set(S,S)$ of the morphism
\[ \Sigma^* \to \Set(S,S),\quad w~\mapsto~ \delta_{w}.\] 
Thus, $\tm{M}$ is carried by the set of extended transition maps $\delta_w$ ($w\in\Sigma^*$) with multiplication given by $\delta_{v}\bullet \delta_w = \delta_{vw}$ and unit $\id_S=\delta_\epsilon\colon S\to S$.
We may view $\tm{D}$ as a deterministic automaton with initial state $\id_S$, final states all $\delta_w$ such that $w$ is accepted by $D$, and transitions $\delta_w\xra{a}\delta_{wa}$ for $w\in \Sigma^*$ and $a\in \Sigma$. This automaton accepts the same language as $D$.
The \emph{syntactic monoid} $\Syn{L}$ of a regular language $L\seq \Sigma^*$ is the transition monoid of its minimal dfa:
\[ \Syn{L} = \tm{\minDfa{L}}. \] Equivalently, $\Syn{L}$ is the quotient monoid of the free monoid $\Sigma^*$ modulo the \emph{syntactic congruence} of $L$, i.e~the monoid congruence on $\Sigma^*$ given by
\[ v \equiv_L w \quad\text{iff}\quad \forall x, y \in \Sigma^*: xvy \in L \iff xwy \in L.\] The associated surjective monoid morphism $\mu_L \colon \Sigma^* \epito \Syn{L}$, mapping $w\in \Sigma^*$ to its congruence class $[w]_L\in \Syn{L}$, is called the \emph{syntactic morphism}.

\begin{proposition}\label{prop:blrqdual}
For every regular language $L$, the $\JSL$-dfa $\jslBLRQ{L}$ is dual to the subset construction of $\Syn{\rev{L}}$, viewed as a dfa:
\[ [\jslBLRQ{L}]^\op \;\cong\; \Pow(\Syn{\rev{L}}). \]
The isomorphism maps $\{\,[w_1]_{\rev{L}},\ldots, [w_n]_{\rev{L}} \,\}\in \Pow(\Syn{\rev{L}})$ to  $\bigcap_{i=1}^n \overline{\At{\rev{w_i}}}$, with $\At{x}$ denoting the unique atom of $\jslBLRQ{L}$ containing $x$.
\end{proposition}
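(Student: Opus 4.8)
The plan is to prove this in close analogy with \autoref{prop:blqdual}, by exhibiting the map of the statement explicitly and checking that it is an isomorphism of $\JSL$-dfas. The first step is to identify the atoms of the finite boolean algebra $\jslBLRQ{L}\seq\Pow(\Sigma^*)$. Since it is generated by the two-sided derivatives $u^{-1}Lv^{-1}$, two words $x,y$ lie in the same members of $\jslBLRQ{L}$ iff $uxv\in L\iff uyv\in L$ for all $u,v\in\Sigma^*$, i.e.\ iff $x\equiv_L y$, and equivalently (reversing everything and using $\rev a=a$ for single letters) iff $\rev x\equiv_{\rev L}\rev y$. As the atoms of a finite field of sets partition the underlying set, $m=[w]_{\rev L}\mapsto\At{\rev w}$ is therefore a well-defined bijection from $\Syn{\rev L}$ onto the atom set of $\jslBLRQ{L}$; write $\Atz{m}:=\At{\rev w}$ for the corresponding atom.

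The candidate isomorphism arises as follows. Every $K\in\jslBLRQ{L}$ is the union of the atoms below it, so complementation is a boolean-algebra isomorphism $\jslBLRQ{L}\cong\Pow(\mathrm{At}(\jslBLRQ{L}))\cong\Pow(\Syn{\rev L})$ interchanging $\cap$ and $\cup$; composing it with the order-reversal $[\jslBLRQ{L}]^\op\to\jslBLRQ{L}$ yields the bijection
\[
  \Phi\colon\Pow(\Syn{\rev L})\xra{\ \cong\ }[\jslBLRQ{L}]^\op,
  \qquad
  \Phi(S)=\ol{\textstyle\bigcup_{m\in S}\Atz{m}}=\textstyle\bigcap_{m\in S}\ol{\Atz{m}},
\]
which on singletons is $\{[w]_{\rev L}\}\mapsto\ol{\At{\rev w}}$, matching the statement. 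It then remains to check that $\Phi$ respects the $\JSL$-dfa structure, for which I would use the description of the dual automaton $A^{\op}$ recalled above: reversed order, initial state $=$ the $\leq$-largest non-final state of $A$, a state $K$ final iff $L\nsubseteq K$, and $\delta_a^{\ast}(K)=$ the $\subseteq$-largest $K'\in\jslBLRQ{L}$ with $a^{-1}K'\seq K$. The semilattice part is De Morgan ($\Phi$ sends unions to intersections, $\emptyset$ to $\Sigma^*$). For the initial state, $\Pow(\Syn{\rev L})$ starts in $\{1\}$, the unit of $\Syn{\rev L}$, i.e.\ the initial state of $\Syn{\rev L}$ as a dfa, and $\Phi(\{1\})=\ol{\At\epsilon}$ is the largest member of $\jslBLRQ{L}$ avoiding $\epsilon$, hence the largest non-final state. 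For final states, $L\in\jslBLRQ{L}$ forces every atom to be contained in, or disjoint from, $L$, so $\Atz{m}$ meets $L$ iff $m=[w]_{\rev L}$ with $\rev w\in L$; hence $\Phi(S)$ is final ($L\nsubseteq\Phi(S)$) iff $S$ contains some $[w]_{\rev L}$ with $\rev w\in L$, which is exactly the acceptance condition of $\Pow(\Syn{\rev L})$ (whose final states of $\Syn{\rev L}$ are the classes of words of $\rev L$, and $w\in\rev L\iff\rev w\in L$).

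The crux is the transition axiom. The transition of $\Pow(\Syn{\rev L})$ on $a\in\Sigma$ is $S\mapsto S\cdot[a]_{\rev L}:=\{m\cdot[a]_{\rev L}:m\in S\}$, and on the dual side an atom $\Atz{n}$ lies below $\delta_a^{\ast}(K)$ iff $a^{-1}\Atz{n}\seq K$. The key identity is
\[
  a^{-1}\Atz{n}=\textstyle\bigcup\{\,\Atz{m} : m\cdot[a]_{\rev L}=n\,\},
\]
proved pointwise: writing $n=[z]_{\rev L}$, a word $y$ lies in $a^{-1}\Atz{n}$ iff $ay\in\At{\rev z}$ iff $ay\equiv_L\rev z$ iff $\rev y\,a\equiv_{\rev L}z$ iff $[\rev y]_{\rev L}\cdot[a]_{\rev L}=n$. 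Consequently, for $K=\Phi(S)$ the atom $\Atz{n}$ lies below $\delta_a^{\ast}(\Phi(S))$ iff $\{m:m\cdot[a]_{\rev L}=n\}\cap S=\emptyset$, i.e.\ iff $n\notin S\cdot[a]_{\rev L}$; taking the union of all such atoms gives $\delta_a^{\ast}(\Phi(S))=\Phi(S\cdot[a]_{\rev L})$. Thus $\Phi$ commutes with transitions and is the desired isomorphism.

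I expect the transition step to be the only real obstacle, and within it the identity $a^{-1}\Atz{n}=\bigcup\{\Atz{m}:m\cdot[a]_{\rev L}=n\}$, which translates left derivation of atoms into the $[a]_{\rev L}$-preimage relation in $\Syn{\rev L}$. Everything else is bookkeeping; the one recurring subtlety is well-definedness with respect to the chosen word representatives of syntactic classes, which is immediate from the bijection of the first step. (Alternatively, one could first use \autoref{lem:nfarev} to rewrite $[\Pow(\Syn{\rev L})]^\op$ as $\Pow(\rev{\Syn{\rev L}})$ and then identify the latter with the subautomaton $\jslBLRQ{L}$ of $\Fin{L}$, but the direct route above seems cleaner.)
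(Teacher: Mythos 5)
Your proposal is correct and follows essentially the same route as the paper's proof: identify the atoms of $\jslBLRQ{L}$ with the syntactic classes of $\rev{L}$ via $[w]_{\rev{L}}\mapsto\At{\rev{w}}$, define the isomorphism through complements of atoms, and check preservation of initial state, final states, and transitions. Your verification of the transition axiom via the identity $a^{-1}\Atz{n}=\bigcup\{\Atz{m}:m\cdot[a]_{\rev L}=n\}$ is in fact slightly more complete than the paper's, which only checks the generators $\{[w]_{\rev L}\}$ and cites the single inclusion $a^{-1}\At{a\rev{w}}\supseteq\At{\rev{w}}$.
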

Our final duality result in this section concerns the \emph{transition semiring}~\cite{Polak2001}, a generalization of the transition monoid to $\JSL$-automata. Note that the monoid $\JSL(S,S)$ of endomorphisms of a semilattice $S$ forms an idempotent semiring with join defined pointwise: for any $f,g\colon S\to S$, the morphism $f\vee g\colon S\to S$ is given by $s\mapsto f(s)\vee g(s)$. The {transition semiring} of a $\JSL$-automaton $A=(S,\delta,i,f)$ is the image $\ts{A}\seq \JSL(S,S)$ of the semiring morphism
\[ \Pow_\f(\Sigma^*) \to \JSL(S,S),\quad \{w_1,\ldots,w_n\}~\mapsto~ \bigvee_{i=1}^n \delta_{w_i} .\] 
Here $\Pow_\f (\Sigma^*)$ is the free idempotent semiring on $\Sigma$, with composition given by concatenation of languages and join given by union. Thus, $\ts{A}$ is the semi\-ring carried by all morphisms $\bigvee_{i=1}^n \delta_{w_i}$ for $w_1,\ldots,w_n\in \Sigma^*$, with join given as above and multiplication $\bigvee_{j} \delta_{v_j} \bullet \bigvee_{i} \delta_{w_i} = \bigvee_{i,j} \delta_{v_jw_i}$. We view $\ts{A}$ as a $\JSL$-automaton with initial state $\id_S = \delta_\epsilon$, final states all $\bigvee_{i}\delta_{w_i}$ such that some $w_i$ is accepted by $A$, and transitions $\bigvee_{i=1}^n \delta_{w_i} \xra{~a~} \bigvee_{i=1}^n \delta_{w_ia}$ for $w_1,\ldots,w_n\in \Sigma^*$ and $a\in \Sigma$. This $\JSL$-automaton is reachable and accepts the same language as $A$. It has the following dual characterization:

\begin{notation}
Given a simple $\JSL$-automaton $A=(S,\delta,i,f)$, the subautomaton
of $\Fin{L}$ obtained by closing $S$ (viewed as a set of languages)
under right derivatives is called the \emph{right-derivative closure}
of $A$ and denoted $\rqc{A}$.
\end{notation}

\begin{proposition}\label{prop:tsdual}
Let $A$ be a reachable $\JSL$-dfa. Then the transition semiring of $A$, viewed as a $\JSL$-dfa, is dual to the right-derivative closure of $A^\op$:
\[ [\ts{A}]^\op \cong \rqc{A^\op} .\]
\end{proposition}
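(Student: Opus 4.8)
The plan is to identify both $[\ts{A}]^\op$ and $\rqc{A^\op}$ with the same subautomaton of $\Fin{\rev L}$, where $L = L(A)$. Since $\ts A$ is reachable and accepts $L$, \autoref{lem:astarprops}(2) and (3) give that $[\ts A]^\op$ is simple and accepts $\rev L$; hence the canonical morphism $m := m_{[\ts A]^\op}\colon [\ts A]^\op \to \Fin{\rev L}$ of \autoref{lem:initfinal} is injective, so $[\ts A]^\op$ is isomorphic to the subautomaton $\mathrm{im}(m)$ of $\Fin{\rev L}$. On the other side, $A^\op$ is simple, so $\rqc{A^\op}$ is defined; unwinding its definition one sees that its underlying set of languages is the $\cup$-closure of $\{\, L(A^\op,s)\,v^{-1} : s\in S,\ v\in\Sigma^* \,\}$ --- this set is already closed under left derivatives because right and left derivatives commute, so it genuinely carries a subautomaton of $\Fin{\rev L}$. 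Since both $\mathrm{im}(m)$ and $\rqc{A^\op}$ are subautomata of $\Fin{\rev L}$ (and every such is simple), it suffices to show that they have the same underlying set, i.e.\ that $m$ maps the states of $[\ts A]^\op$ onto that $\cup$-closure; then $m$ corestricts to a bijective, hence invertible, automaton morphism $[\ts A]^\op \to \rqc{A^\op}$.

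To compute $\mathrm{im}(m)$, recall that $\ts A$ as a $\JSL$-dfa has initial state $\delta_\epsilon = \id_S$, transition maps $g\mapsto \delta_a\o g$, and the pointwise order. Thus \autoref{lem:astarprops}(1) yields $L([\ts A]^\op,g) = \{\, w : \delta_{\rev w}\not\leq g \,\}$ with the inequality taken pointwise, and since $A$ is reachable --- so every state of $S$ is a finite join of dfa-reachable states $\delta_u(s_0)$, and a join fails to lie below $g$ iff some joinand does --- this becomes
\[
  L([\ts A]^\op, g) \;=\; \bigcup_{r\in R}\{\, w\in\Sigma^* : \delta_{\rev w}(r)\not\leq_S g(r) \,\},
  \qquad R := \{\,\delta_u(s_0) : u\in\Sigma^*\,\}.
\]
For the inclusion $\mathrm{im}(m)\seq\rqc{A^\op}$ I would write each $r\in R$ as $r = \delta_{\rev v}(s_0)$ and use $\delta_{\rev{wv}} = \delta_{\rev w}\o\delta_{\rev v}$ together with \autoref{lem:astarprops}(1) for $A$ to recognise the $r$-th summand above as $L(A^\op, g(r))\,v^{-1}$, a right derivative of a state language of $A^\op$; a finite union of these lies in $\rqc{A^\op}$.

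The main obstacle is the reverse inclusion $\rqc{A^\op}\seq\mathrm{im}(m)$, which I would establish generator by generator: as $\mathrm{im}(m)$ is closed under union, it is enough to exhibit, for each $r\in R$ and $c\in S$, some $g\in\ts A$ with $L([\ts A]^\op, g) = G_{r,c}$, where $G_{r,c} := \{\, w : \delta_{\rev w}(r)\not\leq c \,\}$. The witness I would try is
\[
  g \;:=\; \bigvee\{\, \delta_x : x\in\Sigma^*,\ \delta_x(r)\leq_S c \,\},
\]
which is a join of transition maps, hence a finite join since $\JSL(S,S)$ is finite, and therefore lies in $\ts A$. One then checks: $g(r)\leq_S c$, so $\delta_{\rev w}(r)\not\leq c$ implies $\delta_{\rev w}(r)\not\leq g(r)$, giving $G_{r,c}\seq L([\ts A]^\op,g)$; conversely, if $\delta_{\rev w}(r')\not\leq g(r')$ for some $r'\in R$ then $\delta_{\rev w}\not\leq g$, so by construction $\delta_{\rev w}(r)\not\leq c$, i.e.\ $w\in G_{r,c}$, giving $L([\ts A]^\op,g)\seq G_{r,c}$. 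Everything besides this choice of $g$ is routine bookkeeping with the formulas from \autoref{lem:astarprops}; the only step requiring genuine insight is seeing that this particular join of transition maps cuts out exactly the language $G_{r,c}$.
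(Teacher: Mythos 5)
Your proposal is correct and follows essentially the same route as the paper: both identify $[\ts{A}]^\op$ and $\rqc{A^\op}$ as simple subautomata of the final automaton, compute $L([\ts{A}]^\op,g)$ via \autoref{lem:astarprops}(1) together with reachability of $A$ to reduce the pointwise inequality to the generators $\delta_v(s_0)$, and use the same key witness (your $g=\bigvee\{\delta_x : \delta_x(r)\leq_S c\}$ is, for $r=s_0$, exactly the paper's $\delta_K$ with $K=\{w:\delta_w(s_0)\leq_S s\}$). The only difference is organizational: by letting $r$ range over all dfa-reachable states you absorb the paper's separate adjointness argument that $[\ts{A}]^\op$ is closed under right derivatives into the generator-by-generator inclusion, which is a mild streamlining rather than a different method.
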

Note that both $[\ts{A}]^\op$ and $\rqc{A^\op}$ are simple, hence subautomata of $\Fin{L}$. Thus, the isomorphism just expresses that their states accept the same languages.


\section{Boolean Representations and Subatomic NFAs}
Based upon the duality results of the previous section, we will now introduce our algebraic approach to nondeterministic state minimality. It rests on the concept of a representation of a monoid on a finite semilattice.

\begin{defn}[Boolean representation]
  Let $M$ be a monoid.  
\begin{enumerate} \item 
A \emph{boolean representation} of $M$ is given by a finite semilattice $S$ together with a monoid morphism $\rho \colon M \to \JSL(S, S)$. The \emph{degree} of $\rho$ is \[\deg(\rho) := |J(S)|.\]
\item Given boolean representations $\rho_i \colon M \to \JSL(S_i, S_i)$, $i=1,2$, an \emph{equivariant map} $f \colon \rho_1 \to \rho_2$ is a $\JSL$-morphism $f \colon S_1 \to S_2$ such that
    \[
      \text{$f (\rho_1(m)(s)) = \rho_2(m)(f(s))$ for all $m \in M$ and $s \in S_1$.}  
    \]
If $f$ is injective, we say that the representation $\rho_2$ \emph{extends} $\rho_1$.
\end{enumerate}
%
\end{defn}

\begin{rem}
  \begin{enumerate}
\item The above representations are called \emph{boolean} because semilattices are precisely semimodules over the boolean semiring $2=\{0,1\}$ with $1+1=1$. For more on representations over general commutative semirings, see~\cite{irs11}.
    \item The category of boolean representations of $M$ coincides with the functor category $\JSL_\f^M$, viewing $M$ as a one object category.
  \end{enumerate}
\end{rem}

\begin{defn}[Canonical representation] For every regular language $L$, the \emph{canonical boolean representation} of the syntactic monoid $\Syn{L}$ is given by
  \[
    \kappa_L\colon
    \Syn{L} \to \JSL(\jslLQ{L}, \jslLQ{L}),\quad [w]_L \mapsto \lambda K.w^{-1}K.
  \]
It induces the \emph{canonical boolean presentation} of the free monoid $\Sigma^*$ given by
 \[
\kappa_L \circ \mu_L\colon \Sigma^*\to \JSL(\jslLQ{L}, \jslLQ{L}),\quad w \mapsto \lambda K.w^{-1}K,
\] 
where $\mu_L\colon \Sigma^*\epito \Syn{L}$ is the syntactic morphism.
\end{defn}
The representation $\kappa_L\circ \mu_L$ amounts to constructing the transition semiring of the minimal $\JSL$-automaton $\jslLQ{L}$, i.e.\ the \emph{syntactic semiring}~\cite{Polak2001} of $L$.
\begin{expl}
   We describe the canonical boolean representation $\kappa_{L_n}$ for the language $L_n := (0 + 1)^* 1 (0 + 1)^n$, $n\in\Nat$. Let $S:=2^{n + 1}_\bot$ be the semilattice of binary words of length $n + 1$, ordered pointwise, with an additional bottom element $\bot$. Then $\jslLQ{L_n}$ is isomorphic to $S$, as witnessed by the isomorphism
\[f \colon S \xra{\cong} \LQ{L_n},\quad f(\bot)=\emptyset, \quad f(w) = w^{-1} L_n.\] Thus, $\kappa_{L_n}$ is isomorphic to the representation $\rho\colon \Syn{L_n} \to \JSL(S,S)$ where:
  \begin{enumerate}
    \item $\rho([0]_{L_n})\colon S \to S$ performs a left-shift (distinct from left-rotate);
    \item $\rho([1]_{L_n})\colon S \to S$ performs a left-shift and sets the last bit as $1$.
  \end{enumerate}
  Finally, $\deg(\kappa_{L_n}) = \deg(\rho) = 1 + |J(2^{n + 1})| = n + 2$ is the number of states of the usual minimal nfa for $L$.
%
\end{expl}

\begin{expl}
  \label{ex:lang_with_m3}
We describe the canonical boolean presentation $\kappa_L$ for the
language $L = a_1(a_2 + a_3) + a_2(a_1 + a_3) + a_3(a_1 + a_2)$ over
$\Sigma=\{a_1,a_2,a_3\}$. Consider the $\cup$-semilattice $M_3=\{
\emptyset, \{a_1,a_2\}, \{a_1,a_3\}, \{a_2,a_3\}, \Sigma \}$. Then
$\jslLQ{L}$ is isomorphic to the product semilattice $2\times
M_3\times 2$ via the map
\[f\colon \jslLQ{L} \xra{\cong} 2 \times M_3 \times 2,\quad f(X) = (X \cap \Sigma^2, X \cap \Sigma, X \cap \{ \epsilon \}).\] 
Note that the first and third component is either
$\emptyset$ or one other set, i.e.\ it may be identified with the
elements of $2$. For $i=1,2,3$ we define the following semilattice morphisms:
\begin{align*}
&\alpha_i\colon 2\to M_3, &&  \alpha_i(1)=\Sigma\setminus \{a_i\}; \\
&\beta_i\colon M_3\to 2, && \beta_i(S) = 1 \iff a_i\in S; \\
&\gamma\colon 2\to 2 &&  \gamma(1) = 0; \\
&\delta\colon M_3\times 2\times 2\to 2\times M_3 \times 2,&& \delta(x,y,z)=(z,x,y).
\end{align*}
Then $\kappa_L$ is isomorphic to $\rho\colon \Syn{L}\to \JSL(2\times M_3\times 2,\, 2\times M_3\times 2)$ where
  \[
    \rho([a_i]_L) = (\,2\times M_3\times 2 \xra{\alpha_i\times \beta_i\times \gamma} M_3\times 2\times 2 \xra{\delta} 2\times M_3\times 2\,).
  \]
Thus, $\deg(\kappa_L) = \deg(\rho) = 1 + 3 + 1 = 5$. An analogous description of $\kappa_L$ exists for any language $L$ where each word has the same length. 
%
\end{expl}
The next theorem links minimal nfas and representations. 

\begin{defn} The \emph{nondeterministic state complexity} $\ns{L}$
 of a regular language $L$ is the least number of states of any nfa accepting $L$.
\end{defn}

\begin{theorem}
  \label{thm:ns_char}
For every regular language $L$, the  nondeterministic state complexity
  $\ns{L}$ is the least degree of any boolean representation extending the canonical representation $\kappa_L \o \mu_L\colon \Sigma^*\to \JSL(\LQ{L},\LQ{L})$.
\end{theorem}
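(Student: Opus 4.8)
The plan is to prove the two inequalities $\ns{L}\le d$ and $d\le \ns{L}$ separately, where $d$ denotes the least degree of any boolean representation of $\Sigma^*$ extending $\kappa_L\o\mu_L$. The bridge between the automata-theoretic side and the representation-theoretic side is \autoref{rem:jsldfa_vs_nfa}: an nfa with $n$ states corresponds, via the subset construction, to a $\JSL$-dfa whose semilattice of states has exactly $n$ join-irreducibles, and conversely a $\JSL$-dfa with $|J(S)|=n$ yields an nfa on $n$ states. So I would first reformulate the whole statement in terms of $\JSL$-dfas: $\ns{L}$ is the least number of join-irreducibles of any $\JSL$-dfa accepting $L$, and I must match this against degrees of representations extending $\kappa_L\o\mu_L$.

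The key observation, which I would isolate as the first lemma, is that a boolean representation $\rho\colon\Sigma^*\to\JSL(S,S)$ extending $\kappa_L\o\mu_L$ is essentially the same data as a $\JSL$-dfa accepting $L$ whose transition semiring is ``compatible'' with the syntactic semiring — more precisely, the extension witnessed by an injective equivariant map $m\colon\jslLQ{L}\to S$ lets us transport the initial state $L\in\jslLQ{L}$ and a final prime filter along $m$, producing a $\JSL$-automaton $A_\rho=(S,\rho(-),m(L),F)$. One checks that $A_\rho$ accepts $L$: its accepted-language map factors through $m_A\colon A_\rho\to\Fin{L}$, and equivariance of $m$ plus the fact that $\jslLQ{L}=\LQ{L}$ is the minimal (hence simple) $\JSL$-dfa for $L$ forces $L(A_\rho, m(L))=L$. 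This gives $\ns{L}\le \deg(\rho)=|J(S)|$ for every such $\rho$, hence $\ns{L}\le d$. Conversely, given a state-minimal nfa $N$ for $L$, form $\Pow(N)$, a $\JSL$-dfa accepting $L$ with $|J(\Pow(Q))|=|Q|=\ns{L}$; its transition semiring $\ts{\Pow(N)}$ acts on $\Pow(Q)$, and by the universal property of the syntactic congruence (the syntactic monoid is the smallest monoid recognizing $L$ via its minimal dfa, and $\LQ{L}$ sits inside $\Fin{L}$) the transition action of $\Sigma^*$ on $\Pow(Q)$ descends appropriately, yielding a representation $\rho_N\colon\Sigma^*\to\JSL(\Pow(Q),\Pow(Q))$ together with an equivariant map from $\jslLQ{L}$. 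The point is that the minimal $\JSL$-dfa $\LQ{L}$ arises as the image of $\Init{L}\to\Pow(N)\to\Fin{L}$, giving an injective $\JSL$-morphism $\LQ{L}\to\Pow(Q)$ after restricting to the reachable part; equivariance with respect to $\kappa_L\o\mu_L$ follows since both actions are induced by left derivatives. Thus $d\le\deg(\rho_N)=\ns{L}$.

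The main obstacle I anticipate is the second direction, specifically verifying that the representation built from a minimal nfa genuinely \emph{extends} $\kappa_L\o\mu_L$ in the technical sense — i.e.\ producing an \emph{injective} equivariant map $\jslLQ{L}\hookrightarrow \Pow(Q)$ rather than merely a map. A naive minimal nfa need not be such that $\LQ{L}$ embeds in $\Pow(Q)$; one typically has to pass to the reachable part of $\Pow(N)$, and then argue that reachability forces a surjection $\Init{L}\epito\reach{\Pow(N)}$ which co-restricts to the minimal $\JSL$-dfa $\LQ{L}$ as a subautomaton of $\Fin{L}$ via the factorization $\Init{L}\epito\LQ{L}\monoto\Fin{L}$. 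The diagram chase establishing that this embedding is equivariant for the syntactic-monoid action (and not just for the $\Sigma^*$-action) is where the syntactic congruence must be invoked: one shows that the $\Sigma^*$-action on $\reach{\Pow(N)}$ factors through $\mu_L$ because $\reach{\Pow(N)}$ accepts $L$ and its accepted-language map is injective on the reachable part only up to the syntactic equivalence. I would handle this by first proving the cleaner statement for $\jslLQ{L}$ itself and then transporting along the factorization system of \autoref{rem:jsldfa_vs_nfa}(1), keeping the whole argument parallel to the analogous (already-cited) characterization of $\ns{L}$ via extensions of the canonical representation of $\Sigma^*$.
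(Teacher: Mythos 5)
Your first direction ($\ns{L}\leq d$) is essentially the paper's argument: one extends the final\hyphenation{fi-nal}-state morphism $\jslLQ{L}\to 2$ along the embedding $m$ (this is exactly where the injectivity of $2$ in $\JSL$ is needed --- a prime filter does not simply ``transport'' forward along a mono), after which $m$ is a $\JSL$-automata morphism, the automaton on $S$ accepts $L$, and its nfa of join-irreducibles has $\deg(\rho)$ states.

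The converse direction, however, contains a genuine error. You claim that the factorization $\Init{L}\epito\jslLQ{L}\monoto\Fin{L}$ of the composite $\Init{L}\to\Pow(N)\to\Fin{L}$ yields an injective $\JSL$-morphism $\jslLQ{L}\monoto\Pow(Q)$ ``after restricting to the reachable part''. It does not: $\reach{\Pow(N)}$ is the image of $\Init{L}\to\Pow(N)$, and what the factorization actually provides is a \emph{surjection} $\reach{\Pow(N)}\epito\jslLQ{L}$ --- the arrow points the wrong way, and there is in general no equivariant section, hence no embedding of $\jslLQ{L}$ into $\Pow(Q)$ or into its reachable part arising this way. The correct receptacle is the other half of the factorization, namely the simple quotient $\jslLangs{N}=\simple{\Pow(N)}$, i.e.\ the subsemilattice of $\Pow(\Sigma^*)$ of all languages accepted by subsets of $Q$: every finite union $\bigcup_i w_i^{-1}L$ is accepted by the subset $\bigcup_i\delta_{w_i}[I]$, so $\jslLQ{L}$ is literally a subautomaton of $\jslLangs{N}$, and the inclusion is equivariant because both actions are given by left derivatives. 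The degree bound then comes not from $|J(\Pow(Q))|=|Q|$ but from the fact that $\jslLangs{N}$, being a quotient of $\Pow(Q)$, is generated by the $|Q|$ languages accepted by single states, whence $|J(\jslLangs{N})|\leq|Q|$ (possibly with strict inequality, which is harmless for $d\leq\ns{L}$). Finally, your appeals to the syntactic congruence and to $\Syn{L}$ being the least monoid recognizing $L$ are out of place here: this theorem concerns representations of the free monoid $\Sigma^*$, so only $\Sigma^*$-equivariance is required; descending the action along $\mu_L$ is the content of the separate result on $\nsyn{L}$.
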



\begin{proof}[Sketch]
\begin{enumerate}
\item Given a $k$-state nfa $N=(Q,\delta,I,F)$ accepting $L$, consider the subsemilattice $\jslLangs{N}=\simple{\Pow(N)}$ of $\Pow(\Sigma^*)$ on all languages accepted by subsets of $Q$. The embedding $\jslLQ{L}\monoto \jslLangs{N}$ yields an extension of $\kappa_L\o \mu_L$. Since the semilattice $\jslLangs{N}$ is generated by the languages accepted by single states of $N$, this extension has degree at most $k$.
\item Conversely, let $\rho\colon \Sigma^*\to \JSL(S,S)$ be a boolean representation of degree $k$ extending $\kappa_L\o \mu_L$, witnessed by an injective equivariant map $h\colon \LQ{L}\monoto S$. One can equip $S$ with a $\JSL$-dfa structure making $h$ an automata morphism. Since morphisms preserve accepted languages, it follows that $S$ accepts $L$. Then the nfa of join-irreducibles of $S$, see \autoref{rem:jsldfa_vs_nfa}, is a $k$-state nfa accepting $L$.\qed
\end{enumerate}
\end{proof}
As an application, let us return to the dependency relation $\rDR{L}$ introduced in \autoref{rem:conway}(2). Recall that a \emph{biclique} of a relation $R\seq X\times Y$ (viewed as a bipartite graph) is a subset of the form $X'\times Y'\seq R$, where $X'\seq X$ and $Y'\seq Y$. A \emph{biclique cover} of $R$ is a set $\C$ of bicliques with $R=\bigcup \C$. The \emph{bipartite dimension} $\dim{R}$ is the least cardinality of any biclique cover of $R$.

\begin{theorem}[Gruber-Holzer \cite{NDLowBoundsHard2006}]\label{thm:dim_vs_ns}
  For every regular language $L$, we have \[\dim{\rDR{L}} \leq \ns{L}.\]
\end{theorem}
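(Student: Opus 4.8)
The plan is to read off a biclique cover of $\rDR{L}$ of size $\ns{L}$ directly from a state-minimal nfa; this amounts to the biclique-cover construction of Gruber and Holzer, recast in the present framework. First I would fix an nfa $N=(Q,\delta,I,F)$ accepting $L$ with $|Q|=\ns{L}$, and attach to each state $q\in Q$ the two sets of derivatives
\[
  X_q=\set{u^{-1}L}{u\in\Sigma^*,\ q\in\delta_u[I]}
  \qquad\text{and}\qquad
  Y_q=\set{v^{-1}\rev{L}}{v\in\Sigma^*,\ \delta_{\rev{v}}[\{q\}]\cap F\neq\emptyset},
\]
so that $X_q\subseteq\LD{L}$ is the set of left derivatives of $L$ realised by words driving $N$ from $I$ into $q$, and, symmetrically, $Y_q\subseteq\LD{\rev{L}}$ is the set of left derivatives of $\rev{L}$ realised by words driving the reverse nfa $\rev{N}$ from its initial states $F$ into $q$. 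The claim to prove is that $\set{X_q\times Y_q}{q\in Q}$ is a biclique cover of $\rDR{L}$; discarding the $q$ with $X_q=\emptyset$ or $Y_q=\emptyset$ leaves a cover with at most $|Q|=\ns{L}$ members, whence $\dim{\rDR{L}}\le\ns{L}$. By \autoref{thm:ns_char} this is literally an application of the representation viewpoint: any boolean representation $\rho\colon\Sigma^*\to\JSL(S,S)$ of degree $\ns{L}$ extending $\kappa_L\o\mu_L$ yields, via the sketch proof of that theorem, such an $N$ with state set $J(S)$, so the bicliques of the cover are indexed by the join-irreducibles of an optimal extension of the canonical representation of $\Sigma^*$.

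The claim splits into three routine verifications, carried out in this order. \emph{(i) $\rDR{L}$ is well defined on pairs of derivatives.} Indeed $u\rev{v}\in L$ is equivalent to $\rev{v}\in u^{-1}L$, hence depends only on $u^{-1}L$, and --- using $\rev{(u\rev{v})}=v\,\rev{u}$ --- also to $\rev{u}\in v^{-1}\rev{L}$, hence depends only on $v^{-1}\rev{L}$; this legitimises comparing the products $X_q\times Y_q\subseteq\LD{L}\times\LD{\rev{L}}$ with $\rDR{L}$. \emph{(ii) $X_q\times Y_q\subseteq\rDR{L}$.} Given $u^{-1}L\in X_q$ and $v^{-1}\rev{L}\in Y_q$, choose $u_0,v_0$ with $u_0^{-1}L=u^{-1}L$, $v_0^{-1}\rev{L}=v^{-1}\rev{L}$, $q\in\delta_{u_0}[I]$ and $\delta_{\rev{v_0}}[\{q\}]\cap F\neq\emptyset$; splicing an $I$-to-$q$ path labelled $u_0$ onto a $q$-to-$F$ path labelled $\rev{v_0}$ exhibits $u_0\rev{v_0}\in L(N)=L$, i.e.\ $\rDR{L}(u^{-1}L,v^{-1}\rev{L})$ by (i). \emph{(iii) $\bigcup_{q\in Q}(X_q\times Y_q)\supseteq\rDR{L}$.} If $\rDR{L}(u^{-1}L,v^{-1}\rev{L})$ then $u\rev{v}\in L=L(N)$, so $N$ has an accepting run on $u\rev{v}$; letting $q$ be the state this run occupies right after the prefix $u$ gives $q\in\delta_u[I]$ and $\delta_{\rev{v}}[\{q\}]\cap F\neq\emptyset$, hence $(u^{-1}L,v^{-1}\rev{L})\in X_q\times Y_q$.

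No step is deep; the one that wants a little care is (i) together with the design choice behind $X_q$ and $Y_q$: a single state of $N$ may be reached by words with different left derivatives, and a single left derivative may be realised by words reaching different states, so $X_q$ must be defined as a set of \emph{derivatives}, and ``$u^{-1}L\in X_q$'' is independent of the representative $u$ only in the existential form used in (ii) --- which is exactly what the argument consumes. The real content is the choice of the $\ns{L}$-element index set for the cover: no invariant of the lattice $\jslLQ{L}$ alone can suffice, since $\ns{L}$ may be strictly smaller than $|J(\jslLQ{L})|$, and it is the state set of an optimal nfa --- equivalently, by \autoref{thm:ns_char}, the join-irreducibles of an optimal extension of the canonical representation of $\Sigma^*$ --- that supports the cover. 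The complementary, finer question of when $\dim{\rDR{L}}$, or its reduced variant $\rRDR{L}$ of \autoref{rem:conway}, actually equals $\ns{L}$ is precisely what the theory of subatomic nfas in the following sections is built to answer.
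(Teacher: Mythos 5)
Your proof is correct, but it is essentially the original combinatorial argument of Gruber and Holzer rather than the proof the paper gives: you read a biclique cover $\{X_q\times Y_q\}_{q\in Q}$ off a state-minimal nfa directly, with the three verifications (well-definedness of $\rDR{L}$ on derivative classes, each $X_q\times Y_q$ inside $\rDR{L}$, and covering by splitting an accepting run of $u\rev{v}$ at the state after the prefix $u$) all checking out; your invocation of \autoref{thm:ns_char} is really only commentary, since the direct nfa argument needs nothing from the representation framework. The paper instead advertises and gives a \emph{new algebraic} proof: it translates biclique covers into the set basis problem, proves that a family $C\seq\Pow(Y)$ has a set basis of size $k$ if and only if the generated semilattice $\langle C\rangle$ admits an extension $\langle C\rangle\monoto S$ with $|J(S)|\le k$ (the nontrivial direction using that the free semilattice $\Pow(Y)$ is injective in $\JSL$), identifies $\langle C_{\rDR{L}}\rangle\cong\jslLQ{L}$ via \autoref{thm:dep_thm}(1), and then concludes from \autoref{thm:ns_char}. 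What your route buys is brevity and self-containedness --- it needs only the definition of an nfa. What the paper's route buys is an exact characterization of $\dim{\rDR{L}}$ as the least degree of any \emph{plain} semilattice extension of $\jslLQ{L}$ (no equivariance for the $\Sigma^*$-action required), which makes the inequality $\dim{\rDR{L}}\le\ns{L}$ transparent as ``dropping the equivariance constraint can only shrink the minimum'' and explains structurally when the bound is loose; your closing remark gestures at this but your argument does not deliver it. Both proofs are valid; just be aware that yours is the one the theorem's attribution already credits to Gruber and Holzer, not the new one the paper is presenting.
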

We give a new algebraic proof of this result based on boolean representations.

\begin{proof}
\begin{enumerate}
\item The task of computing biclique covers is well-known to be equivalent to the \emph{set basis} problem. Given a family $C\seq \Pow(Y)$ of subsets of a finite set $Y$, a {set basis} for $C$ is a family $B\seq \Pow(Y)$ such that each element of $C$ can be expressed as a union of elements of $B$. A relation $R\seq X\times Y$ has a biclique cover of size $k$ iff the family $C_R=\{R[x] : x\in X\}\seq \Pow(Y)$ of neighborhoods of nodes in $X$ has a set basis of size $k$.
\item Given an instance $C\seq \Pow(Y)$ of the set basis problem, consider the $\cup$-subsemilattice $\langle C\rangle \seq \Pow(Y)$ generated by $C$, i.e.~the semilattice of all unions of sets in $C$. We claim that $C$ has a set basis of size at most $k$ iff there exists an extension of $\langle C\rangle$ of degree at most $k$, i.e.~a monomorphism $\langle C\rangle\monoto S$ into some finite semilattice $S$ with $\under{J(S)}\leq k$. 

\smallskip\noindent For the ``only if'' direction, suppose that $B\seq \Pow(Y)$ is a set basis of $C$ of size at most $k$. The the embedding $\langle C\rangle \monoto \langle B \rangle$  gives an extension of $\langle C\rangle$ with the desired property: since the semilattice $\langle B\rangle$ has a set of generators with at most $k$ elements, it has at most $k$ join-irreducibles.

\smallskip\noindent For the ``if'' direction, suppose that $m\colon \langle C\rangle\monoto S$ with $\under{J(S)}\leq k$ is given. Since the free semilattice $\Pow(Y)$ is an injective object of $\JSL$~\cite[Corollary 2.9]{hk71}, there exists a morphism $f\colon S\to \Pow(Y)$ extending the embedding $\langle C\rangle \monoto \Pow(Y)$. Consider the image $S'\seq \Pow(Y)$ of $f$, leading to the commutative diagram below:
\[
\xymatrix@R-1em{
\langle C\rangle \ar@{>->}[dr]_\seq \ar@{>->}[r]^m & S \ar[d]^<<<<f \ar@{->>}[r]^e & S' \ar@{>->}[dl]^\seq \\
& \Pow(Y) & 
}
\]   
We thus have $\langle C\rangle \seq S'\seq \Pow(Y)$. Every set of generators of the semilattice $S'$ is a basis of $C$. Since the morphism $e$ is surjective, we have $\under{J(S')}\leq \under{J(S)}\leq k$, i.e.~$S'$ has a set of generators with at most $k$ elements.
\item Let $C_{\rDR{L}}\seq \Pow(\LW{\rev{L}})$ be the instance of the set basis problem corresponding to the dependency relation $\rDR{L}\seq \LW{L} \times \LW{\rev{L}}$. Note that $\langle C_{\rDR{L}}\rangle$ consists of all $\rDR{L}[X]$ for $X\seq \LW{L}$. Thus, \autoref{thm:dep_thm}(1) shows that $\langle C_{\rDR{L}}\rangle \cong \jslLQ{L}$. In particular, every extension of the canonical boolean representation of $\Sigma^*$ yields an extension of the semilattice $\langle C_{\rDR{L}} \rangle$ of the same degree. Therefore, by part (1) and (2) and \autoref{thm:ns_char}, we have $\dim{\rDR{L}} \leq \ns{L}$, as required.
\end{enumerate}
\end{proof}
\autoref{thm:ns_char}  motivates the following definition, which can be considered the key concept of our paper:
\begin{defn}The \emph{nondeterministic syntactic complexity} 
  $\nsyn{L}$ of a regular language $L$ is the least degree of any boolean representation of $\Syn{L}$ extending the canonical boolean representation $\kappa_L\colon \Syn{L}\to \JSL(\LQ{L},\LQ{L})$.
\end{defn}
Just like the degrees of boolean representations of $\Sigma^*$ determine the state complexity of nfas, we will provide an automata-theoretic characterization of $\nsyn{L}$ in terms of \emph{subatomic} nfas in \autoref{thm:nmu_eq_subatomic} below.


\begin{defn}
An nfa accepting the language $L$ is called 
\begin{enumerate}
\item \emph{atomic} if each state accepts a language from $\jslBLQ{L}$, and
\item \emph{subatomic} if each state accepts a language from $\jslBLRQ{L}$.
\end{enumerate}
\end{defn}
The notion of an atomic nfa goes back to Brzozowski and Tamm~\cite{TheoryOfAtomataBrzTamm2014}, as does the following characterization. 

\begin{notation} For any nfa $N$, let $\rsc{N}$ denote the dfa obtained via the \emph{reachable subset construction}, i.e.~the dfa-reachable part of $\Pow(N)$.
\end{notation}

\begin{theorem}
  \label{thm:atomic_nfa_char}
  An nfa $N$ is atomic iff $\rsc{\rev{N}}$ is a minimal dfa.
\end{theorem}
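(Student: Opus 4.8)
The plan is to exploit the duality machinery already assembled, specifically \autoref{prop:blqdual}, together with the dfa-level characterization of minimal dfas as reachable-and-simple. First I would reformulate atomicity: an nfa $N$ is atomic iff every state of $N$ accepts a language in $\jslBLQ{L}$, which — passing through the subset construction — is equivalent to saying that the simplification $\simple{\Pow(N)}$ embeds into the $\JSL$-dfa $\jslBLQ{L}$. Indeed $\simple{\Pow(N)}$ is the subsemilattice of $\Pow(\Sigma^*)$ generated by the languages accepted by single states of $N$, and it sits inside $\jslBLQ{L}$ precisely when each such language is a boolean combination of left derivatives. So atomicity of $N$ amounts to: $\jslLangs{N}\seq \jslBLQ{L}$ as subsemilattices of $\Pow(\Sigma^*)$.

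Next I would dualize. Applying $(-)^\op$ and using \autoref{lem:nfarev} (which gives $[\Pow(N)]^\op\cong\Pow(\rev{N})$) together with \autoref{lem:astarprops}(3) (reachable dualizes to simple, so $[\reach{\Pow(N)}]^\op\cong\simple{\Pow(\rev N)}$ and dually), the inclusion $\simple{\Pow(N)}\hookrightarrow \jslBLQ{L}$ corresponds under duality to a surjection $[\jslBLQ{L}]^\op \twoheadrightarrow [\simple{\Pow(N)}]^\op \cong \reach{\Pow(\rev N)}$. By \autoref{prop:blqdual} we have $[\jslBLQ{L}]^\op\cong\Pow(\minDfa{\rev L})$. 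Chasing the $\JSL$-dfa structure through these isomorphisms, the surjection $\Pow(\minDfa{\rev L})\twoheadrightarrow \reach{\Pow(\rev N)}$ restricts, on the respective sets of join-irreducibles, to a map of dfas from the state set $\LW{\rev L}$ of $\minDfa{\rev L}$ onto the dfa-reachable part of $\Pow(\rev N)$, i.e.\ onto $\rsc{\rev N}$. Unwinding which languages these states accept — elements of $\LW{\rev L}$ accept left derivatives of $\rev L$ — one sees this map is exactly the canonical (reachable, language-preserving) comparison from $\minDfa{\rev L}$ to $\rsc{\rev N}$; concretely, $u^{-1}(\rev L)$ maps to the subset of states of $\rev N$ reachable from the initial set along $u$.

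The heart of the argument is then: $N$ is atomic iff this comparison dfa morphism $\minDfa{\rev L}\to \rsc{\rev N}$ is surjective iff it is an isomorphism iff $\rsc{\rev N}$ is minimal. For the first "iff" I would show that atomicity forces the comparison to be surjective: the languages generating $\jslLangs{N}$ lie in $\jslBLQ{L}$, hence correspond under \autoref{prop:blqdual} to join-irreducibles of $[\jslBLQ{L}]^\op$, which are precisely the states of $\minDfa{\rev L}$, so every state of $\rsc{\rev N}$ is hit; conversely, surjectivity of the comparison means every state-language of $\rev N$ after subset construction is the image of a left derivative of $\rev L$ under the dualizing isomorphism, i.e.\ lies in the image of $\Pow(\minDfa{\rev L})\to\Pow(\rev N)$, which pulls back to $\jslBLQ{L}$. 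The middle "iff" is the standard fact that a surjective dfa morphism between reachable dfas accepting the same language — here both $\minDfa{\rev L}$ and $\rsc{\rev N}$ accept $\rev L$ and are dfa-reachable — is an isomorphism as soon as the domain is simple, and $\minDfa{\rev L}$ is simple by definition; the final "iff" is just the characterization of $\minDfa{\rev L}$ as the unique reachable simple (hence minimal) dfa for $\rev L$.

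The main obstacle I anticipate is bookkeeping: verifying that the chain of duality isomorphisms ($\Pow(\rev N)\cong[\Pow(N)]^\op$, $[\jslBLQ{L}]^\op\cong\Pow(\minDfa{\rev L})$, and the induced maps on join-irreducibles) genuinely carries the $\JSL$-dfa structure — initial state, final prime filter, and transitions — onto the intended dfa structures on the nose, so that "surjective as $\JSL$-dfa morphism" translates to "surjective dfa comparison map $\minDfa{\rev L}\to\rsc{\rev N}$." In particular one must check that the join-irreducibles of $\reach{\Pow(\rev N)}$ are exactly the singletons $\{q\}$ with $q$ dfa-reachable in $\rev N$, so that the image of join-irreducibles is $\rsc{\rev N}$ rather than some larger or smaller subautomaton; this uses $J(\Pow(Q))\cong Q$ from \autoref{rem:jsldfa_vs_nfa} together with the fact that $\reach{(-)}$ at the $\JSL$ level and $\mathsf{reach}$ at the dfa level interact correctly under $J$. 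Once the dictionary is nailed down the logical skeleton above is short.
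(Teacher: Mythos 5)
Your proposal is correct and follows essentially the same route as the paper: both reduce atomicity to the embedding $\simple{\Pow(N)}\monoto\jslBLQ{L}$, dualize via \autoref{lem:nfarev}, \autoref{lem:astarprops} and \autoref{prop:blqdual} to a morphism $\Pow(\minDfa{\rev{L}})\to\Pow(\rev{N})$, pass to the underlying dfa morphism $\minDfa{\rev{L}}\to\rsc{\rev{N}}$, and conclude by the reachable-and-simple characterization of minimality. The only caveat is a loose phrase --- the surjection does not literally restrict to join-irreducibles on both sides (images of singletons need not be singletons in $\Pow(Q)$); what you actually use, as the paper does via the adjunction between dfas and $\JSL$-dfas, is the dfa morphism obtained by precomposing with the unit $q\mapsto\{q\}$.
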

We present a new conceptual proof, interpreting this theorem as an instance of the self-duality of $\JSL$-dfas.
\begin{proof}[Sketch]
Let $L$ be the language accepted by $N$. We establish the theorem by showing each of the following statements to be equivalent to the next one:
\begin{enumerate}
\item $N$ is atomic.
\item There exists a $\JSL$-automata morphism from $\Pow(N)$ to $\jslBLQ{L}$. 
\item There exists a $\JSL$-automata morphism from $\Pow(\minDfa{\rev{L}})$ to ${\Pow(\rev{N})}$.
\item There exists a dfa morphism from $\minDfa{\rev{L}}$ to ${\Pow(\rev{N})}$.
\item There exists a dfa morphism from $\minDfa{\rev{L}}$ to $\rsc{\rev{N}}$.
\item $\rsc{\rev{N}}$ is a minimal dfa.
\end{enumerate}
\newcommand{\lemmaautorefname}{Lemmas}
The key step is (2)$\Lra$(3), which follows via duality from
\autoref{lem:nfarev} and~\ref{lem:astarprops}, and \autoref{prop:blqdual}. All remaining equivalences follow from the definitions.\qed
\end{proof}
The next theorem gives an analogous characterization of subatomic nfas. Again, the proof is based on duality. 

\begin{theorem}
  \label{thm:subatomic_nfa_char}
  An nfa $N$ accepting the language $L$ is subatomic iff the transition monoid of $\rsc{\rev{N}}$ is isomorphic to the syntactic monoid $\Syn{\rev{L}}$.
\end{theorem}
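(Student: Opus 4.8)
The plan is to mirror the proof of \autoref{thm:atomic_nfa_char}, replacing the minimal dfa $\minDfa{\rev{L}}$ by the syntactic monoid $\Syn{\rev{L}}$ viewed as a dfa, and \autoref{prop:blqdual} by \autoref{prop:blrqdual}. Writing $L$ for the language of $N$, I would show that the following are equivalent:
\begin{enumerate}
\item $N$ is subatomic;
\item there is a $\JSL$-automata morphism $\Pow(N)\to \jslBLRQ{L}$;
\item there is a $\JSL$-automata morphism $\Pow(\Syn{\rev{L}})\to \Pow(\rev{N})$;
\item there is a dfa morphism $\Syn{\rev{L}}\to \Pow(\rev{N})$;
\item there is a dfa morphism $\Syn{\rev{L}}\to \rsc{\rev{N}}$;
\item $\tm{\rsc{\rev{N}}}\cong \Syn{\rev{L}}$.
\end{enumerate}

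Steps (1)--(5) are routine adaptations of the atomic case. For (1)$\Leftrightarrow$(2) I would use that $\simple{\Pow(N)}$ and $\jslBLRQ{L}$ are both $\JSL$-subautomata of $\Fin{L}$, that the former is generated as a semilattice by the languages accepted by single states of $N$, and that $\jslBLRQ{L}$ is closed under union; hence $N$ is subatomic iff $\simple{\Pow(N)}\seq\jslBLRQ{L}$, which (as both are simple subautomata of $\Fin{L}$) is equivalent to the existence of a $\JSL$-automata morphism $\Pow(N)\to\jslBLRQ{L}$. Step (2)$\Leftrightarrow$(3) is the duality step: apply the self-duality equivalence $\Aut{\JSL_\f}\simeq\Aut{\JSL_\f}^{\op}$ and transport along $[\Pow(N)]^{\op}\cong\Pow(\rev{N})$ (\autoref{lem:nfarev}) and $[\jslBLRQ{L}]^{\op}\cong\Pow(\Syn{\rev{L}})$ (\autoref{prop:blrqdual}), with \autoref{lem:astarprops} keeping track of accepted languages. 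Steps (3)$\Leftrightarrow$(4) and (4)$\Leftrightarrow$(5) are exactly as in \autoref{thm:atomic_nfa_char}: a $\JSL$-automata morphism out of the subset construction $\Pow(\Syn{\rev{L}})$ corresponds, by restriction to and extension from the singleton generators, to a dfa morphism out of $\Syn{\rev{L}}$; and since $\Syn{\rev{L}}$ is reachable as a dfa, any dfa morphism from it into $\Pow(\rev{N})$ has image inside the reachable part $\rsc{\rev{N}}$.

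The only genuinely new step is (5)$\Leftrightarrow$(6), which I would reduce to three elementary facts. (a) Since the dfa $\rsc{\rev{N}}$ accepts $\rev{L}$, its transition congruence refines the syntactic congruence of $\rev{L}$; hence there is a canonical \emph{surjective} monoid morphism $\phi\colon\tm{\rsc{\rev{N}}}\epito\Syn{\rev{L}}$ commuting with the quotient maps from $\Sigma^*$, and therefore compatible with initial and final states, i.e.\ a surjective dfa morphism. (b) The right regular representation of a monoid is faithful, so $\tm{\Syn{\rev{L}}}\cong\Syn{\rev{L}}$. (c) A surjective dfa morphism induces a surjective monoid morphism between the transition monoids, and for any reachable dfa $D$ the evaluation map $\tm{D}\to D$, $\delta_w\mapsto\delta_w(\text{initial state})$, is a surjective dfa morphism. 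Given these, (6)$\Rightarrow$(5): an abstract isomorphism $\tm{\rsc{\rev{N}}}\cong\Syn{\rev{L}}$ of finite monoids forces the surjection $\phi$ of (a) to be bijective, hence a dfa isomorphism; post-composing its inverse with the evaluation map of (c) yields a dfa morphism $\Syn{\rev{L}}\to\rsc{\rev{N}}$. And (5)$\Rightarrow$(6): a dfa morphism between the reachable dfas $\Syn{\rev{L}}$ and $\rsc{\rev{N}}$ is surjective, so by (c) and (b) it induces a surjective monoid morphism $\Syn{\rev{L}}\epito\tm{\rsc{\rev{N}}}$; together with $\phi$ this gives surjections of finite monoids $\Syn{\rev{L}}\epito\tm{\rsc{\rev{N}}}\epito\Syn{\rev{L}}$, forcing $|\tm{\rsc{\rev{N}}}|=|\Syn{\rev{L}}|$, whence $\tm{\rsc{\rev{N}}}\cong\Syn{\rev{L}}$.

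I expect the main obstacle to be precisely the passage in (6)$\Rightarrow$(5) from an abstract monoid isomorphism to a dfa isomorphism: this works only because the canonical surjection $\phi$ of fact (a) is automatically compatible with the quotient maps from $\Sigma^*$, and hence with the initial state ($=$ unit) and final states, which is exactly why the syntactic monoid, rather than an arbitrary transition monoid, is the correct invariant to compare against $\tm{\rsc{\rev{N}}}$. Everything else is bookkeeping transported from \autoref{thm:atomic_nfa_char}.
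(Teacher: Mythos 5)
Your proof is correct, but it takes a genuinely different route from the paper's at two points. For the duality step you dualize the morphism $\Pow(N)\to\jslBLRQ{L}$ directly, obtaining $\Pow(\Syn{\rev{L}})\to\Pow(\rev{N})$ from \autoref{lem:nfarev} and \autoref{prop:blrqdual} alone; the paper instead first replaces the target domain by the right-derivative closure $\rqc{\simple{\Pow(N)}}$ (legitimate since $\jslBLRQ{L}$ is closed under right derivatives) and then invokes \autoref{prop:tsdual} to land on a morphism $\Pow(\Syn{\rev{L}})\to\ts{\reach{\Pow(\rev{N})}}$, whose dfa-reachable part is $\tm{\rsc{\rev{N}}}$. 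The price you pay for skipping the transition-semiring duality is that your chain ends at a dfa morphism $\Syn{\rev{L}}\to\rsc{\rev{N}}$ rather than $\Syn{\rev{L}}\to\tm{\rsc{\rev{N}}}$, so your final equivalence needs the extra monoid-theoretic bookkeeping you supply: the canonical surjection $\phi\colon\tm{\rsc{\rev{N}}}\epito\Syn{\rev{L}}$ being a dfa morphism, faithfulness of the right regular representation (so that $\tm{\Syn{\rev{L}}}\cong\Syn{\rev{L}}$), the fact that surjective dfa morphisms induce surjective morphisms of transition monoids, and the evaluation morphism $\tm{D}\to D$ for reachable $D$. All of these check out, and the concluding cardinality argument (two surjections between finite monoids force isomorphism) is the same device the paper uses in its step (6)$\Leftrightarrow$(7). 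What your version buys is independence from \autoref{prop:tsdual}, the most technical duality computation in the paper; what the paper's version buys is that, once one arrives at a dfa morphism between the two monoids-viewed-as-dfas, the isomorphism falls out almost immediately from the initiality of $\Sigma^*$ and the universal property of the syntactic monoid, and the transition semiring stays in view, which matters for the surrounding narrative.
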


\begin{proof}[Sketch]
Each of the following statements is equivalent to the next one:
\begin{enumerate}
\item $N$ is subatomic.
\item There exists a $\JSL$-dfa morphism from $\Pow(N)$ to $\jslBLRQ{L}$.
\item There exists a $\JSL$-dfa morphism from $\rqc{\simple{P(N)}}$ to $\jslBLRQ{L}$.
\item There exists a $\JSL$-dfa morphism from $\Pow(\Syn{\rev{L}})$ to $\ts{\reach{\Pow(\rev{N})}}$.
\item There exists a dfa morphism from $\Syn{\rev{L}}$ to $\ts{\reach{\Pow(\rev{N})}}$.
\item There exists a dfa morphism from $\Syn{\rev{L}}$ to $\tm{\rsc{\rev{N}}}$.
\item The monoids $\Syn{\rev{L}}$ and $\tm{\rsc{\rev{N}}}$ are isomorphic.
\end{enumerate}
The equivalence (3)$\Lra$(4) follows via duality from
\autoref{lem:nfarev}, \autoref{prop:blrqdual} and~\autoref{prop:tsdual}. All remaining equivalences follow from the definitions.\qed
\end{proof}
We are prepared to state the main result of our paper, an automata-theoretic characterization of the nondeterministic syntactic complexity:
\begin{theorem}\label{thm:nmu_eq_subatomic} For every regular language $L$, the nondeterministic syntactic complexity $\nsyn{L}$ is the least number of states of any subatomic nfa accepting $L$.
\end{theorem}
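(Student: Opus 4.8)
The plan is to prove the two inequalities $\nsyn{L} \geq k_{\min}$ and $\nsyn{L} \leq k_{\min}$, where $k_{\min}$ denotes the least number of states of any subatomic nfa accepting $L$, by combining the representation-theoretic machinery of \autoref{thm:ns_char} with the automata-theoretic characterization of subatomic nfas in \autoref{thm:subatomic_nfa_char}. The key conceptual point is that a boolean representation of $\Syn{L}$ extending $\kappa_L$ is the same data as a $\JSL$-dfa for $L$ whose transition semiring has syntactic monoid $\Syn{L}$, together with an embedding of $\jslLQ{L}$; dually (via \autoref{prop:blrqdual} and \autoref{prop:tsdual}), this is governed by the subautomaton $\jslBLRQ{L}$ of $\Fin{L}$, which is exactly the acceptance constraint defining subatomic nfas.

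First I would make precise the correspondence between extensions of $\kappa_L$ and $\JSL$-dfas. Given a boolean representation $\rho\colon \Syn{L}\to \JSL(S,S)$ extending $\kappa_L$ via an injective equivariant map $h\colon \jslLQ{L}\monoto S$, equivariance forces the transitions $\rho([a]_L)$ to restrict correctly on the image of $h$, so one equips $S$ with a $\JSL$-dfa structure (initial state $h(L)$, final states a suitable prime filter extending the one on $\jslLQ{L}$) making $h$ an automata morphism; since automata morphisms preserve accepted languages, $S$ accepts $L$, and by \autoref{rem:jsldfa_vs_nfa} the nfa $J(S)$ of join-irreducibles is a $\deg(\rho)$-state nfa for $L$. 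The extra input over \autoref{thm:ns_char} is that $\rho$ factors through $\Syn{L}$, i.e.\ its transition behaviour only depends on the syntactic congruence class of a word; dually this means the simplification of the transition semiring of $S$, which is $\rqc{\simple{\Pow(J(S))}}$ up to the duality $[\ts{A}]^\op\cong\rqc{A^\op}$, lives inside $\jslBLRQ{L}$ — equivalently, every state of $J(S)$ accepts a language in $\jslBLRQ{L}$, i.e.\ $J(S)$ is subatomic. Conversely, starting from a $k$-state subatomic nfa $N$ accepting $L$, the simple quotient $\jslLangs{N}=\simple{\Pow(N)}$ of the subset construction is a subsemilattice of $\jslBLRQ{L}$ (subatomicity), it contains $\jslLQ{L}$ (since $\jslLQ{L}$ is the minimal $\JSL$-dfa, hence a subautomaton of any simple acceptor of $L$), and its transition semiring therefore recognizes $L$ via a monoid factoring through $\Syn{L}$; restricting this action to $\jslLangs{N}$ and precomposing the inclusion $\jslLQ{L}\monoto\jslLangs{N}$ yields an extension of $\kappa_L$ of degree $|J(\jslLangs{N})|\leq |J(\Pow(N))| = |N| = k$.

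For the degree bookkeeping I would lean on the already-established facts: $\Pow(N)$ has exactly $|N|$ join-irreducibles, and the simplification map $\Pow(N)\epito\simple{\Pow(N)}$ is surjective, so it does not increase the number of join-irreducibles — the same estimate already used in the sketch of \autoref{thm:ns_char} and in the proof of \autoref{thm:dim_vs_ns}. Symmetrically, for the forward direction one uses that the nfa $J(S)$ has $|J(S)|=\deg(\rho)$ states. Thus both inequalities reduce to showing that the "subatomic" side-condition on the nfa matches exactly the "factors through $\Syn{L}$" side-condition on the representation; this is where \autoref{thm:subatomic_nfa_char} does the work, translated through the duality $[\jslBLRQ{L}]^\op\cong\Pow(\Syn{\rev L})$ and $[\ts{A}]^\op\cong\rqc{A^\op}$, together with \autoref{lem:nfarev}. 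Concretely: $N$ subatomic $\iff$ every state of $N$ accepts a language in $\jslBLRQ{L}$ $\iff$ $\simple{\Pow(N)}\hookrightarrow\jslBLRQ{L}$ $\iff$ (dualizing) $\Pow(\Syn{\rev L})\twoheadrightarrow \ts{\reach{\Pow(\rev N)}}$, which after passing to transition monoids says the transition semiring of $\Pow(N)$, hence the syntactic-style quotient of $\Sigma^*$ acting on $\jslLangs{N}$, is a quotient of $\Syn{L}$ — precisely what is needed for an equivariant map out of a representation of $\Syn{L}$.

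I expect the main obstacle to be the careful verification that the two constructions are mutually inverse at the level of \emph{minimum} values — i.e.\ that an optimal subatomic nfa and an optimal extension of $\kappa_L$ really have matching parameters, with no off-by-one slack hidden in the passage through $\simple{(-)}$ and $J(-)$. In particular one must check that when $\rho$ has minimal degree the resulting nfa $J(S)$ is genuinely subatomic (not merely that \emph{some} nfa with that state count is), which requires knowing that the simplification of its subset construction embeds into $\jslBLRQ{L}$; this follows because the transition semiring of $S$ is a quotient of the free semiring $\Pow_\f(\Sigma^*)$ through an action that factors through $\Syn{L}$, so its simplification — being a simple acceptor of $L$ recognized by $\Syn{L}$ — must lie in the largest such subautomaton of $\Fin L$, namely $\jslBLRQ{L}$. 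Once this "largest subautomaton recognized by $\Syn{L}$ is $\jslBLRQ{L}$" statement is pinned down (it is essentially the content of \autoref{prop:blrqdual}), the rest is routine diagram-chasing in $\Aut{\JSL}$.
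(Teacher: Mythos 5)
Your proposal is correct and uses the same overall decomposition as the paper: both inequalities, with the same two constructions ($\jslLangs{N}=\simple{\Pow(N)}$ carrying a left-derivative action of $\Syn{L}$ in one direction, and the nfa of join-irreducibles of the extended $\JSL$-dfa in the other, with the degree bookkeeping exactly as in \autoref{thm:ns_char}). Where you diverge is in how the pivotal matching of side-conditions is justified. The paper does \emph{not} route this step through \autoref{thm:subatomic_nfa_char} or the duality results at all: it verifies directly that $v\equiv_L w$ implies $v^{-1}(s^{-1}Lt^{-1})=w^{-1}(s^{-1}Lt^{-1})$ (so subatomicity makes the action of $\Syn{L}$ on $\jslLangs{N}$ well defined), and conversely writes each syntactic class explicitly as $[w]_L=\bigcap_{xwy\in L}x^{-1}Ly^{-1}\cap\bigcap_{xwy\notin L}\overline{x^{-1}Ly^{-1}}$ to see that every state of the extended $\JSL$-dfa accepts a language in $\jslBLRQ{L}$. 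You instead derive both facts from the duality package ($[\jslBLRQ{L}]^\op\cong\Pow(\Syn{\rev{L}})$, $[\ts{A}]^\op\cong\rqc{A^\op}$, \autoref{lem:nfarev}), reading off that the atoms of $\jslBLRQ{L}$ are precisely the syntactic classes, so that $\jslBLRQ{L}$ is the largest subautomaton of $\Fin{L}$ on which the $\Sigma^*$-action factors through $\Syn{L}$. Both routes are sound; the paper's is shorter and self-contained, while yours makes the conceptual link to \autoref{thm:subatomic_nfa_char} explicit at the cost of carrying the reversal $L\leftrightarrow\rev{L}$ through every step. Your worry about hidden slack in the minimization is unfounded for exactly the reason you give: $\under{J(\simple{\Pow(N)})}\le\under{J(\Pow(N))}=\under{N}$ since simplification is a surjection, and $J(S)$ has exactly $\deg(\rho)$ states.
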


\begin{proof}[Sketch]
\begin{enumerate}
\item Let $N$ be a $k$-state subatomic nfa accepting the language $L$. As in the proof of \autoref{thm:ns_char}, we consider the semilattice $\jslLangs{N}=\simple{\Pow(N)}$. Then
\[ \rho\colon \Syn{L}\to \JSL(\jslLangs{N},\jslLangs{N}),\quad [w]_L \mapsto \lambda K.w^{-1}K,\]
is a representation of $\Syn{L}$ of degree at most $k$ extending $\kappa_L$.
\item Conversely, let $\rho\colon \Syn{L}\to \JSL(S,S)$ be a boolean representation extending $\kappa_L$, and let $h\colon \jslLQ{Q}\monoto S$ be the embedding. As in the proof of \autoref{thm:ns_char}, we can equip $S$ with the structure of a $\JSL$-dfa making $h$ an automata morphism. Its nfa of join-irreducibles, see \autoref{rem:jsldfa_vs_nfa}, is a subatomic nfa accepting $L$ with $\deg(\rho)$ states.\qed
\end{enumerate}
\end{proof}
We conclude this section with the observation that the state complexity of unrestricted nfas, subatomic nfas and atomic nfas generally differs:

\begin{expl}[Subatomic more succinct than atomic]
Consider the language $L$ accepted by the nfa $N$ shown below, along with the minimal dfas for $L$ and $\rev{L}$. Each automaton has exactly one initial state, namely $0$.
  {\tiny
  \[
    \begin{tabular}{lll}
      \xymatrix@=14pt {
        & *++[o][F-]{0} \ar@<2pt>[dl]^{a,b} \ar@<2pt>[dr]^b \\
        *++[o][F-]{1} \ar@<2pt>[dr]^b \ar@<2pt>[ur]^b & & *++[o][F=]{2} \ar@<2pt>[ul]^a \ar@(u,r)^a \ar[dl]^a \\
        & *++[o][F-]{3} \ar@(ur,dr)^a \ar@<2pt>[ul]^a
      }
      &
      \xymatrix@=14pt {
        *++[o][F-]{1} \ar[dr]^b \ar@<2pt>[d]^a & *++[o][F-]{0} \ar[l]_a \ar[r]^b  & *++[o][F=]{2} \ar@{<->}[dl]_b \ar@<2pt>[d]^a
        \\
        *++[o][F-]{3} \ar@(dl,ul)^a \ar@<2pt>[u]^b & *++[o][F-]{4} \ar[dl]_a & *++[o][F=]{5} \ar@<2pt>[u]^b \ar[d]_a
        \\
        *++[o][F-]{6} \ar[u]^a \ar@<2pt>[r]^b & *++[o][F-]{8} \ar@<2pt>[l]^a \ar[r]^b & *++[o][F=]{7} \ar@(ur,dr)^{a,b}
      }
      &
      \xymatrix@=14pt {
        *++[o][F-]{0} \ar@(dl,ul)^a \ar@<2pt>[r]^b & *++[o][F=]{1} \ar@<2pt>[l]^a \ar@<2pt>[r]^b & *++[o][F-]{2} \ar@<2pt>[l]^a \ar@<2pt>[dd]^b
        \\ \\
        *++[o][F=]{5} \ar@(dl,ul)^{a,b} & *++[o][F-]{4} \ar[l]_-{a,b} & *++[o][F=]{3} \ar@<2pt>[uu]^b \ar[l]^a
      }
      \\
      \\
      \multicolumn{1}{c}{\small $N$} &
      \multicolumn{1}{c}{\small $\minDfa{L}$} &
      \multicolumn{1}{c}{\small $\minDfa{\rev{L}}$}
    \end{tabular}
  \]
  }\noindent
 Brzozowski and Tamm~\cite{TheoryOfAtomataBrzTamm2014} showed that there is no atomic nfa with four states accepting $L$. However, $N$ is subatomic: one can verify that the transition monoids of $\minDfa{\rev{L}}$ and $\rsc{\rev{N}}$ both have $22$ elements. Since the former is the syntactic monoid of $\rev{L}$, they are isomorphic, and so \autoref{thm:subatomic_nfa_char} applies.
%
%
%
\end{expl}

\begin{expl}[Subatomic less succinct than general nfas]
  There is a regular language for which no state-minimal nfa is subatomic: 
  \[
    L \;:=\; \{\, a^n \;:\; n\in \Nat,\, n \neq 5 \,\} \seq \{a\}^*.
  \]
  It is accepted by the following nfa:
{\tiny
  \[
    \xymatrix {
    \ar[r] & *++[o][F=]{} \ar[dl]_a & & *++[o][F=]{} \ar@<-2pt>[d]_a & \ar[l] \\
    *++[o][F=]{} \ar[rr]_a & & *++[o][F-]{} \ar[ul]_a \ar[r]^a & *++[o][F-]{} \ar@<-2pt>[u]_a  
    }
 \]
}\noindent
An exhaustive search shows that no subatomic nfa with five states accepts $L$. In fact, $L$ is the unique (!) unary language with $\ns{L}\leq 5$ and $\ns{L}<\nsyn{L}$. Moreover, the above nfa and its reverse are the only state-minimal nfas for $L$.
\end{expl}

\section{Applications}\label{S:examples}

While subatomic nfas are generally less succinct then unrestricted ones, all structural results concerning nondeterministic state complexity we have encountered in the literature are actually about nondeterministic syntactic complexity: they implicitly identify classes of languages where the two measures coincide. In the present section, we illustrate this in a few selected applications.

\subsection{Unary languages}

For unary languages $L\seq\{a\}^*$, two-sided derivatives are left derivatives. Thus, a unary nfa is atomic iff it is subatomic.


\begin{expl}[Cyclic unary languages]
  \label{ex:cyclic_unary}
  A unary language $L$ is \emph{cyclic} if its minimal dfa is a cycle \cite{Gramlich03probabilisticand}.  We claim that $\ns{L} = \nsyn{L}$. To see this, let $d:=\under{\LW{L}}$ be the \emph{period} (i.e. number of states) of the minimal dfa. By Fact 1 of \cite{Gramlich03probabilisticand} (originally from \cite{MinNfaUnary91}) every state-minimal nfa $N$ accepting $L$ is a disjoint union of cyclic dfas whose periods divide $d$.\footnote{In \cite{Gramlich03probabilisticand} nfas are restricted to have a single initial state and so are distinguished from unions of dfas; the latter are valid nfas from our perspective.} Then $|\rsc{\rev{N}}| = d$: we have $|\rsc{\rev{N}}| \geq d$ since $\rsc{\rev{N}}$ is a dfa accepting $L=\rev{L}$ and $d$ is the size of the minimal dfa for $L$, and
 $|\rsc{\rev{N}}| \leq d$  because  after $d$ steps, each cycle will be back in its initial state. Thus $N$ is atomic by Theorem \ref{thm:atomic_nfa_char} and hence subatomic. 
\end{expl}
We deduce the following result for (not necessarily unary) regular languages:

\begin{theorem}  \label{thm:cyclic_syn_mon}
If $\Syn{L}$ is a cyclic group, then
$\ns{L}=\nsyn{L}$.
\end{theorem}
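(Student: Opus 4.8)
The plan is to reduce the claim to the unary case handled in \autoref{ex:cyclic_unary}. Suppose $\Syn{L}$ is a cyclic group of order $d$, generated by some element $g=[u]_L$ for a word $u\in\Sigma^*$. The first step is to observe that the cyclic structure of $\Syn{L}$ forces the minimal dfa $\minDfa{L}$ to be highly constrained: since every element of $\Syn{L}$ is a power of $g$, every transition map $\delta_w\colon \LW{L}\to\LW{L}$ equals some power of the bijection $\delta_u$; in particular each $\delta_a$ ($a\in\Sigma$) is a permutation of $\LW{L}$, and the transition monoid is the cyclic \emph{group} generated by the single permutation $\delta_u$. Hence $\minDfa{L}$, as an abstract automaton, decomposes into a disjoint union of cycles, exactly as in the unary situation. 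I would then pick any letter $a\in\Sigma$ with $[a]_L$ a generator of $\Syn{L}$ (if no single letter generates, replace $a$ by the word $u$ and argue with $\delta_u$ throughout — the transition structure only sees the action of this one permutation), and let $K:=L(\minDfa{L},\text{-})$ restricted to the alphabet $\{a\}$, i.e.\ consider the unary language $K=\{a^n : \delta_{a^n}(\text{start})\in F\}$ recognised by tracing the $a$-action from the initial state. The point is that $\minDfa{K}$ is literally the single cycle through the initial state of $\minDfa{L}$, so $\Syn{K}$ is a cyclic group whose order divides $d$.

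The second step is to transport the equality $\ns{K}=\nsyn{K}$ (known from \autoref{ex:cyclic_unary}, since $K$ is cyclic unary) back to $L$. For this I would use the characterisations of \autoref{thm:subatomic_nfa_char} and \autoref{thm:atomic_nfa_char}. Let $N$ be any state-minimal nfa for $L$. Because $\Syn{L}$ is a group, the reachable subset construction $\rsc{\rev N}$ has a transition monoid which is a quotient of $\Sigma^*$ acting by permutations on a finite set, hence is a finite group; moreover $\rsc{\rev N}$ accepts $\rev L$, whose minimal dfa $\minDfa{\rev L}$ has transition monoid $\Syn{\rev L}\cong\Syn{L}$ (the syntactic monoid is closed under reversal up to the isomorphism $[w]_L\mapsto[\rev w]_{\rev L}$). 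The key sub-claim is then: a dfa accepting $\rev L$ whose transition monoid is a finite group must have transition monoid \emph{equal} to $\Syn{\rev L}$ — i.e.\ one cannot have a strictly larger cyclic group acting, because any dfa accepting $\rev L$ factors through $\minDfa{\rev L}$, and for group actions a cover of automata induces a surjection of transition monoids, which between cyclic groups of the same minimal acting set... here I must be careful: the cover goes the other way. Concretely, $\minDfa{\rev L}$ is a quotient of $\rsc{\rev N}$ as a dfa, so $\tm{\minDfa{\rev L}}$ is a quotient of $\tm{\rsc{\rev N}}$; since $\tm{\minDfa{\rev L}}=\Syn{\rev L}$ is already the syntactic (i.e.\ smallest recognising) monoid and $\tm{\rsc{\rev N}}$ recognises the same language, the two are isomorphic exactly when $\rsc{\rev N}$ is \emph{simple as a dfa up to the transition-monoid quotient} — equivalently when $\rsc{\rev N}$ has no strictly larger group acting, which by the period-divides-$d$ bound from the cyclic decomposition of $N$ (inherited because every $\delta_a$ in $N$'s subset construction has order dividing $d$) gives $|\tm{\rsc{\rev N}}|\le d=|\Syn{\rev L}|$, forcing equality. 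Then \autoref{thm:subatomic_nfa_char} shows $N$ is subatomic, so $\ns{L}\ge\nsyn{L}$; the reverse inequality $\nsyn{L}\ge\ns{L}$ holds for every language since a subatomic nfa is in particular an nfa.

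The main obstacle I anticipate is the bookkeeping around the two reversals and the two directions of the automata cover: one must verify that ``$\Syn{L}$ cyclic group'' really does propagate through the subset construction of $\rev N$ to bound $|\tm{\rsc{\rev N}}|$ by $d$, rather than merely bounding it by $|\rsc{\rev N}|!$. The clean way is probably to avoid the unary detour altogether and argue directly: since $\Syn{L}$ is an abelian group, so is $\Syn{\rev L}$ (they are isomorphic), hence $\minDfa{\rev L}$ is a disjoint union of cycles; any state-minimal nfa $N$ for $L$ gives, via $\rev N$ and the reachable subset construction, a dfa accepting $\rev L$, and the crux is that its transition monoid is generated by commuting permutations whose orders divide the exponents occurring in $\Syn{\rev L}$, so it injects into $\Syn{\rev L}$ and therefore \emph{equals} it. This last injectivity is where I expect to spend the real effort, and it is essentially the group-theoretic fact that a permutation group acting on a set, all of whose element-orders divide those of a fixed abelian group recognising the language, cannot be larger than that group once both recognise the same language. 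Granting that, \autoref{thm:subatomic_nfa_char} closes the argument.
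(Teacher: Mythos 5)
Your first step---reducing to a unary language traced out by a generator $w_0$ of $\Syn{L}$---is exactly the reduction the paper performs (it uses a fresh letter $a_0$ with $g(a_0)=w_0$ and sets $L_0=\{a_0^n : w_0^n\in L\}$). But your proposal never supplies the one lemma that makes this reduction usable, namely that $\ns{L}=\ns{L_0}$ and $\nsyn{L}=\nsyn{L_0}$. In the paper this is the entire content of the argument: one checks that $f(X^{-1}L_0):=g[X]^{-1}L$ is a well-defined $\JSL$-isomorphism $\jslLQ{L_0}\cong\jslLQ{L}$ (well-definedness and injectivity require a separate computation using commutativity of $\Syn{L}$), and that under $f$ the letter $a_0$ acts as $w_0^{-1}(\dash)$ while each $a\in\Sigma$ acts as $(w_0^{n_a})^{-1}(\dash)$; hence extensions of $\kappa_L$ (resp.\ $\kappa_L\circ\mu_L$) correspond degree-for-degree to extensions of $\kappa_{L_0}$ (resp.\ $\kappa_{L_0}\circ\mu_{L_0}$), and \autoref{thm:ns_char} and \autoref{thm:nmu_eq_subatomic} finish. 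Your second step abandons the unary language entirely, so your first step ends up doing no work.

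The replacement argument in your second step has a genuine gap. You claim that, because $\Syn{L}$ is a cyclic (hence abelian) group, the transition monoid of $\rsc{\rev{N}}$ is generated by commuting permutations of order dividing $d$ and therefore injects into $\Syn{\rev{L}}$. Neither part is justified: the transitions of $\Pow(\rev{N})$ are direct-image maps along the reversed transition relations, which on the reachable part are in general neither injective nor surjective, and they need not commute even when their images in the syntactic monoid do. All that holds in general is that $\tm{\rsc{\rev{N}}}$ surjects onto $\Syn{\rev{L}}$; nothing forces that surjection to be injective. The ``period divides $d$'' bound you invoke is precisely Fact~1 of \cite{Gramlich03probabilisticand} (a state-minimal nfa for a cyclic unary language is a disjoint union of cycles whose periods divide $d$), a nontrivial structure theorem available only in the unary case---it is exactly what \autoref{ex:cyclic_unary} relies on, and it is why the detour through $L_0$ is not optional. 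Had your direct argument worked, it would show that every state-minimal nfa for any abelian-group language is subatomic, which is far stronger than the claim at hand. So the proposal identifies the right target but is missing the transfer lemma carrying $\ns{L_0}=\nsyn{L_0}$ back to $L$, and the substitute offered does not go through.
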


\begin{proof}[Sketch]
Suppose that $\Syn{L}=\tm{\minDfa{L}}$ is cyclic. Then there exists $w_0\in\Sigma^*$ such that the map $\lambda X. w_0^{-1} X\colon \LW{L} \to \LW{L}$ generates $\tm{\minDfa{L}}$. Fix an alphabet $\Sigma_0 = \{ a_0 \}$ disjoint from $\Sigma$ and consider the unary language
\[
  L_0 := \{\, a_0^n \;:\; n \in \Nat,\, w_0^n \in L \,\} \subseteq \Sigma_0^*.
\]
 Let $g: \Sigma_0^* \to \Sigma^*$ be the monoid morphism where $g(a_0) := w_0$. Then we have the $\JSL$-isomorphism
\[f \colon\jslLQ{L_0} \xra{\cong} \jslLQ{L},\quad f(X^{-1} L_0) := [g[X]]^{-1} L.\] 
For each $a \in \Sigma$ choose $n_a\in \Nat$ such that $a^{-1} K = (w_0^{n_a})^{-1} K$ for all $K\in \LW{L}$. The respective transition endomorphisms of the $\JSL$-automata $\jslLQ{L_0}$ and $\jslLQ{L}$ determine each other in the sense that the following diagrams commute:
\[
\xymatrix@R-1em{
\jslLQ{L_0} \ar[r]^f_\cong \ar[d]_{a_0^{-1}(\dash)} & \jslLQ{L} \ar[d]^{ w_0^{-1}(\dash)  }  \\
\jslLQ{L_0} \ar[r]_f^\cong & \jslLQ{L}
}
\qquad
\xymatrix@R-1em{
\jslLQ{L_0} \ar[r]^f_\cong \ar[d]_{(a_0^{n_a})^{-1}(\dash)} & \jslLQ{L} \ar[d]^{ a^{-1}(\dash)  }  \\
\jslLQ{L_0} \ar[r]_f^\cong & \jslLQ{L}
}
\]
Then $\ns{L} = \ns{L_0}$ by \autoref{thm:ns_char} and $\nsyn{L} = \nsyn{L_0}$ by \autoref{thm:nmu_eq_subatomic}. Moreover, by \autoref{ex:cyclic_unary} we know that $\ns{L_0}=\nsyn{L_0}$, so the claim follows.
\end{proof}

\begin{expl}[$\nsyn{L}$ no larger than Chrobak normal form]
  \label{ex:chrobak}
A unary nfa is in \emph{Chrobak normal form}~\cite{chrobak1986finite,ChrobakRevisited2011} if it has a single initial state and at most one state with multiple successors, all of which lie in disjoint cycles.
We claim that for any nfa $N$ in Chrobak normal form accepting the language $L$, we have 
\[ \nsyn{L} \leq |N|,\]
where $\under{N}$ denotes the number of states of $N$. To see this, observe that each state of $N$ up to and including the unique choice state accepts some left derivative of $L$. The successors of the choice state  collectively accept a derivative $u^{-1} L$; this language is cyclic because it is a finite union of cyclic languages. Therefore, by Example \ref{ex:cyclic_unary} we may replace the cycles by an atomic nfa accepting $u^{-1} L$, without increasing the number of states. The resulting nfa is atomic.

Since every unary nfa on $n$ states can be transformed into an nfa in Chrobak normal form with $O(n^2)$ states~\cite[Lemma 4.3]{chrobak1986finite}, we get:
\end{expl}

\begin{corollary}
If $L$ is a unary regular language, then $\nsyn{L} = O(\ns{L}^2)$.
\end{corollary}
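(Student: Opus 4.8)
The plan is to chain together the two facts already assembled just above the corollary: the bound $\nsyn{L}\le\under{N}$ for nfas $N$ in Chrobak normal form from \autoref{ex:chrobak}, and the quadratic-size Chrobak normal form transformation for unary nfas cited from \cite[Lemma 4.3]{chrobak1986finite}. Concretely, I would start from a state-minimal nfa $N_0$ accepting $L$, so that $\under{N_0}=\ns{L}$ by definition of $\ns{L}$. Applying the Chrobak normal form construction to $N_0$ yields an nfa $N$ in Chrobak normal form, still accepting $L$, with $\under{N}=O(\under{N_0}^2)=O(\ns{L}^2)$. Then \autoref{ex:chrobak} gives $\nsyn{L}\le\under{N}=O(\ns{L}^2)$, which is the claim.

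The only point that needs a moment's care is a bookkeeping one about conventions: Chrobak normal form nfas have a single initial state, whereas in this paper nfas are allowed multiple initial states, and a state-minimal $N_0$ may well exploit this. However, this causes no difficulty, since the cited transformation takes an arbitrary unary nfa on $n$ states (in particular one with several initial states) to a Chrobak-normal-form nfa of size $O(n^2)$; alternatively one may first collapse the initial states of $N_0$ into a single fresh initial state with the appropriate outgoing transitions, at the cost of at most one extra state, before invoking the lemma. Either way the size bound $O(\ns{L}^2)$ is unaffected.

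I do not expect a genuine obstacle here: the corollary is a direct corollary, and all the real work is already done in \autoref{ex:cyclic_unary}, \autoref{ex:chrobak}, and the external Chrobak bound. If anything, the ``hard part'' is purely expository — making explicit that \autoref{ex:chrobak} applies to \emph{the particular} Chrobak-normal-form nfa produced by the transformation (it does, since that output is by construction in Chrobak normal form and accepts $L$), so that no further hypotheses on $N$ are needed.

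\begin{proof}
Let $N_0$ be a state-minimal nfa accepting $L$, so $\under{N_0}=\ns{L}$. By \cite[Lemma 4.3]{chrobak1986finite}, $N_0$ can be transformed into an nfa $N$ in Chrobak normal form accepting $L$ with $\under{N}=O(\ns{L}^2)$. By \autoref{ex:chrobak}, $\nsyn{L}\le\under{N}$, and hence $\nsyn{L}=O(\ns{L}^2)$.
\end{proof}
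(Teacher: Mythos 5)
Your proof is correct and follows exactly the paper's intended argument: the corollary is stated immediately after the remark that every unary $n$-state nfa can be converted to Chrobak normal form with $O(n^2)$ states, combined with the bound $\nsyn{L}\le\under{N}$ from the preceding example. The extra remark about single versus multiple initial states is a reasonable precaution but does not change the argument.
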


%

\subsection{Languages with a canonical state-minimal nfa}
There are several natural classes of regular languages for which \emph{canonical} state-minimal nondeterministic acceptors have been identified. We show that these acceptors are actually subatomic. In our arguments, we frequently consider the \emph{length} of a finite semilattice $S$, i.e.~the maximum length $n$ of any ascending chain $s_0<s_1<\ldots <s_n$ in $S$. Note that since every element is uniquely determined by the set of join-irreducibles below it, the length of $S$ is at most $\under{J(S)}$.

\begin{expl}[Bideterministic and biseparable languages]
  \label{ex:bisep}
\begin{enumerate}
 \item  A language is called \emph{bideterministic} if it is accepted by a dfa whose reverse is also a dfa. In this case, the minimal dfa is a minimal nfa \cite{ReversibleAutomata92,TAMM2004135}. Bideterministic languages have been studied in the context of automata learning~\cite{angluin82} and coding theory, where they are known as \emph{rectangular codes}~\cite{TrellisMFCCodes96, BlockCodesFiniteAut2003}. We show that for every bideterministic language $L$,
\[
\ns{L}=\nsyn{L}=\under{\LW{L}}.
\] 
To this end, we first note that by~\cite[Theorem 3.1]{BlockCodesFiniteAut2003} a language $L\seq \Sigma^*$ is bideterministic iff the left derivatives of $L$ are pairwise disjoint. This implies that $\jslLQ{L}$ is a boolean algebra with atoms $\LW{L}$. Since the length of a boolean algebra equals the number of atoms (= join-irreducibles), we conclude that for every finite semilattice extension $\LQ{L} \monoto S$, the semilattice $S$ has length at least $\under{\LW{L}}$. Thus, $|\LW{L}| \leq |J(S)|$, so any representation $\rho$ extending $\kappa_L$ or $\kappa_L\o \mu_L$ satisfies $|\LW{L}| \leq \deg(\rho)$. Hence, $\ns{L} = \nsyn{L} = |\LW{L}|$ by \autoref{thm:ns_char} and \ref{thm:nmu_eq_subatomic}. In particular, the minimal dfa of $L$ is a minimal nfa.
\item A language $L$ is \emph{biseparable} if $\jslLQ{L}$ is a boolean algebra \cite{MinNfaBiRFSA2009}.\footnote{Actually \cite{MinNfaBiRFSA2009} defines biseparability as a property of nfas, and characterizes biseparable nfas as those accepting a language $L$ for which no $\cup$-irreducible left derivative is contained in the union of other $\cup$-irreducible left derivatives. This is equivalent to the lattice $\jslLQ{L}$ being boolean, i.e.~ to $L$ being `biseparable' in our sense.}
For every biseparable language $L$, the \emph{canonical residual automaton}~\cite{ResidFSA2001}, i.e.~the nfa $N_L$ of join-irreducibles of the minimal $\JSL$-dfa $\jslLQ{L}$, is a state-minimal nfa; it is subatomic because every state of $N_L$ accepts a derivative of $L$. This follows exactly as in (1): our argument only used that $\jslLQ{L}$ is a boolean algebra.
\end{enumerate}
\end{expl}

\begin{expl}[Maximal reachability] A folklore result asserts that if $N$ is an nfa whose accepted language $L$ satisfies  $|\LW{L}| = 2^{|N|}$, then $N$ is state-minimal. Since $\LD{L}$ forms the set of states of the minimal dfa for $L$ and $\rsc{N}$ accepts $L$, we have $\rsc{N} = \Pow(N)$. It follows the $\JSL$-dfa $\Pow(N)$ is reachable and simple, hence isomorphic to the minimal $\JSL$-dfa $\jslLQ{L}$. This proves that $\jslLQ{L}$ is a boolean algebra, i.e.\ $L$ is a biseparable language. We conclude from \autoref{ex:bisep}(2) that $\ns{L}=\nsyn{L}=\under{N}$ and $N_L$ is a subatomic minimal nfa.
\end{expl}

\begin{expl}[BiRFSA and topological languages]
  \label{ex:biresid_topological}
  So far $\jslLQ{L}$ has been a boolean algebra. But the argument in Example \ref{ex:bisep} {also} applies when $\jslLQ{L}$ is a  distributive lattice, noting that the length of a finite distributive lattice  is equal to the number of its join-irreducibles~\cite[Corollary 2.14]{GratzerGeneralLattice1998}. Languages with this property are called \emph{topological}~\cite{CtsNfa2014}. It thus follows as in \autoref{ex:bisep}(2) that for any topological language $L$, the {canonical residual automaton} $N_L$ is subatomic and a state-minimal nfa. Thus, $\ns{L} = \nsyn{L} = |J(\jslLQ{L})|$.

  There is another class of languages where $N_L$ is known to be a state-minimal nfa, the \emph{biRFSA} languages \cite{MinNfaBiRFSA2009}. A language $L$ is called {biRFSA} if $N_L$ is isomorphic to $\rev{(N_{\rev{L}})}$. Surprisingly, these languages are exactly the topological ones:
  \begin{enumerate}
    \item
    \emph{Suppose that $L$ is topological}. Recall that $N_L$ is the nfa of join-irreducibles of the minimal $\JSL$-dfa. Thus, it has states $J(\jslLQ{L})$ and transitions given by
$ X \xto{a} Y$ iff $Y \subseteq a^{-1} X$ for $a\in \Sigma$.  Moreover, a join-irreducible $j$ is initial iff $j \subseteq L$ and final iff $\epsilon \in j$. Since the lattice $\jslLQ{L}$ is distributive, we have a canonical bijection between its join- and meet-irreducibles:
\[ \tau\colon  J(\jslLQ{L}) \xra{\cong} M(\jslLQ{L}),\quad \tau(j) = \bigcup \{ X \in \jslLQ{L} : j \nsubseteq X \}.\] 
Let $\theta$ be the unique map making the following diagram commute, where $\dr_L$ is the restriction of the isomorphism of \autoref{prop:lqdual}:
\[ 
\xymatrix@C-2em@R-1em{
& J(\jslLQ{L}) \ar[dr]^{\tau}_\cong  \ar[dl]_\theta^\cong & \\
J(\jslLQ{\rev{L}}) \ar[rr]_{\dr_L}^\cong & & M(\jslLQ{L})
} \]
One can show $\theta$ to be an nfa isomorphism from $N_L$ to $\rev{({N_{\rev{L}}})}$.  Thus, $L$ is biRFSA.

    \item \emph{Suppose that $L$ is biRFSA.} Then we have a surjective $\JSL$-morphism
\[ [\Pow(J(\jslLQ{{L}}))]^\op \cong \Pow(J(\jslLQ{\rev{L}})) \xra{e_{\rev{L}}} \jslLQ{\rev{L}} \cong [\jslLQ{L}]^\op,\]
where the first isomorphism follows from $N_L \cong \rev{({N_{\rev{L}}})}$ and \autoref{lem:nfarev}, the second isomorphism is given by \autoref{prop:lqdual}, and $e_{\rev{L}}$ sends $X\seq J(\jslLQ{\rev{L}})$ to $\bigcup X$. The dual of this morphism is the injective $\JSL$-morphism
\[ m_L\colon \jslLQ{L}\monoto \Pow(J(\jslLQ{{L}}))\]
sending $K\in  \jslLQ{L}$ to the set of all $j\in J(\jslLQ{L})$ with $j\seq K$. Note that $e_L\o m_L = \id_{\jslLQ{Q}}$, showing that $\jslLQ{L}$ is a retract of $\Pow(J(\jslLQ{L}))$. Since $\JSL$-retracts of finite distributive lattices are distributive, see e.g.~\cite[Lemma 2.2.3.15]{myers2020representing}, it follows that $\jslLQ{L}$ is distributive. Thus, $L$ is topological.
%
  \end{enumerate}
\end{expl}

\begin{expl}[Extremal languages]
  Call a language \emph{extremal} if $\jslLQ{L}$ has length $|J(\jslLQ{L})|$ i.e.\ we have an \emph{extremal lattice} in the sense of Markowsky \cite{ExtremalLat92}. Again, the argument of Example \ref{ex:bisep} applies and we get $\ns{L} = \nsyn{L} = |J(\jslLQ{L})|$. Topological languages are extremal since every distributive lattice is an extremal lattice, although extremal languages need not be topological. Both classes are naturally characterized in terms of the reduced dependency relation:
\begin{enumerate}
  \item 
  $L$ is topological iff $\rRDR{L}$ is essentially an order relation $\leq_P \,\subseteq P\times P$ of a finite poset~\cite[Example 2.2.12]{myers2020nondeterministic}.
  \item
  $L$ is extremal iff $\rRDR{L}$ is \emph{upper unitriangularizable} \cite[Theorem 11]{ExtremalLat92}.
\end{enumerate}
The latter means the adjacency matrix of the bipartite graph $\rRDR{L}$ can be put in upper triangular form with ones along the diagonal, by permuting rows and columns. An order relation is upper unitriangularizable because it may be extended to a linear order.
\end{expl}

\section{Conclusion and Future Work}\label{S:conclusion}

Motivated by the duality theory of deterministic finite automata over semilattices,
we introduced a natural class of nondeterministic finite automata called \emph{subatomic nfas} and studied their state complexity in terms of boolean representations of syntactic monoids. Furthermore, we demonstrated that a large body of previous work on state minimization of general nfas actually constructs minimal subatomic ones. There are several directions for future work.

As illustrated by \autoref{thm:dim_vs_ns}, the dependency relation $\rDR{L}$ forms a useful tool for proving lower bounds on nfas. It is also a key element of the Kameda-Weiner algorithm~\cite{KamedaWeiner1970,tamm16} for minimizing nfas, which rests on computing biclique covers of $\rDR{L}$. We aim to give an algebraic interpretation of dependency relations based on the representation of finite semilattices by contexts~\cite{ContextJipsen2012}, which can be augmented to a categorical equivalence between $\JSL_\f$ and a suitable category of bipartite graphs~\cite{myers2020representing}. Under this equivalence, $\JSL$-dfas correspond to \emph{dependency automata}; in particular, the minimal $\JSL$-dfa $\LQ{L}$ corresponds to a dependency automaton whose underlying bipartite graph is precisely the dependency relation $\rDR{L}$. We expect that this observation can lead to a fresh algebraic perspective on the Kameda-Weiner algorithm, as well as a generalization of it computing minimal (sub-)atomic nfas.

On a related note, we also intend to investigate the complexity of the minimization problem for (sub-)atomic nfas. While minimizing general nfas is $\PSPACE$-complete, even if the input automaton is a dfa, we conjecture that the additional structure present in (sub-)atomic acceptors will simplify their minimization to an $\NP$-complete task. First evidence in this direction is provided by Geldenhuys, van der Merve, and van Zijl~\cite{NfaSat} whose work implies that minimal atomic nfas can be efficiently computed in practice using SAT solvers.


  



\clearpage
\bibliographystyle{splncs03}
\bibliography{refs,bib-2019,bib-2020}

\clearpage\appendix

\section{Appendix}
This Appendix provides full proofs and additional details on the examples omitted for space reasons.

\section*{Proof of \autoref{lem:nfarev}}

Let $N=(Q,\delta,I,F)$. We claim that the semilattice isomorphism 
\[ h\colon [\Pow(Q)]^\op\xra{\cong} \Pow(Q), \quad X\mapsto \ol{X}=Q\setminus X,\] 
gives an isomorphism of $\JSL$-dfas from $[\Pow(N)]^\op$ to $\Pow(\rev{N})$.

\medskip \noindent \emph{Preservation of the initial state.} The initial state of $[\Pow(N)^\op]$ is $\ol{F}$, the largest non-final state of $\Pow(N)$. Thus $h$ maps it to $F$, the initial state of $\Pow(\rev{N})$.

\medskip \noindent \emph{Preservation of final states.}
By definition, a state $X$ is final in $[\Pow(N)]^\op$ iff $I\not\seq X$. This is equivalent to $h(X)\cap I \neq \emptyset$, i.e. to $h(X)$ being final in $\Pow(\rev{N})$.

\medskip \noindent \emph{Preservation of transitions.} Let $X,Y\in \Pow(Q)$ and $a\in \Sigma$ such that $X\xra{a} Y$ is a transition in $[\Pow(N)]^\op$. By definition, $Y$ is the set of all $q\in Q$ with $\delta_a[q]\seq X$. Thus, $\ol{Y}$ is the set of all $q\in Q$ such $\delta_a[q]\cap \ol{X} \neq \emptyset$. This means that $\ol{X}\xra{a}\ol{Y}$ is a transition in $\Pow(\rev{N})$.

\section*{Proof of \autoref{lem:astarprops}}

\begin{enumerate}
\item Let $g\colon S\to 2$ be the semilattice morphism corresponding to the prime filter $G=\{x\in S: x\not\leq_S s\}$. Then, for any word $w=a_1\ldots a_n$ in $\Sigma^*$, we have $\delta_{\rev{w}}(s_0)\not\leq_S s$ iff the morphism
\[ 2\xto{i} S\xra{\delta_{a_n}} S \cdots S \xra{\delta_{a_1}} S \xto{g} 2 \]
is equal to $\id\colon 2\to 2$. This is the case iff the dual morphism
\[ 2\xto{g^\ast} S^\op\xra{\delta_{a_1}^\ast} S^\op \cdots S^\op \xra{\delta_{a_n}^\ast} S^\op \xto{i^\ast} 2 \]
is equal to $\id\colon 2\to 2$. Since $g^\ast$ maps $1$ to $s$, this means precisely that the state $s$ of $A^\op$ accepts $w$.
\item follows from part (1) by choosing $s$ to be the initial state of $A^\op$, i.e.~the largest non-final state of $A$.
\item follows via duality: the smallest subautomaton $\reach{A}$ of $A$ dualizes to the smallest quotient automaton $\simple{A^\op}$ of $A^\op$.
\end{enumerate}

\section*{Proof of \autoref{prop:lqdual}}
By \autoref{lem:astarprops}, the dual of a minimal $\JSL$-dfa accepting $\rev{L}$ is a minimal $\JSL$-dfa accepting ${L}$. Thus, by the uniqueness of minimal automata, the unique $\JSL$-automata morphism from  $[\LQ{\rev{L}}]^\op$ to $\LQ{{L}}$, mapping the state $K$ of $[\LQ{\rev{L}}]^\op$ to the language $L([\LQ{\rev{L}}]^\op,K)$ it accepts, is an isomorphism. It only remains to verify that this language is equal to $(\ol{\rev{K}})^{-1}{L}$. To this end, we compute for all $w\in\Sigma^*$:
\begin{align*}
w\in L([\LQ{\rev{L}}]^\op,K) & \iff (\rev{w})^{-1}\rev{L}\not\seq K & \text{by \autoref{lem:astarprops}(1)} \\
& \iff \exists x\in \ol{K}: \rev{w}x\in \rev{L} & \\
& \iff \exists y\in \ol{\rev{K}}: yw\in {L} & \\
& \iff w\in (\ol{\rev{K}})^{-1}{L} & 
\end{align*}

\section*{Proof of \autoref{thm:dependency}}
 
  \begin{enumerate}
    \item We need to show that
    \[
      \alpha\colon \jslLQ{L} \to (\{ \rDR{L}[X] : X \subseteq \LD{L} \}, \cup, \emptyset),
      \qquad
      \alpha(K):= \{ v^{-1} \rev{L} : v \in \rev{K} \,\},
    \]
is an isomorphism. To this end, let $K\in \jslLQ{L}$, say $K=K_1\cup \dots \cup K_n$ for $K_i\in \LW{L}$. We show that  
\[ \alpha(K) = \rDR{L}(\{K_1,\ldots,K_n\}), \]
which immediately implies that $\alpha$ is a well-defined isomorphism of semilattices. To this end, we compute for all $v\in \Sigma^*$:
\begin{align*}
v^{-1}\rev{L}\in \alpha(K) & \iff v\in \rev{K} \\
& \iff \exists i: v\in \rev{K_i} \\
& \iff \exists i: \rev{v} \in K_i \\
& \iff \exists i: v^{-1}\rev{L} \in \rDR{L}[K_i] \\
& \iff v^{-1}\rev{L} \in \rDR{L}(\{K_1,\ldots,K_n\})
\end{align*}
    \item Let us first note that the isomorphism $\dr_L$ from \autoref{prop:lqdual} has the following alternative description:
  \begin{equation}\label{eq:drl}
    \dr_L(U^{-1} {\rev{L}}) = \bigcup \{ K \in \LW{{L}} : K \cap \rev{U} = \emptyset \}
    \qquad
    \text{for every $U \subseteq \Sigma^*$}.
  \end{equation}
In fact, for every $w\in \Sigma^*$ we compute:
\begin{align*}
w\in \dr_L(U^{-1}\rev{L}) & \iff w\in \ol{\rev{(U^{-1}\rev{L})}}^{-1}{L} & \text{def. $\dr_L$} \\
& \iff \exists v\in \ol{U^{-1}\rev{L}}: \rev{v}w\in {L} & \\
& \iff \exists v\in \Sigma^*: [\rev{v}w\in {L} \wedge \forall u\in U: uv\not\in \rev{L}] \\
& \iff \exists y\in \Sigma^*: [yw\in {L} \wedge \forall u\in U: y\rev{u}\not\in {L}] \\
& \iff \exists y\in \Sigma^*: [w\in y^{-1}{L} \wedge y^{-1}{L}\cap \rev{U} = \emptyset] \\
& \iff w\in \bigcup \{ K \in \LW{{L}} : K \cap \rev{U} = \emptyset \}.
\end{align*}
It thus follows for all $u,v\in \Sigma^*$:
\begin{align*}
 u^{-1}L\not\seq \dr_L(v^{-1}\rev{L}) & \iff u^{-1}L \not \in \{ K \in \LW{{L}} : \rev{v}\not \in K \} \\
& \iff \rev{v}\in u^{-1}L \\
& \iff \rDR{L}(u^{-1}L,v^{-1}\rev{L}).
\end{align*}
\item follows immediately from (2), restricted to $\rRDR{L}$.
  \end{enumerate}

\section*{Proof of \autoref{prop:blqdual}}

\begin{enumerate}
\item Let $\At{x}$ denote the unique atom of $\jslBLQ{L}$ containing the word $x\in \Sigma^*$.
For any $v,w\in \Sigma^*$ we have $v^{-1}\rev{L} = {w}^{-1}\rev{L}$ iff $\At{\rev{v}}=\At{\rev{w}}$. In fact,
\begin{align*}
& v^{-1}\rev{L} = {w}^{-1}\rev{L} \\
\text{iff}\quad & \forall x\in \Sigma^*: vx\in \rev{L} \iff wx\in \rev{L} \\
\text{iff}\quad & \forall y\in \Sigma^*: \rev{v}\in y^{-1}L \iff \rev{w}\in y^{-1}L \\
\text{iff}\quad & \At{\rev{v}} = \At{\rev{w}}.
\end{align*}
In the final step, we use that the boolean algebra $\jslBLQ{L}$ is generated by the left derivatives of $L$, so two words belong to the same atom iff they belong to the same left derivatives.
\item It follows that the map $h\colon \Pow(\minDfa{\rev{L}})\to [\jslBLQ{L}]^\op$ defined by
\[ \{\,w_1^{-1}\rev{L},\ldots, w_n^{-1}\rev{L} \,\} \quad\mapsto \quad \bigcap_{i=1}^n \overline{\At{\rev{w_i}}} \]
gives a well-defined isomorphism of semilattices. It remains to prove that it is an automata morphism.

\medskip \noindent \emph{Preservation of the initial state.} The initial state $\{\rev{L}\}$ of $\Pow(\minDfa{\rev{L}})$ is mapped to $\ol{\At{\epsilon}}$.  This is the largest non-final state of $\jslBLQ{L}$, i.e.~the initial state of $[\jslBLQ{L}]^\op$.

\medskip\noindent\emph{Preservation of final states.} Recall that the final states of $[\jslBLQ{L}]^\op$ are those languages in $\jslBLQ{L}$ not containing $L$. Thus,
\begin{align*}
& \{\,w_1^{-1}\rev{L},\ldots, w_n^{-1}\rev{L} \,\} \text{ final in $\jslBLQ{L}$} \\
\text{iff}\quad & \text{$w_i\in \rev{L}$ for some $i$} \\
\text{iff}\quad & \text{$\rev{w_i}\in L$ for some $i$} \\
\text{iff}\quad & \text{$\At{\rev{w_i}}\seq L$ for some $i$} \\
\text{iff}\quad & \text{$L \not\seq \overline{\At{\rev{w_i}}}$ for some $i$} \\
\text{iff}\quad & L \not\seq \bigcap_{i=1}^n\overline{\At{\rev{w_i}}} \\
\text{iff}\quad & \bigcap_{i=1}^n\overline{\At{\rev{w_i}}} \text{ final in $[\jslBLQ{L}]^\op$} 
\end{align*}
\end{enumerate} 

\medskip\noindent\emph{Preservation of transitions.} Since the semilattice $\Pow(\minDfa{\rev{L}})$ is generated by the left derivatives of $\rev{L}$, it suffices to prove that for each $w\in \Sigma^*$ and $a\in \Sigma$ we have the transition 
\[h( \{  w^{-1}\rev{L}\} )~ \xra{a} ~h( \{ a^{-1}w^{-1}\rev{L} \}), \]
i.e.~
\[ \ol{\At{\rev{w}}}~ \xra{a}~ \ol{\At{a\rev{w}}} \]
in $[\jslBLQ{L}]^\op$. But this is immediate because $a^{-1}\At{a\rev{w}} \supseteq \At{\rev{w}}$.

\section*{Proof of \autoref{prop:blrqdual}}
The proof is much analogous to the one of \autoref{prop:blqdual}.
\begin{enumerate}
\item Let $\At{x}$ denote the atom of $\jslBLRQ{L}$ containing the word $x\in \Sigma^*$.
For any two words $v,w\in \Sigma^*$ we have $v\equiv_{\rev{L}} w$ iff $\At{\rev{v}}=\At{\rev{w}}$. In fact,
\begin{align*}
& v\equiv_{\rev{L}} w \\
\text{iff}\quad & \forall x,y\in \Sigma^*: v\in x^{-1}\rev{L}y^{-1} \iff w\in x^{-1}\rev{L}y^{-1} \\
\text{iff}\quad & \forall s,t\in \Sigma^*: \rev{v}\in s^{-1}Lt^{-1} \iff \rev{w}\in s^{-1}Lt^{-1} \\
\text{iff}\quad & \At{\rev{v}} = \At{\rev{w}}.
\end{align*}
In the final step, we use that the boolean algebra $\jslBLRQ{L}$ is generated by the two-sided derivatives of $L$, so two words belong to the same atom iff they belong to the same two-sides derivatives.
\item It follows that the map $h\colon \Pow(\Syn{\rev{L}})\to [\jslBLRQ{L}]^\op$ defined by
\[ \{\,[w_1]_{\rev{L}},\ldots, [w_n]_{\rev{L}} \,\} \quad\mapsto \quad \bigcap_{i=1}^n \overline{\At{\rev{w_i}}} \]
gives a well-defined isomorphism of semilattices. It remains to prove that it is an automata morphism.

\medskip \noindent \emph{Preservation of the initial state.} The initial state $\{ [\epsilon]_{\rev{L}} \}$ of $\Pow(\Syn{\rev{L}})$ is mapped to $\ol{\At{\epsilon}}$.  This is the largest non-final state of $\jslBLRQ{L}$, i.e.~the initial state of $[\jslBLRQ{L}]^\op$.

\medskip\noindent\emph{Preservation of final states.} The final states of $[\jslBLRQ{L}]^\op$ are those languages in $\jslBLRQ{L}$ not containing $L$. Thus,
\begin{align*}
& \{\, [w_1]_{\rev{L}},\ldots, [w_n]_{\rev{L}} \,\} \text{ final in $\jslBLRQ{L}$} \\
\text{iff}\quad & \text{$w_i\in \rev{L}$ for some $i$} \\
\text{iff}\quad & \text{$\rev{w_i}\in L$ for some $i$} \\
\text{iff}\quad & \text{$\At{\rev{w_i}}\seq L$ for some $i$} \\
\text{iff}\quad & \text{$L \not\seq \overline{\At{\rev{w_i}}}$ for some $i$} \\
\text{iff}\quad & L \not\seq \bigcap_{i=1}^n\overline{\At{\rev{w_i}}} \\
\text{iff}\quad & \bigcap_{i=1}^n\overline{\At{\rev{w_i}}} \text{ final in $[\jslBLRQ{L}]^\op$} 
\end{align*}
\end{enumerate} 

\medskip\noindent\emph{Preservation of transitions.} Since the semilattice $\Pow(\Syn{\rev{L}})$ is generated by the elements of $\Syn{L}$, it suffices to prove that for each $w\in \Sigma^*$ and $a\in \Sigma$ we have the transition 
\[h( \{  [w]_{\rev{L}}  \} )~ \xra{a} ~h( \{ [wa]_{\rev{L}} \}), \]
i.e.~
\[ \ol{\At{\rev{w}}}~ \xra{a}~ \ol{\At{a\rev{w}}} \]
in $[\jslBLRQ{L}]^\op$. But this is immediate because $a^{-1}\At{a\rev{w}} \supseteq \At{\rev{w}}$.

\section*{Proof of \autoref{prop:tsdual}}
Let $A=(S,\delta,s_0,F)$. For any $K\seq \Sigma^*$ we put $\delta_K := \bigvee_{w\in K} \delta_{w}$.
\begin{enumerate}
\item We first show that 
\begin{equation}\label{eq:langtsdual}
L([\ts{A}]^\op,\delta_K) \;=\; \bigcup_{v\in\Sigma^*} L(A^\op,\delta_{vK}(s_0)) (\rev{v})^{-1}\qquad\text{for each $K\seq \Sigma^*$}.
\end{equation} 
To see this, we compute for all $u\in \Sigma^*$:
\begin{align*}
& u\in L([\ts{A}]^\op,\delta_K) &  \\
\text{iff}\quad & \delta_{\rev{u}}\not\leq \delta_K & \text{by \autoref{lem:astarprops}(1)} \\
\text{iff}\quad & \exists v\in \Sigma^*:\, \delta_{\rev{u}}(\delta_v(s_0)) \not\leq_S \delta_K(\delta_v(s_0))  & \text{since $A$ is reachable} \\
\text{iff}\quad & \exists v\in \Sigma^*:\, \delta_{v\rev{u}}(s_0)\not \leq_S \delta_{vK}(s_0) & \\
\text{iff}\quad & \exists v\in \Sigma^*:\, u\rev{v}\in L(A^\op,\delta_{vK}(s_0)) & \text{by \autoref{lem:astarprops}(1)} \\
\text{iff}\quad & \exists v\in \Sigma^*:\, u\in L(A^\op,\delta_{vK}(s_0))(\rev{v})^{-1}. &
\end{align*}\clearpage
\item For any $w\in \Sigma^*$, consider the two semilattice morphisms
\begin{align*} \gamma_w\colon \ts{A}\to \ts{A},&\quad f\mapsto \delta_w\o f, \\
\varphi_w\colon \ts{A}\to \ts{A},&\quad f\mapsto f\o \delta_w. 
\end{align*}
along with their dual morphisms $\gamma_w^\ast, \varphi_w^\ast\colon [\ts{A}]^\op\to [\ts{A}]^\op$. We claim that
\begin{equation}\label{eq:rqc}
L([\ts{A}]^\op, \delta_K)(\rev{w})^{-1} = L([\ts{A}]^\op, \varphi_w^*(\delta_K))  \qquad\text{for each $K\seq \Sigma^*$}.
\end{equation}
To see this, we compute as follows for all $u\in \Sigma^*$, where $\leq$ is the order of the semilattice $\JSL(S,S)$:
\begin{align*}
& u\in L([\ts{A}]^\op, \varphi_w^*(\delta_K)) & \\
\text{iff}\quad & \id_S \not\leq (\gamma_{\rev{u}})^*(\varphi_w^*(\delta_K)) & \text{def. $L(\dash,\dash)$}  \\
\text{iff}\quad & \id_S\not\leq (\varphi_w\o \gamma_{\rev{u}})^\ast(\delta_K) & \\
\text{iff}\quad & \phi_w\o \gamma_{\rev{u}}(\id_S) \not\leq \delta_K & \text{by adjointness} \\
\text{iff}\quad & \delta_{w\rev{u}} \not\leq \delta_K &  \\
\text{iff}\quad & \gamma_{w\rev{u}}(\id_S) \not\leq \delta_K & \\
\text{iff}\quad & \id_S \not\leq (\gamma_{w\rev{u}})^\ast(\delta_K) & \text{by adjointness} \\
\text{iff}\quad & u\rev{w}\in L([\ts{A}]^\op, \delta_K) & \text{def. $L(\dash,\dash)$}\\
\text{iff}\quad & u\in L([\ts{A}]^\op, \delta_K)(\rev{w})^{-1}
\end{align*}
\item We are ready to prove the proposition. Since both $[\ts{A}]^\op$ and $\rqc{A^\op}$ are simple $\JSL$-dfas, and thus can be viewed as subautomata of $\Fin{L}$, it suffices to show that they contain the same languages. The inclusion $[\ts{A}]^\op\seq \rqc{A^\op}$ follows from \eqref{eq:langtsdual}. For the reverse inclusion, since $[\ts{A}]^\op$ is closed under right derivatives by \eqref{eq:rqc}, we only need to prove that $A^\op \seq [\ts{A}]^\op$. To this end, we show that, for any $s\in S$,
\[ L(A^\op, s) = L([\ts{A}]^\op, \delta_K),\quad\text{where}\quad K = \{\,w\in \Sigma^*\;:\; \delta_w(s_0)\leq_S s\,\}. \]
For the proof, we first note that for all $u\in \Sigma^*$,
\begin{equation}
\label{eq:deltavK} \delta_u(s_0)\leq_S s \quad\iff\quad \forall v\in \Sigma^*: \delta_{vu}(s_0)\leq_S \delta_{vK}(s_0). 
\end{equation}
In fact, ``$\Leftarrow$'' follows by taking $v=\epsilon$; we have $s=\delta_K(s_0)$ because $A$ is reachable. For ``$\To$'', suppose that $\delta_u(s_0)\leq_S s$. Then $u\in K$ and therefore 
\[ \delta_{vu}(s_0) \leq_S \bigvee_{w\in K} \delta_{vw}(s_0) = \delta_{vK}(s_0) \]
 We now compute
\begin{align*}
& u\in L(A^\op,s)  & \\
\text{iff}\quad &  \delta_{\rev{u}}(s_0)\not\leq_S s & \text{by \autoref{lem:astarprops}(1)} \\
 \text{iff}\quad & \exists v\in \Sigma^*:\,\delta_{v\rev{u}}(s_0)\not\leq_S \delta_{vK}(s_0) & \text{by \eqref{eq:deltavK}} \\
 \text{iff}\quad & \exists v\in \Sigma^*:\, u\rev{v} \in L(A^\op,\delta_{vK}(s_0)) & \text{by \autoref{lem:astarprops}(1)}  \\
 \text{iff}\quad & \exists v\in \Sigma^*:\, u\in L(A^\op,\delta_{vK}(s_0))(\rev{v})^{-1} & \\
 \text{iff}\quad & u\in L([\ts{A}]^\op, \delta_K) & \text{by \eqref{eq:langtsdual}}
\end{align*}
This concludes the proof.
\end{enumerate}

\section*{Proof of \autoref{thm:ns_char}}

Let $d(L)$ denote the least degree of any boolean representation extending the canonical representation $\kappa_L\o \mu_L$.
\begin{enumerate}
\item A boolean presentation of $\Sigma^*$ is given by a finite semilattice lattice $S$ together with a family of semilattice morphisms $\delta=(\delta_a\colon S\to S)_{a\in \Sigma}$. An equivariant map between boolean presentations $(S,\delta)$ and $(S',\delta')$ is a semilattice morphism $h\colon S\to S'$ with $\delta_a'\o h = h\o\delta_a$ for all $a\in \Sigma$. If $S$ carries a $\JSL$-automata structure $(S,\delta,i,f)$ and $h$ is a monic, there exists an automata structure on $S'$ making $h$ an automata morphism: put $i':=h\o i$, and choose $f'\colon S'\to 2$ to be any semilattice morphism with $f'=h\o f$. Such an $f'$ exists because the semilattice $2$ is an injective object of $\JSL$.
\[
\xymatrix{
2 \ar[r]^{i} \ar[dr]_{i'} & S \ar[r]^{\delta_a} \ar[d]_h & S \ar[d]^h \ar[r]^f & 2 \\
 & S' \ar[r]_{\delta_a'} & S' \ar[ur]_{f'} & 
}
\]
\item To prove $d(L)\leq \ns{L}$, suppose that $N$ is an nfa accepting the language $L$. Consider the $\JSL$-subautomaton $\jslLangs{N}=\simple{\Pow(N)}$ of $\Fin{L}$ carried by the semilattice of all languages accepted by subsets of $N$. Note that $\LQ{L}$ is a subautomaton of  $\jslLangs{N}$: every finite union $\bigcup_{i}w_i^{-1}L$ of left derivatives of $L$ is accepted by the set of all states of $N$ reachable on input $w_i$ for some $i$. Thus, the inclusion map $\LQ{L}\monoto \jslLangs{N}$ defines an extension of the canonical representation $\kappa_L\o \mu_L$. Since the semilattice $\jslLangs{N}$ is generated by the set of languages accepted by single states of $N$, it follows that the degree of this representation is at most the number of states of $N$.
\item To prove $\ns{L}\leq d(L)$, suppose that $(S,\delta)$ is a boolean representation of $\Sigma^*$ of degree $k$ extending $\kappa_L\o \mu_L$, witnessed by an injective equivariant map $h\colon \LQ{L}\monoto S$. By part (1), we can equip $S$ with a $\JSL$-dfa structure making $h$ an automata morphism. Since morphisms preserve accepted languages, it follows that $S$ accepts $L$. The automaton $S$ has $k$ join-irreducibles, so \autoref{rem:jsldfa_vs_nfa} shows that there exists an nfa on $k$ states accepting $L$.
\end{enumerate}

\section*{Proof of \autoref{thm:atomic_nfa_char}}

\begin{rem}\label{rem:freejsldfa}
The subset construction, restricted to dfas, gives rise to a left adjoint $\Pow\colon \Aut{\Set_\f} \to \Aut{\JSL_\f}$ between the categories of dfas and $\JSL$-dfas. Thus, for any dfa $D$ and any $\JSL$-dfa $A$, there is a bijective correspondence between dfa morphisms from $D$ to $A$ and $\JSL$-dfa morphisms from $\Pow(D)$ to $A$.
\end{rem}
Our proof of \autoref{thm:atomic_nfa_char} is essentially an instance of the self-duality of $\JSL$-dfas. Let $L$ be the language accepted by $N$. We establish the theorem by showing that each of the following statements is equivalent to the next one:
\begin{enumerate}
\item $N$ is atomic.
\item There exists a $\JSL$-automata morphism from $\Pow(N)$ to $\jslBLQ{L}$. 
\item There exists a $\JSL$-automata morphism from $\Pow(\minDfa{\rev{L}})$ to ${\Pow(\rev{N})}$.
\item There exists a dfa morphism from $\minDfa{\rev{L}}$ to ${\Pow(\rev{N})}$.
\item There exists a dfa morphism from $\minDfa{\rev{L}}$ to $\rsc{\rev{N}}$.
\item $\rsc{\rev{N}}$ is a minimal dfa.
\end{enumerate}

\medskip\noindent \emph{Ad (1)$\Lra$(2).} The unique automata morphism $m_{\Pow(N)}\colon \Pow(N)\to \Fin{L}$ maps every state of $\Pow(N)$ to the language it accepts. Thus, $N$ is atomic iff $m_{\Pow(N)}$ factorizes through the subautomaton $\jslBLQ{L}$ of $\Fin{L}$.

\medskip\noindent \emph{Ad (2)$\Lra$(3).} This follows via duality from \autoref{lem:nfarev}, \autoref{lem:astarprops} and \autoref{prop:blqdual}. 

\medskip\noindent \emph{Ad (3)$\Lra$(4).} This follows from \autoref{rem:freejsldfa}.

\medskip\noindent \emph{Ad (4)$\Lra$(5).} Since $\minDfa{\rev{L}}$ is a reachable dfa, every dfa morphism from $\minDfa{\rev{L}}$ to ${\Pow(\rev{N})}$ factorizes through the dfa-reachable part $\rsc{\rev{N}}$ of $\Pow(\rev{N})$.

\medskip\noindent \emph{Ad (5)$\Lra$(6).} Every dfa morphism from $\minDfa{\rev{L}}$ to $\rsc{\rev{N}}$ is an isomorphism: it is injective because $\minDfa{\rev{L}}$ is a simple dfa and surjective because $\rsc{\rev{N}}$ is a reachable dfa. Conversely, if $\rsc{\rev{N}}$ is a minimal dfa, then it is isomorphic to $\minDfa{\rev{L}}$ by the uniqueness of minimal dfas.

\section*{Proof of \autoref{thm:subatomic_nfa_char}}
Let us first recall the concept of algebraic language recognition~\cite{pin20}.
\begin{rem}\label{rem:algrec}
 A finite monoid $M$ is said to \emph{recognize} the language $L\seq\Sigma^*$ if there exists a monoid morphism $h\colon \Sigma^*\to M$ and a subset $P\seq M$ with $L=h^{-1}[P]$. Regular languages are exactly the languages recognizable by finite monoids. In fact, we have the following connections between monoids and dfas:
\begin{enumerate}
\item If $L$ is recognized by a finite monoid $M$ via $h\colon \Sigma^*\to M$ and $P\seq M$, then $M$ can be viewed as dfa accepting $L$, with transitions $m\xra{a}m\bullet h(a)$ for $m\in M$ and $a\in \Sigma$, initial state $1_M$, and final states $P$.
\item Conversely, if $L$ is accepted by a dfa $D=(S,\delta,s_0,F)$, then the transition monoid $\tm{D}$ recognizes $L$ via the morphism $h\colon \Sigma^*\epito \tm{D}$, $w\mapsto \delta_w$, and $P=\{\delta_w\colon w\in L\}$. In particular, the syntactic monoid recognizes $L$ via the syntactic morphism $\mu_L\colon \Sigma^*\epito \Syn{L}$. It can be characterized as the least quotient monoid of $\Sigma^*$ recognizing $L$: for any surjective monoid morphism $h\colon \Sigma^*\epito M$ recognizing $L$, there is a unique morphism $g\colon M\epito \Syn{L}$ with $\mu_L = g\o h$:
\[
\xymatrix{
& \Sigma^* \ar@{->>}[dl]_h \ar@{->>}[dr]^{\mu_L} & \\
M \ar@{-->>}[rr]_g & &  \Syn{L}
}
\] 
\item Finally, there is a tight connection between morphisms of monoids and dfas. Suppose that two surjective monoid morphisms $h_i\colon \Sigma^*\epito M_i$ and subsets $P_i\seq M_i$ for $i=1,2$ are given. As in part (1), we view $M_1$ and $M_2$ as dfas. Then every dfa morphism $g\colon M_1\to M_2$ makes the triangle below commute:
\[
\xymatrix{
& \Sigma^* \ar@{->>}[dl]_{h_1} \ar@{->>}[dr]^{h_2} & \\
M_1 \ar@{-->>}[rr]_g & &  M_2
}
\] 
In fact, $M_1$ and $M_2$ accept the same language $L$ and $\Sigma^*$ can be seen as the initial dfa accepting $L$ when equipped with $L\seq \Sigma^*$ as the set of final states. From the surjectivity of $h_1$ it easily follows that $g$ is a monoid morphism. Conversely, every monoid morphism $g$ making the above triangle commute and satisfying $g[P_1]=P_2$ is a dfa morphism.
\end{enumerate}
\end{rem}

\begin{rem}\label{rem:tsreachable}
For any $\JSL$-dfa $A$, the dfa-reachable part of $\ts{\reach{A}}$ is $\tm{A_r}$, where $A_r$ denotes the dfa-reachable part of $A$. In fact, letting $\reach{A}=(S,\delta,s_0,F)$ and $A_r=(S_r,\delta_r, s_{0,r}, F_r)$, we have that $A_r$ is a sub-dfa of $\reach{A}$. Then the map $(\delta_r)_w\mapsto \delta_{w}$ gives a well-defined injective dfa morphism from $\tm{A_r}$ to $\ts{\reach{A}}$, using that the semilattice $S$ is generated by the subset $S_r\seq S$. Thus, $\tm{A_r}$ is a sub-dfa of $\ts{\reach{A}}$. Since it is reachable, it it isomorphic to the dfa-reachable part of $\ts{\reach{A}}$.
\end{rem}
With these preparations, we are ready to prove \autoref{thm:subatomic_nfa_char}. Again, the argument crucially rests on the self-duality of $\JSL$-dfas. We show that each of the following statements is equivalent to the next one:
\begin{enumerate}
\item $N$ is subatomic.
\item There exists a $\JSL$-dfa morphism from $\Pow(N)$ to $\jslBLRQ{L}$.
\item There exists a $\JSL$-dfa morphism from $\rqc{\simple{P(N)}}$ to $\jslBLRQ{L}$.
\item There exists a $\JSL$-dfa morphism from $\Pow(\Syn{\rev{L}})$ to $\ts{\reach{\Pow(\rev{N})}}$.
\item There exists a dfa morphism from $\Syn{\rev{L}}$ to $\ts{\reach{\Pow(\rev{N})}}$.
\item There exists a dfa morphism from $\Syn{\rev{L}}$ to $\tm{\rsc{\rev{N}}}$.
\item The monoids $\Syn{\rev{L}}$ and $\tm{\rsc{\rev{N}}}$ are isomorphic.
\end{enumerate}
\medskip\noindent \emph{Ad (1)$\Lra$(2).} The unique automata morphism $m_{\Pow(N)}\colon \Pow(N)\to \Fin{L}$ maps every state of $\Pow(N)$ to the language it accepts. Thus, $N$ is subatomic iff $m_{\Pow(N)}$ factorizes through the subautomaton $\jslBLRQ{L}$ of $\Fin{L}$.

\medskip\noindent \emph{Ad (2)$\Lra$(3).} This is clear since $\jslBLRQ{L}$ is closed under right derivatives.

\medskip\noindent \emph{Ad (3)$\Lra$(4).} This follows via duality from \autoref{lem:nfarev}, \autoref{prop:blrqdual} and \autoref{prop:tsdual}.

\medskip\noindent \emph{Ad (4)$\Lra$(5).} This follows from \autoref{rem:freejsldfa}.

\medskip\noindent \emph{Ad (5)$\Lra$(6).} 
Putting $A=\Pow(\rev{N})$ in \autoref{rem:tsreachable}, we see that $\tm{\rsc{\rev{N}}}$ is the dfa-reachable part of $\ts{\reach{\Pow(\rev{N})}}$. Since $\Syn{\rev{L}}$ is reachable as a dfa, it follows that every dfa morphism into $\ts{\reach{\Pow(\rev{N})}}$ factorizes through $\tm{\rsc{\rev{N}}}$.

\medskip\noindent \emph{Ad (6)$\To$(7).} Let $q_{\rev{N}}\colon \Sigma^*\epito \tm{\rsc{\rev{N}}}$ denote the canonical monoid morphism mapping $w\in \Sigma^*$ to the transition morphism $\delta_w$ of the dfa $\rsc{\rev{N}}$. Note that the dfa structure of $\tm{\rsc{\rev{N}}}$ is precisely the one induced by $q_{\rev{N}}$. Thus, given a dfa morphism $h\colon \Syn{\rev{L}} \to \tm{\rsc{\rev{N}}}$ we know that the following diagram commutes by initiality, see \autoref{rem:algrec}(3):
\begin{equation}\label{eq:triangle}
\xymatrix@C-1em{
& \Sigma^* \ar@{->>}[dl]_{\mu_{\rev{L}}} \ar@{->>}[dr]^{q_{\rev{N}}} & \\
\Syn{\rev{L}} \ar@{-->>}[rr]_h & & \tm{\rsc{\rev{N}}} 
}
\end{equation}
Then $h$ is necessarily a monoid morphism because $\mu_{\rev{L}}$ is surjective. 
Since $q_{\rev{N}}$ recognizes the language $\rev{L}$, we get a unique monoid morphism $g\colon \tm{\rsc{\rev{N}}}  \to \Syn{\rev{L}}$ with $g\o q_{\rev{N}} = \mu_{\rev{L}}$. It follows that $h$ is an isomorphism with $h^{-1}=g$.

\medskip\noindent \emph{Ad (7)$\To$(6).} Suppose that the monoids $\Syn{\rev{L}}$ and $\tm{\rsc{\rev{N}}}$ are isomorphic. Let again $g\colon \tm{\rsc{\rev{N}}}  \to \Syn{\rev{L}}$ be the unique monoid morphism with $g\o q_{\rev{N}} = \mu_{\rev{L}}$. Then $g$ is surjective because $\mu_{\rev{L}}$ is. Since $\Syn{\rev{L}}$ and $\tm{\rsc{\rev{L}}}$ have the same number of elements, it follows that $g$ is also injective, i.e.~an isomorphism of monoids. Then \autoref{rem:algrec}(3) shows that its inverse $g^{-1}\colon \Syn{\rev{L}} \to \tm{\rsc{\rev{N}}}$ is a dfa morphism.

\section*{Proof of \autoref{thm:nmu_eq_subatomic}}
Let $a(L)$ denote the least number of states of any subatomic nfa accepting $L$. We are to prove $a(L)=\nsyn{L}$.
\begin{enumerate}
\item To prove $\nsyn{L}\leq a(L)$, suppose that $N$ is a subatomic nfa accepting the language $L$. Consider the subsemilattice $\jslLangs{N}=\simple{\Pow(N)}$ of $\Fin{L}$ of all languages accepted by subsets of $N$. We claim that
\[ \rho\colon \Syn{L}\to \JSL(\jslLangs{N},\jslLangs{N}),\quad [w]_L \mapsto \lambda K.w^{-1}K\]
is a boolean representation of $\Syn{L}$ extending the canonical one. This is obvious once we prove $\rho$ to be a well-defined map, i.e.
\[ v\equiv_L w \quad\text{implies}\quad v^{-1}K=w^{-1}K \]
for $v,w\in\Sigma^*$ and $K\in \jslLangs{N}$. Since $K\in \jslBLRQ{L}$, the boolean algebra generated by all two-sided derivatives of $L$, and derivatives commute with all set-theoretic boolean operations, we can assume w.l.o.g.\ that $K=s^{-1}Lt^{-1}$ for some $s,t\in \Sigma^*$. Then, for all $x\in \Sigma^*$,
\begin{align*}
x\in v^{-1}K &\iff~ x\in v^{-1}s^{-1}Lt^{-1} & \\
& \iff~ svxt\in L &\\
& \iff~ swxt \in L & \text{since $v\equiv_L w$} \\
& \iff~  x\in w^{-1}s^{-1}Lt^{-1} & \\
& \iff~ x\in w^{-1}K &
\end{align*}
proving that $v^{-1}L=w^{-1}L$, as required. Since the semilattice $\jslLangs{N}$ is generated by the set of languages accepted by single states of $N$, it follows that $\deg(\rho)$ is at most the number of states of $N$.
\item To prove $a(L)\leq \nsyn{L}$, let $\rho\colon \Syn{L}\to \JSL(S,S)$ be a boolean representation of $\Syn{L}$ extending the canonical one. Then $\rho\o \mu_L\colon \Sigma^*\to \JSL(S,S)$ extends the canonical presentation $\kappa_L\o \mu_L$ of $\Sigma^*$, and so like in proof of \autoref{thm:ns_char} we can equip $S$ with the structure of a $\JSL$-dfa $A=(S,\delta,i,f)$ accepting $L$. Its extended transition morphism for $w\in \Sigma^*$ is given by 
\[\delta_w\colon S\to S,\quad s\mapsto \rho([w]_L)(s). \]
In particular, $v\equiv_L w$ implies $\delta_v=\delta_w$, which shows that every state of $A$ accepts a union of syntactic congruence classes of $L$. Since
\[ [w]_L = \bigcap_{xwy\in L} x^{-1}Ly^{-1} \cap \bigcap_{xwy\not\in L} \ol{x^{-1}Ly^{-1}},  \]
it follows that all languages accepted by states of $A$ lie in $\jslBLRQ{L}$. Therefore, the nfa $N$ of join-irreducibles of $A$ (see \autoref{rem:jsldfa_vs_nfa}) is a subatomic nfa with $\deg(\rho)$ states accepting $L$.
\end{enumerate}

\section*{Proof of Theorem 5.2}
\begin{enumerate}
\item Suppose that $\Syn{L}=\tm{\minDfa{L}}$ is cyclic. Then there exists $w_0\in\Sigma^*$ such that the map $\lambda X. w_0^{-1} X\colon \LW{L} \to \LW{L}$ generates $\tm{\minDfa{L}}$. We claim that, for all $K,M\seq\Sigma^*$
\begin{equation}\label{eq:derequal} 
K^{-1}L = M^{-1}L \quad\text{iff}\quad [\forall n\in\Nat: w_0^n\in K^{-1}L \iff w_0^n\in M^{-1}L]. \end{equation}
The ``only if'' direction is trivial. For the converse, suppose that $K^{-1}L\neq M^{-1}L$. W.l.o.g. we may assume that there exists $w\in K^{-1}L\setminus M^{-1}L$. Choose $i_1,\ldots i_k$ and $j_1,\ldots, j_m$ such that $K^{-1}L = \bigcup_{p=1}^k (w_0^{i_p})^{-1}L$ and $M^{-1}L= \bigcup_{r=1}^m (w_0^{j_r})^{-1}L$. Moreover, choose $n\in \Nat$ such that $w^{-1}L=(w_0^n)^{-1}L$. Then we have $w\in (w_0^{i_p})^{-1}L$ for some $p$ and thus $w_0^{i_p}\in w^{-1}L=(w_0^n)^{-1}L$, using that $\tm{\minDfa{L}}$ is a commutative monoid. Thus, $w_0^n\in (w_0^{i_p})^{-1}L\seq K^{-1}L$. On the other hand, we have $w\not\in (w_0^{j_r})^{-1}L$ for all $r$, so the same argument shows that $(w_0)^n\not\in M^{-1}L$.

\item Fix an alphabet $\Sigma_0 = \{ a_0 \}$ disjoint from $\Sigma$ and consider the unary language
\[
  L_0 := \{\, a_0^n \;:\; n \in \Nat,\, w_0^n \in L \,\} \subseteq \Sigma_0^*.
\]
 Let $g: \Sigma_0^* \to \Sigma^*$ be the monoid morphism where $g(a_0) := w_0$. We claim that the following map is a $\JSL$-isomorphism:
\[f \colon\jslLQ{L_0} \xra{\cong} \jslLQ{L},\quad f(X^{-1} L_0) := g[X]^{-1} L.\]
To see that $f$ is well-defined and injective, we prove for all $X,Y\seq\Sigma_0^*$:
\[ X^{-1}L_0 = Y^{-1}L_0 \quad \text{iff}\quad g[X]^{-1}L = g[Y]^{-1}L. \]
In fact, we have
\begin{align*}
&  X^{-1}L_0 = Y^{-1}L_0  \\
\text{iff}\quad & \forall n\in\Nat: a_0^n\in X^{-1}L_0 \iff a_0^n\in Y^{-1}L_0  \\
\text{iff}\quad & \forall n\in \Nat: [\exists a_0^k\in X : a_0^{n+k}\in L_0] \iff [\exists a_0^m\in Y : a_0^{n+m}\in L_0] \\
\text{iff}\quad & \forall n\in \Nat: [\exists a_0^k\in X: w_0^{n+k}\in L] \iff [\exists a_0^m\in Y : w_0^{n+m}\in L] \\
\text{iff}\quad & \forall n\in \Nat: [\exists a_0^k\in X: w_0^{n}\in (g(a_0)^k)^{-1}L] \Lra [\exists a_0^m\in Y : w_0^{n}\in (g(a_0)^m)^{-1}L] \\
\text{iff}\quad & \forall n\in \Nat: w_0^{n}\in g[X]^{-1}L \iff w_0^{n}\in g[Y]^{-1}L \\
\text{iff}\quad & g[X]^{-1}L = g[Y]^{-1}L
\end{align*}
where the final step uses \eqref{eq:derequal}. This proves $f$ to be  well-defined and injective. Moreover, it immediately follows from the definition that $f$ is surjective and preserves finite unions.

\item  For each $a \in \Sigma$ choose $n_a\in \Nat$ such that $a^{-1} K = (w_0^{n_a})^{-1} K$ for all $K\in \LW{L}$. The respective transition endomorphisms of the $\JSL$-automata $\jslLQ{L_0}$ and $\jslLQ{L}$ determine each other in the sense that the following diagrams commute:
\[
\xymatrix@R-1em{
\jslLQ{L_0} \ar[r]^f_\cong \ar[d]_{a_0^{-1}(\dash)} & \jslLQ{L} \ar[d]^{ w_0^{-1}(\dash)  }  \\
\jslLQ{L_0} \ar[r]_f^\cong & \jslLQ{L}
}
\qquad
\xymatrix@R-1em{
\jslLQ{L_0} \ar[r]^f_\cong \ar[d]_{(a_0^{n_a})^{-1}(\dash)} & \jslLQ{L} \ar[d]^{ a^{-1}(\dash)  }  \\
\jslLQ{L_0} \ar[r]_f^\cong & \jslLQ{L}
}
\]
It follows that extensions of the canonical representations $\kappa_L$ and $\kappa_L\o \mu_L$ correspond uniquely to extensions of the canonical representations $\kappa_{L_0}$ and $\kappa_{L_0}\o \mu_{L_0}$, respectively. Therefore, $\ns{L} = \ns{L_0}$ by \autoref{thm:ns_char} and $\nsyn{L} = \nsyn{L_0}$ by \autoref{thm:nmu_eq_subatomic}. Moreover, from \autoref{ex:cyclic_unary} we know that $\ns{L_0}=\nsyn{L_0}$, and so $\ns{L}=\nsyn{L}$ as claimed.
\end{enumerate}
\section*{Details for \autoref{ex:biresid_topological}}
We prove that the map $\theta$  gives an nfa isomorphism from $N_L$ to $\rev{({N_{\rev{L}}})}$. 
Note first that if $\theta(u^{-1}L)=v^{-1}\rev{L}$, we have
   \[
     u^{-1} L \subseteq X \iff \rev{v} \in X \qquad\text{for $X\in \LW{L}$}.
    \]
In fact, 
\begin{align*}
u^{-1}L\seq X & \iff X\not\seq \tau(u^{-1}L) & \text{def. $\tau$} \\
& \iff X\not\seq \dr_L(v^{-1}\rev{L}) & \tau = \dr_L\o \theta \\
& \iff \rDR{L}(X,v^{-1}\rev{L}) & \text{by \autoref{thm:dep_thm}} \\
& \iff \rev{v}\in X & \text{def. $\rDR{L}$}
\end{align*}
With this preparation, we verify that $\theta$ satisfies the properties of an nfa morphism:

\medskip\noindent\emph{Preservation of initial and final states.}
Let $u^{-1}L\in J(\jslLQ{L})$ and $\theta(u^{-1}L)=v^{-1}\rev{L}$. Then
\[u^{-1} L \subseteq L \iff \rev{v} \in L \iff v \in \rev{L} \iff \epsilon \in v^{-1} \rev{L}.\]
A symmetric argument, exchanging the roles of $L$ and $\rev{L}$, shows that
\[  \epsilon\in u^{-1}L \iff v^{-1}\rev{L}\seq \rev{L}. \]
Thus, the state $u^{-1}L$ is initial/final in $N_L$ iff $v^{-1}\rev{L}$ is initial/final in $\rev{({N_{\rev{L}}})}$.

\medskip\noindent\emph{Preservation of transitions.}
Let $u^{-1}L, \ol{u}^{-1}L\in J(\jslLQ{L})$ and $\theta(u^{-1}L)=v^{-1}\rev{L}$, $\theta(\ol{u}^{-1}L)=\ol{v}^{-1}\rev{L}$. For each $a\in \Sigma$, we need to show that there is a transition $u^{-1}L \xra{a} \ol{u}^{-1}L$ in $N_L$ iff there is a transition $v^{-1}\rev{L}\xra{a}\ol{v}^{-1}\rev{L}$ in $\rev{({N_{\rev{L}}})}$. In fact:
    \[
      \begin{tabular}{lll}
        $\ol{u}^{-1} L \subseteq (u a)^{-1} L$
        & $\iff \rev{\ol{v}} \in (u a)^{-1} L$
        \\ & $\iff u a \rev{\ol{v}} \in L$
        \\ & $\iff \ol{v} a \rev{u} \in \rev{L}$
        \\ & $\iff {\rev{u}}\in (\ol{v} a)^{-1} \rev{L}$
        \\ & $\iff v^{-1} L^r \subseteq (\ol{v} a)^{-1} \rev{L}$.
      \end{tabular}  
    \]

\end{document}
